\documentclass[11pt,fleqn,a4paper]{article}

\usepackage{amsmath,amssymb,amsthm}
\usepackage[mathscr]{eucal}
\usepackage{graphicx}
\usepackage[labelsep=period]{caption}

\usepackage{hyperref}
\hypersetup{colorlinks, linkcolor=blue, citecolor=blue, urlcolor=blue}


\flushbottom

\setlength{\textwidth}{160.0mm}
\setlength{\textheight}{245.0mm}
\setlength{\oddsidemargin}{0mm}
\setlength{\evensidemargin}{0mm}
\setlength{\topmargin}{-15mm}
\setlength{\parindent}{5.0mm}

\newcommand{\p}{\partial}
\newcommand{\const}{{\rm const}}
\newcommand{\spanindex}{{\mbox{\tiny$\langle\,\rangle$}}}

\newtheorem{theorem}{Theorem}
\newtheorem{lemma}[theorem]{Lemma}
\newtheorem{corollary}[theorem]{Corollary}
\newtheorem{proposition}[theorem]{Proposition}
\newtheorem*{proposition*}{Proposition}
{\theoremstyle{definition}
\newtheorem{definition}[theorem]{Definition}

\newtheorem{remark}[theorem]{Remark}

}

\newcommand{\todo}[1][\null]{\ensuremath{\clubsuit}}

\newcommand{\noprint}[1]{}

\begin{document}
\par\noindent {\LARGE\bf
Realizations of Lie algebras on the line\\
and the new group classification of (1+1)-dimensional\\ generalized nonlinear Klein--Gordon equations
\par}

{\vspace{4mm}\par\noindent {\large Vyacheslav M.\ Boyko$^\dag$, Oleksandra V.\ Lokaziuk$^{\dag}$ and Roman O.\ Popovych$^{\dag\ddag}$
} \par\vspace{2mm}\par}

{\vspace{2mm}\par\noindent {\it
$^{\dag}$Institute of Mathematics of NAS of Ukraine, 3 Tereshchenkivska Str., 01024 Kyiv, Ukraine
}}
{\vspace{2mm}\par\noindent {\it
$^\ddag$Fakult\"at f\"ur Mathematik, Universit\"at Wien, Oskar-Morgenstern-Platz 1, 1090 Wien, Austria
}}

{\vspace{2mm}\par\noindent {\it
\textup{E-mail:} boyko@imath.kiev.ua, sasha.lokazuik@gmail.com, rop@imath.kiev.ua
}\par}

\vspace{8mm}\par\noindent\hspace*{10mm}\parbox{140mm}{\small
Essentially generalizing Lie's results, 
we prove that the contact equivalence groupoid 
of a class of (1+1)-dimensional generalized nonlinear Klein--Gordon equations 
is the first-order prolongation of its point equivalence groupoid, 
and then we carry out the complete group classification of this class.  
Since it is normalized, the algebraic method of group classification
is naturally applied here.
Using the specific structure of the equivalence group of the class,
we essentially employ
the classical Lie theorem on realizations of Lie algebras by vector fields on the line.
This approach allows us to enhance previous results on Lie symmetries of equations from the class
and substantially simplify the proof.
After finding a number of integer characteristics of cases of Lie-symmetry extensions
that are invariant under action of the equivalence group of the class under study,
we exhaustively describe successive Lie-symmetry extensions within this class.
}\par\vspace{4mm}

\noprint{

MSC: 35B06 (Primary) 35A30, 35L71 (Secondary)
35-XX   Partial differential equations
    35A30   Geometric theory, characteristics, transformations [See also 58J70, 58J72]
    35B06   Symmetries, invariants, etc.
  35Lxx	  Hyperbolic equations and systems [See also 58J45]
    35L71  	Semilinear second-order hyperbolic equations

Keywords: 
group classification of differential equations, 
nonlinear Klein-Gordon equations, 
algebraic method of group classification,
Lie symmetry, 
contact admissible transformations,
equivalence group, 
normalized class of differential equations,
equivalence groupoid, 
equivalence algebra,
}

\section{Introduction}\label{sec:Introduction}

Quasilinear second-order hyperbolic equations model many phenomena and processes in physics and mathematics, 
especially, various kinds of wave propagation, 
see \cite{dodd1982A,fox2011a,fush1993A,poly2012A} and references therein.
Such equations even with two independent variables are important 
in a number of areas, including differential geometry, quantum field theory, cosmology, hydro- and gas dynamics, 
superconductivity, crystal dislocation, waves in ferromagnetic materials, nonlinear optics, low temperature physics, to name a few. 
This is why these equations have been and are intensively studied in many branches of mathematics, in particular, 
within the framework of integrability theory and symmetry analysis of differential equations.  

In the present paper, we carry out the exhaustive group classification of the class
of (1+1)-dimensional generalized nonlinear Klein--Gordon equations of the form
\begin{gather}\label{eq:NonlinKGEqs}
u_{tx}=f(t,x,u) \quad\mbox{with}\quad f_{uu}\ne0
\end{gather}
in the light-cone (characteristic) coordinates, 
which we denote by~$\mathcal K$ and simultaneously refer as the class~\eqref{eq:NonlinKGEqs} and the class~$\mathcal K$.
Here $u=u(t,x)$ is the unknown function of the independent variables~$(t,x)$,
and subscripts of functions denote derivatives with respect to the corresponding variables,
e.g., $u_{tx}:=\p^2 u/\p t\,\p x$ and $f_{uu}:=\p^2 f/\p u^2$.
The arbitrary element~$f$ of the class~$\mathcal K$ runs through the set of smooth functions of $(t,x,u)$
that are not affine in~$u$.
The last constraint is imposed on~$f$ for excluding linear equations from the class~$\mathcal K$, 
which is natural in view of several arguments, see Remark~\ref{rem:NonlinKGEqsOnContactTrans}.
The problem of studying the class~$\mathcal K$ within the framework of group analysis of differential equations
was posed by Sophus Lie~\cite{lie1881b}. 

We discover four important properties of the class~$\mathcal K$, which allows us to obtain stronger results than the ordinary group classification of this class 
and to simplify all computations. We prove that, firstly, the class~$\mathcal K$ is normalized with respect to its point equivalence group~$G^\sim$ 
and, secondly, the first prolongations of its point admissible transformations 
exhaust its contact admissible transformations. 
In view of these two properties, 
the classification of Lie symmetries of equations from the class~$\mathcal K$ up to the $G^\sim$-equivalence, 
roughly speaking, coincides with the similar classification up to the general point equivalence 
and with the classifications of continuous contact symmetries of these equations 
modulo the equivalences generated by the contact equivalence group and groupoid of~$\mathcal K$. 
Moreover, this classification problem can be effectively solved by the algebraic method. 
Thirdly, the dimensions of the maximal Lie invariance algebras of equations from the class~$\mathcal K$ 
are not greater than four, except the equations that are $G^\sim$-equivalent to the Liouville equation 
and whose maximal Lie invariance algebras are infinite-dimensional. 
The fourth property is that each point transformation between any two equations from the class~$\mathcal K$ 
is projectable both on the space with coordinate~$t$ and on the space with coordinate~$x$.
As a result, all the point-transformation structures associated with the class~$\mathcal K$, 
including its equivalence group, its equivalence algebra 
and the maximal Lie invariance algebras of equations from this class, are also projectable on the same two spaces. 
This twofold projectability leads to the possibility of extended application of  
the following classical Lie theorem on realizations of finite-dimensional Lie algebras by vector fields on the line \cite[Satz~6, p.~455]{lie1880a}
in the course of the group classification, 
see also \cite[Theorem~2.70]{olve1995A}, \cite{popo2003a} and references therein. 

\begin{theorem}[Lie theorem]
Inequivalent (up to the local diffeomorphisms of the line)
realizations of finite-dimensional Lie algebras by vector fields on the $t$-line are exhausted
by the algebras
\[
\{0\},\quad \langle\p_t\rangle,\quad \langle\p_t,\,t\p_t\rangle,\quad\langle\p_t,\,t\p_t,\,t^2\p_t\rangle.
\]
\end{theorem}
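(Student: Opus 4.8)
The plan is to classify finite-dimensional Lie algebras of vector fields on the line by first determining the possible dimensions and then pinning down the realizations in each admissible dimension via local coordinate normalizations. Every vector field on the $t$-line has the form $Q=\xi(t)\p_t$, so a realization is a finite-dimensional subspace $\mathfrak{g}$ of such fields closed under the Lie bracket $[\xi\p_t,\eta\p_t]=(\xi\eta'-\eta\xi')\p_t$. The zero algebra is the trivial case, so I would assume $\mathfrak{g}\ne\{0\}$ and work locally near a point $t_0$ where some field in $\mathfrak{g}$ is nonvanishing (if every field in $\mathfrak{g}$ vanishes at every common point, one restricts attention to the complement of the common zero set).

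First I would handle the one-dimensional step: if $\dim\mathfrak{g}\ge1$, pick $Q=\xi\p_t\in\mathfrak{g}$ with $\xi(t_0)\ne0$; by the local flow-box theorem (straightening a nonvanishing vector field) there is a local diffeomorphism of the line taking $Q$ to $\p_t$. This justifies the generator $\langle\p_t\rangle$ and sets up the inductive normalization. Next, suppose $\dim\mathfrak{g}\ge2$. Working in the coordinate where $Q=\p_t\in\mathfrak{g}$, take a second independent field $\eta\p_t$; the bracket relation $[\p_t,\eta\p_t]=\eta'\p_t$ forces $\eta'\p_t\in\mathfrak{g}$, and iterating shows all derivatives $\eta^{(k)}\p_t$ lie in $\mathfrak{g}$. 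Finite-dimensionality then forces $\eta$ to satisfy a linear constant-coefficient ODE, so the coefficient functions appearing in $\mathfrak{g}$ span a finite-dimensional space of solutions of such an ODE, i.e.\ a space spanned by exponential-polynomial functions $t^m e^{\lambda t}$. Reality and closure under bracket will then cut this down: after the normalization $\p_t\in\mathfrak{g}$, a careful bracket computation shows the admissible second coefficient can be reduced, by an affine change $t\mapsto at+b$ together with a possible rescaling, to $\eta=t$, giving $\langle\p_t,\,t\p_t\rangle$.

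The crucial and most delicate step is the dimension bound $\dim\mathfrak{g}\le3$. In the coordinate where $\p_t,\,t\p_t\in\mathfrak{g}$, I would take any further field $\eta\p_t$ and use the two brackets $[\p_t,\eta\p_t]=\eta'\p_t$ and $[t\p_t,\eta\p_t]=(t\eta'-\eta)\p_t$ to generate new elements; tracking these shows the coefficient $\eta$ must be a polynomial in $t$, and the bracket $[t\p_t,\cdot]$ acts on monomials $t^m\p_t$ by the eigenvalue $m-1$, so distinct powers are forced into $\mathfrak{g}$ independently. The key identity is $[t^2\p_t,t^m\p_t]=(m-2)t^{m+1}\p_t$, which means that once $t^2\p_t\in\mathfrak{g}$, no strictly higher power can enter without generating an infinite ascending chain $t^3\p_t,t^4\p_t,\dots$, contradicting finite dimension; symmetrically the $\mathfrak{sl}_2$ relations among $\p_t,\,t\p_t,\,t^2\p_t$ close up exactly. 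This gives $\langle\p_t,\,t\p_t,\,t^2\p_t\rangle$ as the unique three-dimensional realization and simultaneously rules out dimension four and higher.

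I expect the main obstacle to be the rigorous passage from the bracket-generated stability conditions to the claim that the coefficient space consists only of $\{1\}$, $\{1,t\}$, or $\{1,t,t^2\}$ in the normalized coordinate, without inadvertently overlooking realizations that look different before normalization (for instance exponential or trigonometric coefficients that appear in the intermediate two-dimensional analysis). The resolution is that the normalization $\p_t\in\mathfrak{g}$ already converts the constant-coefficient ODE satisfied by the remaining coefficients into one whose solution space, to remain bracket-closed and finite-dimensional with $\p_t$ present, must be translation-invariant and closed under the derivation $\eta\mapsto\eta'$ as well as under $\eta\mapsto t\eta'-\eta$ once $t\p_t$ is adjoined; these two commuting-derivation constraints together force the polynomial form. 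I would therefore organize the argument as a clean induction on $\dim\mathfrak{g}$, at each stage straightening the algebra into the standard coordinate and reading off the next admissible generator from the bracket table, with the degree-raising identity $[t^2\p_t,t^m\p_t]=(m-2)t^{m+1}\p_t$ furnishing the termination that caps the dimension at three.
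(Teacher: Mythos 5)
You should first note that the paper does not prove this statement at all: it is quoted as a classical result, with the proof delegated to Lie's original paper [Satz~6, p.~455 of Lie 1880] and to Olver's book (Theorem~2.70). So the only meaningful comparison is with that standard argument, and your outline follows it: straighten one field to $\p_t$, use $[\p_t,\eta\p_t]=\eta'\p_t$ to show the coefficient space is closed under differentiation and hence consists of exponential polynomials, and use the identity $[t^a\p_t,t^b\p_t]=(b-a)t^{a+b-1}\p_t$ to cap the degree at two. The skeleton and the final degree bound are correct.

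There is, however, a genuine gap in the normalization step, and it is exactly at the point you flag as "the main obstacle" without actually resolving it. Your claim that the second generator can be brought to $t\p_t$ by an affine change $t\mapsto at+b$ plus rescaling is false, and your proposed resolution --- that bracket closure and finite dimensionality with $\p_t$ present "force the polynomial form" --- is also false. Concretely, $\langle\p_t,\,{\rm e}^{\lambda t}\p_t\rangle$ with $\lambda\ne0$ and $\langle\p_t,\,\cos(bt)\p_t,\,\sin(bt)\p_t\rangle$ with $b\ne0$ are finite-dimensional, bracket-closed realizations containing $\p_t$ whose coefficient spaces contain no nonconstant polynomial; no affine change touches them, and your dimension-three argument (which starts "in the coordinate where $\p_t,t\p_t\in\mathfrak g$") never applies to the second example because $t\p_t$ simply is not there. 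These realizations do reduce to the normal forms $\langle\p_t,t\p_t\rangle$ and $\langle\p_t,t\p_t,t^2\p_t\rangle$, but only via genuinely nonlinear changes of variable (e.g.\ $u=-\lambda^{-1}{\rm e}^{-\lambda t}$ in the first case, $u=\tan(bt/2)$ in the second), under which the originally straightened generator $\p_t$ is \emph{not} preserved. A complete proof must therefore either exhibit these coordinate changes explicitly for every exponential-polynomial coefficient space that survives the bracket-closure analysis, or argue abstractly that locally transitive realizations of the same abstract algebra of the relevant isomorphism type are equivalent. (A small additional slip: the derivations $\eta\mapsto\eta'$ and $\eta\mapsto t\eta'-\eta$ do not commute --- their commutator is $\eta\mapsto\eta'$ --- so the phrase "commuting-derivation constraints" should be dropped.) Once the two-dimensional normalization is done honestly, your polynomiality argument and the degree-raising identity do finish the proof as you describe.
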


Previously, the Lie theorem was applied to group classification 
of classes of evolution and Schr\"odinger equations \cite{bihlo2017a,kuru2018a,kuru2020a,opan2017a}
and the class of linear ordinary differential equations of an arbitrary fixed order $r\geqslant2$ \cite[Section~3]{boyk2015a}, 
where there is the similar projectability to the single space coordinatized by the time variable or the independent variable, respectively.

The most important and well-studied subclass~$\mathcal K_9$ of the class~$\mathcal K$ 
is singled out from~$\mathcal K$ by the additional auxiliary equations $f_t=f_x=0$, 
i.e., this subclass consists of the nonlinear Klein--Gordon equations, 
which are of the form $u_{tx}=f(u)$ with $f_{uu}\ne0$; 
cf.\ Remark~\ref{rem:NonlinKGEqsOnMaxLieSymExtensions} for justifying the notation.
The subclass~$\mathcal K_9$ contains a number of famous equations, 
which we present in the canonical forms, 
where the constant parameters are removed by equivalence transformations:
\begin{itemize}\itemsep=-.5ex
\item the Liouville equation $u_{tx}={\rm e}^u$,
\item the Tzitzeica equation $u_{tx}={\rm e}^u\pm{\rm e}^{-2u}$ called also the Dodd--Bullough--Mikhailov equation, 
\item the sine-Gordon (or Bonnet) equation $u_{tx}=\sin u$,  
\item the sinh-Gordon equation $u_{tx}=\sinh u$, 
\item the double sine-Gordon equation $u_{tx}=\sin u+C\sin 2u$ with $C\ne0$, 
\end{itemize}
see Section~7.5.1 in~\cite{poly2012A} and references therein. 
Contact symmetry transformations of equations from the subclass~$\mathcal K_9$ 
were described by Sophus Lie himself~\cite{lie1881b}. 
In particular, any such transformation was proved to be the first prolongation of a point transformation. 
Then Lie singled out the Liouville equations, where $f(u)=C{\rm e}^{\kappa u}$ with nonzero constants~$\kappa$ and~$C$, 
as the only equations in~$\mathcal K_9$ 
admitting infinite-dimensional point symmetry groups. 
The other equations were shown to possess only point symmetry transformations, 
where each of the $t$-, $x$- and $u$-components depends only on the respective variable, and this dependence is affine.  
In the introduction of~\cite{lie1881b}, Lie remarked that results obtained therein can be extended to the class~$\mathcal K$.
In fact, the present paper essentially generalizes several extensions of Lie's results to the class~$\mathcal K$. 

Specific equations from the subclass~$\mathcal K_9$ and the subclass~$\mathcal K_9$ itself 
were intensively studied within the framework of symmetry analysis of differential equations. 
In particular, generalized symmetries of equations from~$\mathcal K_9$ 
with characteristics not depending on $(t,x)$ were classified over the complex field in~\cite{zhib1979a}. 
The Liouville equation, the sine-Gordon equation and the Tzitzeica equation 
were singled out in the subclass~$\mathcal K_9$
as the only equations with infinite-dimensional algebras of such symmetries, 
see also \cite[Section~21.2]{ibra1985A}. 
The same equations were also singled out 
in~\cite{dodd1977a} as the only equations in~$\mathcal K_9$ 
admitting infinite-dimensional spaces of conservation laws with so-called ``polynomial densities''
and in~\cite{clar1986a,mcle1983a} as the only ones with the Painlev\'e property among the equations $u_{tx}=f(u)$, 
where the function~$f$ is a linear combination of exponential functions ${\rm e}^{j\alpha u}$ 
for some fixed nonzero complex constant~$\alpha$ and $j\in\mathbb Z$. 
Equations from the subclass~$\mathcal K_9$ admitting 
nonlinear separation of variables in the standard spacetime coordinates 
to two first-order ordinary differential equations were classified in~\cite{zhda1994a}; 
see also references therein and~\cite{grun1992a} for other kinds of nonlinear separation of variables for these equations. 
The classification of local conservation laws of equations from the subclass~$\mathcal K_9$ 
over the complex field was begun in~\cite{fox2011a}. 
Singular reduction operators~\cite{boyk2016a,kunz2008a}, i.e., singular nonclassical (or conditional, or \mbox{$Q$-conditional}) symmetries, 
of all the equations of the form $u_{tx}=f(u)$ were exhaustively studied in~\cite{kunz2008a}. 
At the same time, there are still no complete classifications of generalized symmetries, local conservation laws 
and regular reduction operators of equations from the subclass~$\mathcal K_9$ 
as well as no exhaustive classifications of such equations admitting nonlinear separation of variables or the Painlev\'e property, 
not to mention the entire class~$\mathcal K$.
An exception is the general description of regular reduction operators for equations from the class~$\mathcal K$
that was given in~\cite{yeho2010a}. 
\looseness=-1

The framework of the algebraic method of group classification originated in 
Lie's classification of second-order ordinary differential equations~\cite{lie1891A} 
but it became a common tool of group analysis of differential equations considerably later, only since the 1990s, 
although its applications to problems of complete group classification still involved the normalization property implicitly 
\cite{basa2001a,gagn1993a,gaze1992a,gung2004a,lahn2005a,lahn2006a,maga1993a,zhda1999d}.
When straightforwardly applied to non-normalized classes of differential equations, 
the algebraic method results in the so-called preliminary group classification of such classes \cite{akha1991a,bihl2012b,card2011a,ibra1991a}.
The algebraic method is usually used to solve group classification problems for classes of differential equations 
with arbitrary elements depending on several arguments, 
for which the direct method of group classification, 
including the method of furcate splitting \cite{bihl2020a,niki2001a,opan2020b} as its most advanced version, is unproductive.
To carry out the group classification of the class~$\mathcal K$, 
we use the advanced version of the algebraic method, which is based on 
the normalization of the class of differential equations to be classified~\cite{popo2006b,popo2005c,popo2010a}
and involves the classification of appropriate subalgebras~\cite{bihl2012b,card2011a} of the corresponding equivalence algebra. 
This version of the algebraic method was suggested in~\cite{popo2005c,popo2010a} 
and was effectively applied to solving group classification problems for various classes of differential equations 
\cite{bihl2012b,bihlo2017a,boyk2015a,kuru2018a,kuru2020a,opan2017a,popo2012a,popo2005c,popo2010a,vane2020b}. 

The papers \cite{bihl2012b,lahn2005a,lahn2006a,vane2020b} are especially relevant 
in the context of the present paper since they are devoted to group classification 
of classes of quasilinear hyperbolic second-order equations with two independent variables using the algebraic method.
See also references therein for group classifications of other classes of such equations. 
In particular, the group classification problem for the superclass~$\breve{\mathcal K}$ of~$\mathcal K$ 
that is constituted by the equations of the form $u_{\check t\check t}-u_{\check x\check x}=f(\check t,\check x,u,u_{\check x})$ 
in the standard spacetime coordinates $(\check t,\check x)=(x+t,x-t)$
was studied in the seminal papers~\cite{lahn2005a,lahn2006a}. 
The superclass~$\breve{\mathcal K}$ was partitioned into four subclasses, 
which are in fact normalized and are not related by point transformations to each other, 
and~$\mathcal K$ is one of these subclasses (under using the light-cone coordinates).
As a result, the group classification problem for the entire superclass~$\breve{\mathcal K}$ 
was split into four group classification problems for the subclasses 
that each was separately studied within the framework of the algebraic method of group classification. 
Unfortunately, the consideration of the class~$\mathcal K$ had several drawbacks 
(see the second paragraph of Section~\ref{sec:Conclusion} below), 
and hence Lie symmetries of equations from this class have needed a more accurate and comprehensive classification, 
which is done in the present paper. 
The group classification problems for the non-normalized class of quasilinear hyperbolic and elliptic equations 
of the form $u_{\check t\check t}-h(\check x,u,u_{\check x})u_{\check x\check x}=f(\check x,u,u_{\check x})$
up to the equivalences generated by the corresponding equivalence group and groupoid, respectively, 
were exhaustively solved in~\cite{vane2020b} 
using an original version of the algebraic method of group classification 
for non-normalized classes of differential equations. 
When written in the light-cone coordinates, this class nontrivially intersects the class~$\mathcal K$. 
 
The further organization of the present paper is as follows.
In Section~\ref{sec:NonlinKGEqsPreliminaryAnalysis},
we prove that any contact-transformation structure associated with equations from the class~\eqref{eq:NonlinKGEqs} 
is the first-order prolongation of its point-transformation counterpart, 
thus justifying the restriction of the further consideration to point-transformation structures. 
Then we show the normalization of the class~\eqref{eq:NonlinKGEqs} (with respect to point transformations)
and construct its (point) equivalence group~$G^\sim$ and its (point) equivalence algebra~$\mathfrak g^\sim$.
Therein, we also single out the equations in the class~\eqref{eq:NonlinKGEqs}
with infinite-dimensional Lie invariance algebras 
and derive properties of finite-dimensional appropriate subalgebras of the projection~$\varpi_*\mathfrak g^\sim$ of~$\mathfrak g^\sim$ onto
the space with coordinates~$(t,x,u)$, which are important in the course of the group classification of this class.
The main result of the group classification,
which is a complete list of $G^\sim$-inequivalent Lie-symmetry extensions within the class~\eqref{eq:NonlinKGEqs},
is given by Theorem~\ref{thm:NonlinKGEqsGroupClassification} in Section~\ref{sec:NonlinKGEqsResults}.
The structure of the partially ordered set of such extensions is represented as a Hasse diagram in Figure~\ref{fig:HasseDiagram}. 
We also discuss relations between Lie-symmetry extensions via limit processes. 
Section~\ref{sec:NonlinKGEqsProof} is devoted to the proof of Theorem~\ref{thm:NonlinKGEqsGroupClassification}
and its analysis.
We find a number of $G^\sim$-invariant integer characteristics of subalgebras of~$\mathfrak g^\sim$ or, equivalently,
of~$\varpi_*\mathfrak g^\sim$, which allow us to completely identify $G^\sim$-inequivalent cases of Lie-symmetry extensions
within the class~\eqref{eq:NonlinKGEqs}.
In Section~\ref{sec:SuccessiveLie-SymExtensions},
we use these characteristics to distinguish, modulo the $G^\sim$-equivalence,
successive Lie-symmetry extensions among the found ones,
thus exhaustively describing the structure of partially ordered set of $G^\sim$-inequivalent Lie-symmetry extensions within the class~\eqref{eq:NonlinKGEqs}.
Possible ways for the group classifications of subclasses of the class~\eqref{eq:NonlinKGEqs}
are analyzed in Section~\ref{sec:OnGroupClassificationOfSubclasses}.
As an example, we use results of~\cite{vane2020b}
to carry out the group classification of the important subclass~$\mathcal K_2$ associated with the constraint $f_x+f_t=0$
up to the equivalence generated by the equivalence group~$G^\sim_2$ of this subclass.
In Section~\ref{sec:Conclusion}, we discuss the obtained results and overview related problems for the further study.

\section{Preliminary analysis}
\label{sec:NonlinKGEqsPreliminaryAnalysis}

Consider the superclass~$\mathcal K_{\rm gen}$ of all the equation of the general form $u_{tx}=f(t,x,u)$, 
$\mathcal K\subset\mathcal K_{\rm gen}$.
For a fixed value of the arbitrary element~$f$,
let $K_f$ denote the equation from the class~$\mathcal K_{\rm gen}$ with this value of~$f$.
We begin with the study of contact admissible transformations within the subclass~$\bar{\mathcal K}$ of~$\mathcal K_{\rm gen}$
singled out by the constraint $f_u\ne0$, 
i.e., we attach to~$\mathcal K$ the linear equations of the form $u_{tx}=f(t,x,u)$ with $f_{uu}=0$ and $f_u\ne0$. 
We essentially generalize Lie's consideration in~\cite{lie1881b}.

\begin{lemma}\label{lem:GenNonlinKGEqsContactAdmTrans}
Any contact admissible transformation within the class~$\bar{\mathcal K}$
is the first-order prolongation of a point admissible transformation within this class.
\end{lemma}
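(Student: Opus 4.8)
The plan is to work in the first-order jet space with coordinates $(t,x,u,u_t,u_x)$ and to exploit the defining feature of the class, namely that the equation $u_{tx}=f(t,x,u)$ involves only the mixed second derivative, so that on its solution manifold the two pure second derivatives $u_{tt}$ and $u_{xx}$ remain unconstrained. A contact admissible transformation within $\bar{\mathcal K}$ is a contact transformation
\[
\tilde t=T,\quad \tilde x=X,\quad \tilde u=U,\quad \tilde u_{\tilde t}=U^t,\quad \tilde u_{\tilde x}=U^x,
\]
with $T,X,U,U^t,U^x$ smooth functions of $(t,x,u,u_t,u_x)$, whose prolongation maps $K_f$ to $K_{\tilde f}$, both arbitrary elements satisfying $f_u\ne0$ and $\tilde f_u\ne0$. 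First I would record the contact conditions, i.e. the requirement that $\mathrm dU-U^t\,\mathrm dT-U^x\,\mathrm dX$ be proportional to $\mathrm du-u_t\,\mathrm dt-u_x\,\mathrm dx$; matching the coefficients of $\mathrm du_t$ and $\mathrm du_x$ yields
\[
U_{u_t}=U^tT_{u_t}+U^xX_{u_t},\qquad U_{u_x}=U^tT_{u_x}+U^xX_{u_x},
\]
so that it suffices to prove $T_{u_t}=T_{u_x}=X_{u_t}=X_{u_x}=0$ in order to conclude that $T$, $X$ and $U$ are functions of $(t,x,u)$ alone.

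The core step is to prolong the transformation to second order and impose admissibility. Writing the total derivatives in the two coordinate systems as related by the Jacobian matrix of $(\tilde t,\tilde x)=(T,X)$ with respect to $(t,x)$, I would express the transformed mixed derivative as
\[
\tilde u_{\tilde t\tilde x}=\frac{(D_xX)(D_tU^x)-(D_tX)(D_xU^x)}{(D_tT)(D_xX)-(D_tX)(D_xT)},
\]
a rational function on the second-order jet space whose denominator is nonzero for a genuine transformation. Admissibility means that after the substitution $u_{tx}=f(t,x,u)$ this expression must coincide with $\tilde f(T,X,U)$, which depends on the jet variables only through $(t,x,u,u_t,u_x)$ and in particular is free of $u_{tt}$ and $u_{xx}$. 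I would therefore substitute $u_{tx}=f$, clear the denominator, and collect the coefficients of the free variables $u_{tt}$ and $u_{xx}$.

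Setting those coefficients to zero is where the constraint defining $\bar{\mathcal K}$ enters. The second derivatives $u_{tt}$ and $u_{xx}$ propagate into the numerator and denominator precisely through $T_{u_t},T_{u_x},X_{u_t},X_{u_x},U^x_{u_t},U^x_{u_x}$, and the vanishing of their coefficients forces $T_{u_t}=T_{u_x}=X_{u_t}=X_{u_x}=0$ together with the analogous vanishing for the $U$-components. The hypothesis $f_u\ne0$ is what makes this extraction conclusive: it is exactly the condition excluding the $u$-independent equations $u_{tx}=g(t,x)$, for which genuine non-point contact transformations do exist, and it guarantees that the $u_{tt}$- and $u_{xx}$-coefficients cannot be reabsorbed through the $u$-dependence of $f$. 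Once $T_{u_t}=T_{u_x}=X_{u_t}=X_{u_x}=0$, the contact relations above give $U_{u_t}=U_{u_x}=0$, so $(T,X,U)$ depend only on $(t,x,u)$; the transformation is then the first-order prolongation of the point transformation $\varphi\colon(t,x,u)\mapsto(T,X,U)$, and since this prolongation maps $K_f$ to $K_{\tilde f}$, the map $\varphi$ is itself a point admissible transformation within $\bar{\mathcal K}$.

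I expect the main obstacle to be the organization of the second-order computation rather than any single conceptual point: correctly inverting the $(t,x)$-to-$(\tilde t,\tilde x)$ Jacobian, tracking which terms carry $u_{tt}$ and $u_{xx}$ after the substitution $u_{tx}=f$, and disposing of the non-degeneracy case distinctions (ensuring the Jacobian determinant and the relevant subdeterminants do not vanish, and treating the branches these sub-cases generate). The role of $f_u\ne0$ must be placed carefully so that every branch of this case analysis terminates in the fibered conclusion.
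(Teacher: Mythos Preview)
Your overall plan matches the paper's direct method: write the contact conditions, prolong to second order, substitute $u_{tx}=f$, and exploit that $u_{tt}$ and $u_{xx}$ remain free on solutions. However, the core paragraph contains a real gap. You assert that ``the vanishing of their coefficients forces $T_{u_t}=T_{u_x}=X_{u_t}=X_{u_x}=0$''. This is not what the splitting produces. Collecting the coefficients of $u_{tt}u_{xx}$, $u_{tt}$ and $u_{xx}$ in the cleared-denominator identity yields determinant relations of the form
\[
\left|\begin{smallmatrix}T_{u_t}&T_{u_x}\\U^t_{u_t}&U^t_{u_x}\end{smallmatrix}\right|
=(\Phi^*\tilde f)\left|\begin{smallmatrix}T_{u_t}&T_{u_x}\\X_{u_t}&X_{u_x}\end{smallmatrix}\right|,
\qquad
\left|\begin{smallmatrix}T_{u_t}&\hat D_xT\\U^t_{u_t}&\hat D_xU^t\end{smallmatrix}\right|
=(\Phi^*\tilde f)\left|\begin{smallmatrix}T_{u_t}&\hat D_xT\\X_{u_t}&\hat D_xX\end{smallmatrix}\right|,
\]
and their companions; these still involve $\Phi^*\tilde f$ and do not by themselves force any single entry to vanish. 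Your sentence ``the coefficients cannot be reabsorbed through the $u$-dependence of~$f$'' does not capture the mechanism that actually closes the argument.

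What the paper does at this point is argue by contradiction: assume, say, $T_{u_t}\ne0$, set $\Lambda:=(U^t_{u_t}-(\Phi^*\tilde f)X_{u_t})/T_{u_t}$, and combine the determinant relations into $\mathrm D_xU^t=\Lambda\,\mathrm D_xT+(\Phi^*\tilde f)\,\mathrm D_xX$ on the full jet space. Subtracting this from the prolongation identity $\mathrm D_xU^t=(\mathop{\rm pr}_{(2)}\Phi)^*(\tilde u_{\tilde t\tilde t})\,\mathrm D_xT+(\mathop{\rm pr}_{(2)}\Phi)^*(\tilde u_{\tilde t\tilde x})\,\mathrm D_xX$ and restricting to $K_f$, one uses that $\tilde u_{\tilde t\tilde t}$ is \emph{unconstrained on the target side} to obtain $\hat D_xT+fT_{u_t}+T_{u_x}u_{xx}=0$. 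Splitting in $u_{xx}$ gives $T_{u_x}=0$ and $\hat D_xT+fT_{u_t}=0$; splitting the latter in $u_x$ gives $T_u=0$ and $T_x+fT_{u_t}=0$. Only now does the hypothesis $f_u\ne0$ enter, and very concretely: since $T_u=T_{u_x}=0$ but $f$ genuinely depends on~$u$, the equation $T_x+fT_{u_t}=0$ splits into $T_x=0$ and $T_{u_t}=0$, contradicting $T_{u_t}\ne0$. You should replace your vague appeal to $f_u\ne0$ with this specific step; without it the case analysis you allude to in the last paragraph cannot be completed.
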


\begin{proof}
We give a simple proof by the direct method. 
We fix a contact admissible transformation $\mathcal T=(f,\Phi,\tilde f)$ of the class~$\bar{\mathcal K}$.
Here $\Phi$ is a contact transformation with the independent variables $(t,x)$ and the dependent variable~$u$,
$\Phi$: $(\tilde t,\tilde x,\tilde u,\tilde u_{\tilde t},\tilde u_{\tilde x})=(T,X,U,U^t,U^x)$,
that maps the equation $K_f$: $u_{tx}=f(t,x,u)$
to the equation \smash{$K_{\tilde f}$}: $\tilde u_{\tilde t\tilde x}=\tilde f(\tilde t,\tilde x,\tilde u)$.
The functions $T$, $X$, $U$, $U^t$ and~$U^x$ defining the components of the transformation~$\Phi$ 
are smooth functions of~$(t,x,u,u_t,u_x)$ with $\big|\p(T,X,U,U^t,U^x)/\p(t,x,u,u_t,u_x)\big|\ne0$
that satisfy the contact condition
\begin{gather}\label{eq:GenNonlinKGEqsContactCondition}
\begin{split}
&U^t\mathrm D_tT+U^x\mathrm D_tX=\mathrm D_tU,\\
&U^t\mathrm D_xT+U^x\mathrm D_xX=\mathrm D_xU,
\end{split}
\end{gather}
where
$\mathrm D_t=\p_t+u_t\p_u+u_{tt}\p_{u_t}+u_{tx}\p_{u_x}+\cdots$ and
$\mathrm D_x=\p_x+u_x\p_u+u_{tx}\p_{u_t}+u_{xx}\p_{u_x}+\cdots$ 
are the operators of total derivatives with respect to~$t$ and~$x$, respectively.
Collecting coefficients of the second derivatives of~$u$ in the contact condition~\eqref{eq:GenNonlinKGEqsContactCondition}
leads to the system
\begin{gather}\label{eq:GenNonlinKGEqsSplitContactCondition}
\begin{split}
&U^tT_{u_t}+U^xX_{u_t}=U_{u_t},\quad U^t\hat{\mathrm D}_tT+U^x\hat{\mathrm D}_tX=\hat{\mathrm D}_tU,\\
&U^tT_{u_x}+U^xX_{u_x}=U_{u_x},\quad U^t\hat{\mathrm D}_xT+U^x\hat{\mathrm D}_xX=\hat{\mathrm D}_xU,
\end{split}
\end{gather}
where $\hat{\mathrm D}_t=\p_t+u_t\p_u$ and $\hat{\mathrm D}_x=\p_x+u_x\p_u$ are the truncated operators of total derivatives with respect to~$t$ and~$x$.
The condition that the transformation~$\Phi$ maps the equation $K_f$ to the equation $K_{\tilde f}$
is expanded via substituting the expression for~$\tilde u_{\tilde t\tilde x}$
in terms of the variables without tildes,
\begin{gather}\label{eq:GenNonlinKGEqsConditionForAdmTrans}\arraycolsep=.5ex
\left|\begin{array}{cc}\mathrm D_tU^x&\mathrm D_xU^x\\\mathrm D_tX  &\mathrm D_xX  \end{array}\right|=
\left|\begin{array}{cc}\mathrm D_tT  &\mathrm D_xT  \\\mathrm D_tU^t&\mathrm D_xU^t\end{array}\right|=(\Phi^*\tilde f)
\left|\begin{array}{cc}\mathrm D_tT  &\mathrm D_xT  \\\mathrm D_tX  &\mathrm D_xX  \end{array}\right|
\quad\mbox{on solutions of}\ K_f.
\end{gather}
Here $\Phi^*$ denotes the pullback by~$\Phi$, $\Phi^*\tilde f:=\tilde f(T,X,U)$.
The first equality in~\eqref{eq:GenNonlinKGEqsConditionForAdmTrans}
is a differential consequence of the system~\eqref{eq:GenNonlinKGEqsContactCondition}.
Splitting of the equation~\eqref{eq:GenNonlinKGEqsConditionForAdmTrans} with respect to~$u_{tt}$ and~$u_{xx}$
gives, in particular, the equations
\begin{subequations}\arraycolsep=.5ex
\begin{gather}\label{eq:GenNonlinKGEqsConditionForAdmTransA}
\left|\begin{array}{cc} U^x_{u_t}& U^x_{u_x}\\ X  _{u_t}& X  _{u_x}\end{array}\right|=
\left|\begin{array}{cc} T  _{u_t}& T  _{u_x}\\ U^t_{u_t}& U^t_{u_x}\end{array}\right|=(\Phi^*\tilde f)
\left|\begin{array}{cc} T  _{u_t}& T  _{u_x}\\ X  _{u_t}& X  _{u_x}\end{array}\right|,
\\[1ex]\label{eq:GenNonlinKGEqsConditionForAdmTransB}
\left|\begin{array}{cc} U^x_{u_t}&\hat{\mathrm D}_x U^x\\ X  _{u_t}&\hat{\mathrm D}_x X  \end{array}\right|=
\left|\begin{array}{cc} T  _{u_t}&\hat{\mathrm D}_x T  \\ U^t_{u_t}&\hat{\mathrm D}_x U^t\end{array}\right|=(\Phi^*\tilde f)
\left|\begin{array}{cc} T  _{u_t}&\hat{\mathrm D}_x T  \\ X  _{u_t}&\hat{\mathrm D}_x X  \end{array}\right|,
\\[1ex]\label{eq:GenNonlinKGEqsConditionForAdmTransC}
\left|\begin{array}{cc}\hat{\mathrm D}_t U^x& U^x_{u_x}\\\hat{\mathrm D}_t X  & X  _{u_x}\end{array}\right|=
\left|\begin{array}{cc}\hat{\mathrm D}_t T  & T  _{u_x}\\\hat{\mathrm D}_t U^t& U^t_{u_x}\end{array}\right|=(\Phi^*\tilde f)
\left|\begin{array}{cc}\hat{\mathrm D}_t T  & T  _{u_x}\\\hat{\mathrm D}_t X  & X  _{u_x}\end{array}\right|.
\end{gather}
\end{subequations}

Suppose that at least one of the derivatives~$T_{u_t}$, $T_{u_x}$, $X_{u_t}$ and $X_{u_x}$ does not vanish.
Up to the permutations of~$t$ and~$x$ and of~$\tilde t$ and~$\tilde x$, which are equivalence transformations of the class~$\bar{\mathcal K}$,
we can assume that $T_{u_t}\ne0$.
We denote
\[
\Lambda:=\frac{U^t_{u_t}-(\Phi^*\tilde f)X_{u_t}}{T_{u_t}},
\]
and thus $\Lambda$ is a function of~$(t,x,u,u_t,u_x)$.
This notation and the second equalities 
in the equations~\eqref{eq:GenNonlinKGEqsConditionForAdmTransA} and~\eqref{eq:GenNonlinKGEqsConditionForAdmTransB} imply the equations
\begin{gather*}
U^t_{u_t}=\Lambda T_{u_t}+(\Phi^*\tilde f) X_{u_t},\\
U^t_{u_x}=\Lambda T_{u_x}+(\Phi^*\tilde f) X_{u_x},\\
\hat{\mathrm D}_x U^t=\Lambda\hat{\mathrm D}_x T+(\Phi^*\tilde f)\hat{\mathrm D}_x X,
\end{gather*}
which are combined to the single equation
$\mathrm D_xU^t=\Lambda\mathrm D_xT+(\Phi^*\tilde f)\mathrm D_xX$
defined on the entire second-order jet space~$\mathrm J^2(\mathbb R^2_{t,x}\times\mathbb R_u)$ with the independent variables $(t,x)$ and the dependent variable~$u$.
Subtracting the last equation from the equality 
$\mathrm D_xU^t
=(\mathop{\rm pr}\nolimits_{(2)}\Phi)^*(\tilde u_{\tilde t\tilde t})\mathrm D_xT
+(\mathop{\rm pr}\nolimits_{(2)}\Phi)^*(\tilde u_{\tilde t\tilde x})\mathrm D_xX$, we derive the equation 
\begin{gather}\label{eq:GenNonlinKGEqsDerivedConditionForAdmTrans1}
\big((\mathop{\rm pr}\nolimits_{(2)}\Phi)^*(\tilde u_{\tilde t\tilde t})-\Lambda\big)\mathrm D_xT
+(\mathop{\rm pr}\nolimits_{(2)}\Phi)^*(\tilde u_{\tilde t\tilde x}-\tilde f)\mathrm D_xX=0
\end{gather}
on~$\mathrm J^2(\mathbb R^2_{t,x}\times\mathbb R_u)$.
Here $\mathop{\rm pr}\nolimits_{(2)}\Phi$ denotes the second-order prolongation of the contact transformation~$\Phi$. 
Restricting the equation~\eqref{eq:GenNonlinKGEqsDerivedConditionForAdmTrans1} on the manifold defined by~$K_f$ in~$\mathrm J^2(\mathbb R^2_{t,x}\times\mathbb R_u)$, 
where $u_{tx}=f$ and $(\mathop{\rm pr}\nolimits_{(2)}\Phi)^*(\tilde u_{\tilde t\tilde x}-\tilde f)=0$, leads to the equality 
\[
\big((\mathop{\rm pr}\nolimits_{(2)}\Phi)^*(\tilde u_{\tilde t\tilde t})-\Lambda\big)(\hat{\mathrm D}_xT+fT_{u_t}+T_{u_x}u_{xx})=0.
\]
Since the derivative~$\tilde u_{\tilde t\tilde t}$ is not constrained on the solutions of~\smash{$K_{\tilde f}$}, 
this implies the equation $\hat{\mathrm D}_xT+fT_{u_t}+T_{u_x}u_{xx}=0$, 
which can be split with respect to~$u_{xx}$ into the equations 
$T_{u_x}=0$ and $\hat{\mathrm D}_xT+fT_{u_t}=0$. 
In view of the first of these two equations, the second equation can further be split with respect to~$u_x$ 
into the equations $T_u=0$ and $T_x+fT_{u_t}=0$. 
Since $T_u=0$ and $f_u\ne0$, the equation $T_x+fT_{u_t}=0$ splits into $T_x=T_{u_t}=0$, 
which contradicts the inequality $T_{u_t}\ne0$.

Therefore, $T_{u_t}=T_{u_x}=X_{u_t}=X_{u_x}=0$.
Then the system~\eqref{eq:GenNonlinKGEqsSplitContactCondition} directly implies \mbox{$U_{u_t}=U_{u_x}=0$}.
Since the $t$-, $x$- and $u$-components of the contact transformation~$\Phi$
do not depend on the first-order derivatives~$u_t$ and~$u_x$, 
this transformation is a first-order prolongation of the point transformations with the same $t$-, $x$- and $u$-components.
\end{proof}

A generalization of Lemma~\ref{lem:GenNonlinKGEqsContactAdmTrans} for the class of equations of the form $u_{tx}=f(t,x,u,u_t,u_x)$ 
was proved in~\cite{popo2021a}.
In view of Lemma~\ref{lem:GenNonlinKGEqsContactAdmTrans},
all structures related to contact transformations of equations from the superclass~$\bar{\mathcal K}$
and all its subclasses, including the class~$\mathcal K$,
are the first-order prolongations of analogous structures related to point transformations of equations from the same classes.
These structures include equivalence groupoids, equivalence groups of these classes and the symmetry groups of equations from them. 
This is why we restrict the further consideration to point transformations within the class~$\mathcal K$.

The following lemma is an obvious corollary of Lemma~\ref{lem:GenNonlinKGEqsContactAdmTrans} and \cite[Theorem~4.3c]{king1998a}.

\begin{lemma}\label{lem:NonlinKGEqsGsim}
The class~\eqref{eq:NonlinKGEqs} is normalized in the usual sense with respect to both point and contact transformations, 
i.e., its point and contact equivalence groupoids coincide with the action groupoids of 
its point equivalence group~$G^\sim$ and of the first prolongation of this group, respectively. 
The group~$G^\sim$ is generated by the transformations of the form
\begin{gather}\label{eq:NonlinKGEqsGsim0}
\tilde t=T(t),\quad \tilde x=X(x),\quad \tilde u=Cu+U^0(t,x),\quad \tilde f=\frac{Cf+U^0_{tx}}{T_tX_x}
\end{gather}
and the discrete equivalence transformation $\mathscr I^0$: $\tilde t=x$, $\tilde x=t$, $\tilde u=u$, $\tilde f=f$.
Here $T$, $X$ and $U^0$ are arbitrary smooth functions of their arguments with $T_tX_x\ne0$,
and $C$ is an arbitrary nonzero constant.
\end{lemma}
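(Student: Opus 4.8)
The plan is to verify the two claims of the lemma in turn, deriving the explicit form of $G^\sim$ directly and then invoking Lemma~\ref{lem:GenNonlinKGEqsContactAdmTrans} together with~\cite[Theorem~4.3c]{king1998a} to obtain normalization. First I would fix a point admissible transformation $\mathcal T=(f,\Phi,\tilde f)$ within the class~$\mathcal K$, where now $\Phi$: $(\tilde t,\tilde x,\tilde u)=(T,X,U)$ depends only on $(t,x,u)$, and I would rewrite the condition that $\Phi$ maps $K_f$ to $K_{\tilde f}$ by substituting the prolongation of $\tilde u_{\tilde t\tilde x}$ in terms of the undifferentiated variables. Because the transformation is now a genuine point transformation, the determinantal conditions \eqref{eq:GenNonlinKGEqsConditionForAdmTrans} simplify considerably: the columns $T_{u_t}$, $T_{u_x}$, $X_{u_t}$, $X_{u_x}$, $U^t_{u_t}$, $U^t_{u_x}$, $U^x_{u_t}$, $U^x_{u_x}$ all vanish, and I am left with an equation relating $\hat{\mathrm D}_t$, $\hat{\mathrm D}_x$ derivatives of $T$, $X$, $U$ to $\Phi^*\tilde f$.

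Next I would split this mapping condition with respect to the jet variables $u_t$, $u_x$, $u_{tt}$, $u_{xx}$, which are unconstrained on $K_f$ once $u_{tx}=f$ is substituted. The key structural outcome I expect is the \textbf{twofold projectability} already announced in the introduction: splitting forces $T_x=T_u=0$ and $X_t=X_u=0$, so $T=T(t)$ and $X=X(x)$, each a function of its own variable alone. I would then examine the $u$-component: the surviving equations should force $U$ to be affine in $u$, i.e.\ $U=C(t,x)\,u+U^0(t,x)$, and a further splitting (using $f_{uu}\ne0$, which guarantees genuine nonlinearity and lets me separate the $f$-dependent terms from the rest) should pin down $C$ to be a nonzero constant and eliminate any spurious $(t,x)$-dependence in the coefficient of $u$. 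Collecting the terms that multiply $\Phi^*\tilde f$ then yields the transformation rule $\tilde f=(Cf+U^0_{tx})/(T_tX_x)$ displayed in \eqref{eq:NonlinKGEqsGsim0}, and the permutation $\mathscr I^0$ swapping $t\leftrightarrow x$ accounts for the residual discrete freedom from the ``up to permutations'' step used in the proof of Lemma~\ref{lem:GenNonlinKGEqsContactAdmTrans}.

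Having shown that every point admissible transformation of $\mathcal K$ has the form \eqref{eq:NonlinKGEqsGsim0} (possibly composed with $\mathscr I^0$), and conversely that every such transformation is admissible for suitable $\tilde f$, I would conclude that the point equivalence groupoid of $\mathcal K$ coincides with the action groupoid of the group $G^\sim$ generated by these transformations; this is precisely the statement that $\mathcal K$ is normalized in the usual sense with respect to point transformations. The contact part of the claim then follows immediately: by Lemma~\ref{lem:GenNonlinKGEqsContactAdmTrans} every contact admissible transformation within $\mathcal K$ is the first-order prolongation of a point admissible transformation, so the contact equivalence groupoid is the first prolongation of the point equivalence groupoid, hence the action groupoid of the first prolongation of $G^\sim$. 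The invocation of~\cite[Theorem~4.3c]{king1998a} serves to guarantee that the abstract normalization statement transfers correctly between the contact and point settings.

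The main obstacle I anticipate is the $u$-component splitting: while projectability of $T$ and $X$ drops out cleanly, showing that $U$ must be exactly affine in $u$ with a \emph{constant} leading coefficient requires careful use of the constraint $f_{uu}\ne0$ to rule out the degenerate possibilities that arise when the $f$-dependent and $f$-independent parts of the mapping condition could otherwise conspire. Isolating this step and treating the nonlinearity of $f$ as the decisive lever is where the real work of the proof lies; the remaining bookkeeping—assembling $\tilde f$ and checking that \eqref{eq:NonlinKGEqsGsim0} indeed generates a group closed under composition and inversion—is routine.
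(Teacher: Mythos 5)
Your proposal is correct and reaches the right conclusion, but note that the paper does not actually carry out the point-transformation computation you describe: it disposes of the lemma in one line, as an immediate corollary of Lemma~\ref{lem:GenNonlinKGEqsContactAdmTrans} (which reduces contact admissible transformations to prolonged point ones) and of Theorem~4.3c of~\cite{king1998a}, which already lists all form-preserving point transformations of the superclass $\mathcal K_{\rm gen}$ of equations $u_{tx}=f(t,x,u)$ and yields exactly the group~$G^\sim$ and the normalization of~$\mathcal K_{\rm gen}$. What you propose is, in effect, to reprove that cited theorem by the direct method; the computation you sketch does go through and would make the paper self-contained at the cost of some routine splitting. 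One point to adjust in your sketch: the constraint $f_{uu}\ne0$ is not the lever that forces $U=Cu+U^0(t,x)$ with constant~$C$. Once $T=T(t)$ and $X=X(x)$ are established, one has $T_tX_x\,\tilde u_{\tilde t\tilde x}=U_{tx}+U_{tu}u_x+U_{xu}u_t+U_{uu}u_tu_x+U_uu_{tx}$, and since $u$, $u_t$ and $u_x$ are unconstrained on the solution manifold of~$K_f$, the requirement that the target equation be free of first derivatives already splits into $U_{uu}=U_{tu}=U_{xu}=0$, i.e.\ $U=Cu+U^0(t,x)$ with $C=\const$, for \emph{every} value of~$f$, linear or not. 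The nonlinearity $f_{uu}\ne0$ enters only at the very end, to check that it is a $G^\sim$-invariant constraint, so that the normalization of~$\mathcal K_{\rm gen}$ descends to the subclass~$\mathcal K$ with the same equivalence group. Likewise, the role of~\cite[Theorem~4.3c]{king1998a} in the paper is precisely to supply this point-transformation description, not to ``transfer normalization between the contact and point settings''; that transfer is done entirely by Lemma~\ref{lem:GenNonlinKGEqsContactAdmTrans}, exactly as in the last step of your argument.
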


\begin{corollary}\label{cor:NonlinKGEqsDiscreteEquivTrans}
A complete list of discrete equivalence transformations of the class~\eqref{eq:NonlinKGEqs}
that are independent up to combining with each other and with continuous equivalence transformations of this class
is exhausted by the $(t,x)$-permutation~$\mathscr I^0$ and three transformations alternating signs of variables,
$\mathscr I^t\colon(t,x,u,f)\mapsto (-t,x,u,-f)$,
$\mathscr I^x\colon(t,x,u,f)\mapsto (t,-x,u,-f)$,
$\mathscr I^u\colon(t,x,u,f)\mapsto (t,x,-u,-f)$.
The quotient group of the equivalence group $G^\sim$ of the class~\eqref{eq:NonlinKGEqs}
with respect to its identity component is isomorphic to the group $\mathrm D_4\times\mathbb Z_2$, 
where the dihedral group $\mathrm D_4$ is the symmetry group of a square.
\end{corollary}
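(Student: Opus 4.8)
The plan is to determine the component group $G^\sim/(G^\sim)^0$ by first isolating a convenient set of discrete representatives from the generating transformations~\eqref{eq:NonlinKGEqsGsim0} and then computing the resulting finite group structure. From Lemma~\ref{lem:NonlinKGEqsGsim}, the continuous part of $G^\sim$ is governed by arbitrary smooth $T(t)$, $X(x)$, $U^0(t,x)$ and a nonzero constant $C$, all of which can be continuously deformed to the identity except for the \emph{signs} of $T_t$, $X_x$ and of $C$, together with the discrete swap $\mathscr I^0$. First I would argue that modulo the identity component, the parameters $T$, $X$, $U^0$ contribute only through $\operatorname{sgn} T_t$ and $\operatorname{sgn} X_x$ (since any orientation-preserving reparametrization of the line is isotopic to the identity and $U^0$ can be scaled to zero), while $C>0$ is isotopic to $C=1$ and $C<0$ gives a genuine new component. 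These three sign choices are realized precisely by $\mathscr I^t$, $\mathscr I^x$, $\mathscr I^u$ as in the statement, each of which is an involution lying in $G^\sim$.

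Next I would verify that $\mathscr I^t,\mathscr I^x,\mathscr I^u,\mathscr I^0$ generate the full quotient and read off the relations. The three sign-flips clearly commute with one another and each squares to the identity, so they generate a subgroup isomorphic to $\mathbb Z_2^3$. The permutation $\mathscr I^0$ is also an involution, and conjugation by $\mathscr I^0$ swaps $\mathscr I^t\leftrightarrow\mathscr I^x$ while fixing $\mathscr I^u$; this is a direct check on the explicit formulas $(t,x,u,f)\mapsto\cdots$. Thus the four generators satisfy the relations of $\mathbb Z_2\ltimes\mathbb Z_2^3$ where the $\mathbb Z_2$ acts by transposing two of the three factors.

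The final step is to recognize this semidirect product as $\mathrm D_4\times\mathbb Z_2$. The factor $\mathscr I^u$, being central (it commutes with all generators, including $\mathscr I^0$), splits off as a direct $\mathbb Z_2$. The remaining group is generated by $\mathscr I^t$, $\mathscr I^x$ (commuting involutions) and $\mathscr I^0$ (an involution conjugating one to the other): setting $r:=\mathscr I^t\mathscr I^0$ one checks $r^4=e$ and $\mathscr I^0 r\mathscr I^0=r^{-1}$, which are exactly the defining relations of the dihedral group $\mathrm D_4$ of order~$8$, the symmetry group of the square. Hence the quotient is $\mathrm D_4\times\mathbb Z_2$, as claimed.

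I expect the main obstacle to be the careful justification that \emph{no further} discrete transformations survive in the quotient and that the listed four generators are independent in the stated sense, rather than the group-theoretic identification, which is routine once the generators and their relations are pinned down. Concretely, the delicate point is showing that every element of $G^\sim$ can be connected within $G^\sim$ to a product of $\mathscr I^0,\mathscr I^t,\mathscr I^x,\mathscr I^u$ via a continuous path, i.e.\ that the homotopy classes of $(T,X,U^0,C)$ are fully captured by the two orientation signs and $\operatorname{sgn}C$; this requires invoking connectedness of the orientation-preserving diffeomorphism group of $\mathbb R$ and contractibility of the affine space of functions $U^0$, and then confirming that the resulting $2^3=8$ sign combinations together with the two values of $\mathscr I^0$ give exactly $|\mathrm D_4\times\mathbb Z_2|=16$ components with no coincidences.
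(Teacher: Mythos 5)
Your proposal is correct and follows essentially the same route the paper intends: the corollary is stated as an immediate consequence of Lemma~\ref{lem:NonlinKGEqsGsim}, whose parametrization of $G^\sim$ by $(T,X,U^0,C)$ and $\mathscr I^0$ makes precisely your argument (components detected by $\operatorname{sgn}T_t$, $\operatorname{sgn}X_x$, $\operatorname{sgn}C$ and the $(t,x)$-swap, followed by the identification $\mathbb Z_2\ltimes\mathbb Z_2^3\cong\mathrm D_4\times\mathbb Z_2$ via $r=\mathscr I^t\mathscr I^0$) the intended one. The relations you check ($\mathscr I^0$ conjugation swapping $\mathscr I^t\leftrightarrow\mathscr I^x$ and fixing the central $\mathscr I^u$) are exactly right.
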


\begin{corollary}\label{cor:NonlinKGEqsContactTransToWaveEq}
There is no contact transformation that maps an equation of the form $u_{tx}=f(t,x,u)$ with $f_u\ne0$ 
to equations of the same form with $f_u=0$.
\end{corollary}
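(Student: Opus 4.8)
=== PROOF PROPOSAL ===

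The plan is to prove Corollary~\ref{cor:NonlinKGEqsContactTransToWaveEq} by reducing it to the normalization statement of Lemma~\ref{lem:NonlinKGEqsGsim} together with the contact-reduction result of Lemma~\ref{lem:GenNonlinKGEqsContactAdmTrans}. The key observation is that the hypothetical target equations with $f_u=0$ are the \emph{linear} equations $u_{tx}=f(t,x,u)$ that were attached to~$\mathcal K$ when forming the superclass~$\bar{\mathcal K}$; indeed $\bar{\mathcal K}$ was singled out from~$\mathcal K_{\rm gen}$ by the single constraint $f_u\ne0$, so an equation with $f_u=0$ lies strictly outside~$\bar{\mathcal K}$. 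Thus the claim is essentially that no contact admissible transformation of~$\mathcal K_{\rm gen}$ can have its source equation in the subclass~$\bar{\mathcal K}$ (where $f_u\ne0$) and its target equation with $f_u=0$.

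First I would suppose, for contradiction, that a contact transformation~$\Phi$ maps an equation $K_f$ with $f_u\ne0$ to an equation $K_{\tilde f}$ with $\tilde f_u=0$. Since contact transformations are invertible, the inverse~$\Phi^{-1}$ maps $K_{\tilde f}$ (with $\tilde f_{\tilde u}=0$, i.e.\ affine in~$\tilde u$) to $K_f$ (with $f_u\ne0$). The strategy is to run the argument of the proof of Lemma~\ref{lem:GenNonlinKGEqsContactAdmTrans} in the reverse direction, or equivalently to re-examine where that proof used the inequality $f_u\ne0$. In that proof, the inequality $f_u\ne0$ on the \emph{source} equation was precisely what forced the splitting $T_x=T_{u_t}=0$ from $T_x+fT_{u_t}=0$, yielding the contradiction with $T_{u_t}\ne0$. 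Therefore, for the transformation~$\Phi^{-1}$ whose source $K_{\tilde f}$ has $\tilde f_{\tilde u}=0$, that splitting step fails, and genuinely non-prolonged (proper) contact behaviour could in principle survive. The cleanest route is thus to derive a contradiction from the target side: I would track the role of $f_u\ne0$ on the image equation and show that the relation $T_x+fT_{u_t}=0$ (now read for~$\Phi$, with $\Phi^*\tilde f$ in place of the target arbitrary element) becomes inconsistent with $\tilde f_{\tilde u}=0$.

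The concrete mechanism I expect to use is the derivative of the compatibility condition~\eqref{eq:GenNonlinKGEqsConditionForAdmTrans} with respect to~$u$. On solutions of~$K_f$, the pullback $\Phi^*\tilde f=\tilde f(T,X,U)$ appears as a multiplier; differentiating with respect to~$u$ produces $\Phi^*\tilde f_{\tilde u}\cdot U_u$, which vanishes identically when $\tilde f_{\tilde u}=0$. Since $\Phi$ is a nondegenerate (contact) transformation, $U_u\ne0$ cannot fail everywhere, so the vanishing of this term removes the $u$-dependence carried by the right-hand side of~\eqref{eq:GenNonlinKGEqsConditionForAdmTrans}. Comparing this with the genuine $u$-dependence on the left-hand side—which is nontrivial precisely because $f_u\ne0$ feeds $u$-dependence into the total derivatives $\mathrm D_tX$, $\mathrm D_xX$, etc.\ through $u_{tx}=f$—yields the desired contradiction. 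In effect, the source nonlinearity $f_{u}\ne0$ cannot be reconciled with a target that is $u$-affine.

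The main obstacle will be bookkeeping: making the comparison of $u$-dependence on the two sides of~\eqref{eq:GenNonlinKGEqsConditionForAdmTrans} rigorous after restricting to the manifold $u_{tx}=f$, since the total-derivative operators mix jet variables and one must be careful that the term $\Phi^*\tilde f_{\tilde u}$ is the only carrier of target $u$-dependence. I expect the most economical presentation to invoke Lemma~\ref{lem:GenNonlinKGEqsContactAdmTrans} directly: by that lemma every contact admissible transformation of~$\bar{\mathcal K}$ is the prolongation of a \emph{point} transformation, and the explicit form~\eqref{eq:NonlinKGEqsGsim0} of~$G^\sim$ shows that point equivalence transformations act on the arbitrary element by $\tilde f=(Cf+U^0_{tx})/(T_tX_x)$, which manifestly preserves the property $f_{uu}\ne0$ (hence cannot send a nonlinear $f$ to an affine one). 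The trouble is that~\eqref{eq:NonlinKGEqsGsim0} is the equivalence group of~$\mathcal K$ only, not of the larger~$\mathcal K_{\rm gen}$; so the honest argument must still verify that no \emph{admissible} (point, by Lemma~\ref{lem:GenNonlinKGEqsContactAdmTrans}) transformation with source in $\{f_u\ne0\}$ lands in $\{f_u=0\}$, which I would do by the differentiation-in-$u$ computation sketched above rather than by quoting the group form alone.
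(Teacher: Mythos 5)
Your overall architecture matches the paper's: reduce the hypothetical contact transformation to a prolonged point transformation, then show that point admissible transformations cannot change $f_u\ne0$ into $f_u=0$. However, both steps as you present them have genuine gaps. In the first step you correctly observe that the proof of Lemma~\ref{lem:GenNonlinKGEqsContactAdmTrans} uses the inequality $f_u\ne0$ only for the \emph{source} equation (to split $T_x+fT_{u_t}=0$ into $T_x=T_{u_t}=0$), but you then apply this observation to $\Phi^{-1}$, whose source is the $f_u=0$ equation, conclude that ``proper contact behaviour could in principle survive,'' and detour into an unexecuted ``target-side'' argument. The correct move is the opposite and much simpler: run the argument on $\Phi$ itself, whose source $K_f$ does satisfy $f_u\ne0$; since the target condition $\tilde f_{\tilde u}\ne0$ is never used anywhere in that proof, the whole computation goes through verbatim with $\tilde f_{\tilde u}=0$ and yields that $\Phi$ is the first prolongation of a point transformation. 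This is exactly what the paper does (``repeating the proof of Lemma~\ref{lem:GenNonlinKGEqsContactAdmTrans} for $f_u\ne0$ and $\tilde f_u=0$''). Your proposed substitute---differentiating the compatibility condition~\eqref{eq:GenNonlinKGEqsConditionForAdmTrans} with respect to~$u$ and ``comparing $u$-dependence''---is not a proof as written: both sides of that relation acquire $u$-dependence after the substitution $u_{tx}=f$, and isolating the contribution of $\Phi^*\tilde f_{\tilde u}$ would essentially force you to redo the Lemma's computation anyway.

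In the second step you rightly flag that the form~\eqref{eq:NonlinKGEqsGsim0} describes the equivalence group of~$\mathcal K$ only, so one cannot simply quote it for a transformation whose target lies outside~$\bar{\mathcal K}$; but you propose to close this gap with the same unexecuted sketch. The paper closes it by citing Theorem~4.3c of Kingston and Sophocleous, which states that the superclass~$\mathcal K_{\rm gen}$ of \emph{all} equations $u_{tx}=f(t,x,u)$ is normalized with the same equivalence group~$G^\sim$; hence every point admissible transformation within~$\mathcal K_{\rm gen}$ acts on the arbitrary element by $\tilde f=(Cf+U^0_{tx})/(T_tX_x)$ (possibly composed with~$\mathscr I^0$), so $\tilde f_{\tilde u}=Cf_u/(T_tX_x)$ and the conditions $f_u\ne0$ and $f_u=0$ are each $G^\sim$-invariant. (Two small inaccuracies to fix as well: $\tilde f_{\tilde u}=0$ means $\tilde f$ is independent of~$\tilde u$, not merely affine in it, and the property relevant to this corollary is the preservation of $f_u\ne0$, not of $f_{uu}\ne0$.)
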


\begin{proof}
Suppose that there exists such a contact transformation~$\Phi$. 
Repeating the proof of Lemma~\ref{lem:GenNonlinKGEqsContactAdmTrans} for $f_u\ne0$ and $\tilde f_u=0$, 
we derive that the transformation~$\Phi$ is the first prolongation of a point transformation 
in the space with the coordinates $(t,x,u)$. 
Theorem~4.3c from \cite{king1998a} implies that the (point) equivalence group
of the superclass~$\mathcal K_{\rm gen}$ of all the equation of the form $u_{tx}=f(t,x,u)$
coincides with the group~$G^\sim$, 
and any point admissible transformation within~$\mathcal K_{\rm gen}$ is generated by an element of~$G^\sim$. 
The conditions $f_u\ne0$ and $f_u=0$ are $G^\sim$-invariant, which contradicts the supposition. 
\end{proof}

Analyzing results of~\cite{lie1881b,lie1881a}, one can deduce 
that assertions like Lemmas~\ref{lem:GenNonlinKGEqsContactAdmTrans} and~\ref{lem:NonlinKGEqsGsim} 
and Corollary~\ref{cor:NonlinKGEqsContactTransToWaveEq} may have been known to Sophus Lie. 
In general, similar assertions are typical 
for the theory of contact equivalence of Monge--Amp\`ere equations, 
see, e.g., Lemma~1 in~\cite[p.~205]{kush2010a} and references therein.
In particular, the above Corollary~\ref{cor:NonlinKGEqsContactTransToWaveEq} follows from Corollary~1 in~\cite[p.~238]{kush2009b}. 

\begin{remark}\label{rem:NonlinKGEqsOnContactTrans}
The above assertions imply that the class~$\mathcal K_{\rm gen}$ 
and its subclasses~$\mathcal K_{\rm gen}\setminus\bar{\mathcal K}$, $\bar{\mathcal K}$, $\bar{\mathcal K}\setminus\mathcal K$ and $\mathcal K$, 
which are singled out by the auxiliary constraints $f_u=0$, $f_u\ne0$, $f_u\ne0\wedge f_{uu}=0$ and $f_{uu}\ne0$, respectively,
have the same point equivalence group~$G^\sim$ and are normalized in the point sense. 
The subclasses~$\bar{\mathcal K}$, $\bar{\mathcal K}\setminus\mathcal K$ and $\mathcal K$ 
are also normalized in the contact sense. 
Equations from the class~$\mathcal K$ are not mapped by contact transformations to equations from the class~$\mathcal K_{\rm gen}\setminus\mathcal K$. 
Although the group classification of the class~$\mathcal K_{\rm gen}\setminus\mathcal K$ up to the $G^\sim$-equivalence 
has not be carried out in the literature, Lie's solution of the group classification problem 
for the wider class of all linear hyperbolic equations with two independent variables 
is well known~\cite{lie1881a}. 
Moreover, symmetry properties of the linear and the nonlinear equations from~$\mathcal K_{\rm gen}$, 
which constitute the classes~$\mathcal K_{\rm gen}\setminus\mathcal K$ and~$\mathcal K$, respectively, are quite different. 
The last three facts justify the exclusion of linear equations from the further consideration. 
\end{remark}

Lemma~\ref{lem:NonlinKGEqsGsim} implies that the transformations of the form~\eqref{eq:NonlinKGEqsGsim0} 
constitute a subgroup~$H$ of~$G^\sim$. 
Any such transformation~$\mathscr T$ can be represented as a composition
\[
\mathscr T = \mathscr D^t(T)\circ\mathscr D^x(X)\circ\mathscr Z(U^0)\circ\mathscr D^u(C)
\]
of the elementary equivalence transformations 
\[
\begin{array}{@{}llllll}
\mathscr D^t(T) \colon\ & \tilde t=T(t), \ & \tilde x=x,   \ & \tilde u=u,           \ & \tilde f=f/T_t,\\[1ex]
\mathscr D^x(X) \colon\ & \tilde t=t,    \ & \tilde x=X(x),\ & \tilde u=u,           \ & \tilde f=f/X_x,\\[1ex]
\mathscr D^u(C) \colon\ & \tilde t=t,    \ & \tilde x=x,   \ & \tilde u=Cu,          \ & \tilde f=Cf,\\[1ex]
\mathscr Z(U^0) \colon\ & \tilde t=t,    \ & \tilde x=x,   \ & \tilde u=u+U^0(t, x), \ & \tilde f=f+U^0_{tx},
\end{array}
\]
which are an arbitrary transformation in $t$, an arbitrary transformation in $x$,
a scaling of $u$ and a shift of $u$ with arbitrary functions of $(t,x)$, respectively.
The transformation parameters are described in Lemma~\ref{lem:NonlinKGEqsGsim}, 
and their values are the same as in the form~\eqref{eq:NonlinKGEqsGsim0}. 
The families of elementary transformations 
$\{\mathscr D^t(T)\}$, $\{\mathscr D^x(X)\}$, $\{\mathscr D^u(C)\}$ and $\{\mathscr Z(U^0)\}$, 
where the corresponding constant or functional parameter varies, are subgroups of~$G^\sim$. 
One more elementary equivalence transformation of the class~\eqref{eq:NonlinKGEqs} is~$\mathscr I^0$, 
whereas $\mathscr I^t=\mathscr D^t(-t)$, $\mathscr I^x=\mathscr D^t(-x)$ and $\mathscr I^u=\mathscr D^u(-1)$.
Each transformation~$\mathscr T$ from~$G^\sim\setminus H$ can be decomposed as
\[
\mathscr T = \mathscr I^0\circ\mathscr D^t(T)\circ\mathscr D^x(X)\circ\mathscr Z(U^0)\circ\mathscr D^u(C).
\]

\begin{corollary}
The equivalence algebra of the class~\eqref{eq:NonlinKGEqs} is
$
\mathfrak g^\sim:=\big\langle\hat D^t(\tau),\,\hat D^x(\xi),\,\hat I,\,\hat Z(\eta^0)\big\rangle,
$
where the parameter functions $\tau=\tau(t)$, $\xi=\xi(x)$ and~$\eta^0=\eta^0(t,x)$ run through the sets of smooth functions of their arguments, and
\begin{gather*}
\hat D^t(\tau):=\tau(t)\p_t-\tau_t(t)f\p_f,\quad
\hat D^x(\xi):=\xi(x)\p_x-\xi_x(x)f\p_f,\\
\hat I:=u\p_u+f\p_f,\quad
\hat Z(\eta^0):=\eta^0(t,x)\p_u+\eta^0_{tx}(t,x)\p_f.
\end{gather*}
\end{corollary}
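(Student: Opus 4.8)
The plan is to obtain $\mathfrak g^\sim$ as the Lie algebra of the point equivalence group $G^\sim$, whose complete description is already available from Lemma~\ref{lem:NonlinKGEqsGsim}. Since $\mathfrak g^\sim$ consists precisely of the infinitesimal generators of one-parameter subgroups of $G^\sim$, and since the identity component of $G^\sim$ is exhausted by compositions of the elementary transformations $\mathscr D^t(T)$, $\mathscr D^x(X)$, $\mathscr D^u(C)$ and $\mathscr Z(U^0)$ introduced after Lemma~\ref{lem:NonlinKGEqsGsim}, it suffices to differentiate each of these four elementary families at the identity. The discrete transformation $\mathscr I^0$ lies in a component of $G^\sim$ that is disconnected from the identity and therefore contributes nothing to the equivalence algebra.

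First I would realize each elementary subgroup as a genuine one-parameter family passing through the identity at parameter value $\epsilon=0$. For the $t$-component I take a smooth family $T=T(\epsilon,t)$ of diffeomorphisms of the $t$-line with $T(0,t)=t$ and put $\tau(t):=\p_\epsilon T(\epsilon,t)|_{\epsilon=0}$; the associated transformation~\eqref{eq:NonlinKGEqsGsim0} with $X=x$, $C=1$, $U^0=0$ reads $\tilde t=T(\epsilon,t)$, $\tilde x=x$, $\tilde u=u$, $\tilde f=f/T_t(\epsilon,t)$. Differentiating at $\epsilon=0$ and using $T_t(0,t)=1$ together with $\p_\epsilon T_t|_{\epsilon=0}=\tau_t$ yields the $t$-component $\tau(t)\p_t$ and the $f$-component $-f\tau_t(t)\p_f$, that is, $\hat D^t(\tau)$; the $x$-family is treated identically and gives $\hat D^x(\xi)$. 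For the scaling I set $C={\rm e}^\epsilon$, so that $\tilde u={\rm e}^\epsilon u$ and $\tilde f={\rm e}^\epsilon f$, whose derivative at $\epsilon=0$ is $\hat I=u\p_u+f\p_f$. For the shift I replace $U^0$ by $\epsilon\eta^0$, so that $\tilde u=u+\epsilon\eta^0$ and $\tilde f=f+\epsilon\eta^0_{tx}$, whose derivative is $\hat Z(\eta^0)=\eta^0\p_u+\eta^0_{tx}\p_f$. Since the parameters $T$, $X$, $C$ and $U^0$ range over all of the continuous part of $G^\sim$, the four resulting generators span the whole of $\mathfrak g^\sim$, and no further generators can appear.

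To complete the argument I would note that $\mathfrak g^\sim$ is closed under the Lie bracket, being the equivalence algebra of a normalized class; as a sanity check one may verify the commutation relations directly, confirming in particular that brackets of the listed fields stay within their linear span over the indicated function spaces. The main point requiring care is purely computational, namely the $f$-components of $\hat D^t$ and $\hat D^x$, where one differentiates the reciprocal $f/T_t$ (respectively $f/X_x$) and must invoke $T_t(0,t)=1$ to recover the clean factor $-f\tau_t$. A secondary, conceptual point is the justification that differentiating the already known group $G^\sim$ produces the entire equivalence algebra with nothing omitted; this is immediate here precisely because Lemma~\ref{lem:NonlinKGEqsGsim} exhausts the continuous equivalence transformations by the four elementary families.
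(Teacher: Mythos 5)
Your proposal is correct and follows essentially the paper's intended route: the paper presents this as an unproved corollary of Lemma~\ref{lem:NonlinKGEqsGsim} and the decomposition of $G^\sim$ into the elementary families $\mathscr D^t(T)$, $\mathscr D^x(X)$, $\mathscr D^u(C)$, $\mathscr Z(U^0)$ (plus the discrete $\mathscr I^0$), and the equivalence algebra is obtained exactly as you do, by differentiating one-parameter curves in each family at the identity. Your computations of the $f$-components, in particular $\p_\epsilon(f/T_t)\big|_{\epsilon=0}=-f\tau_t$, are accurate.
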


Using the infinitesimal invariance criterion~\cite{blum2010A,blum1989A,olve1993A},
we prove the following assertion.

\begin{proposition}\label{pro:NonlinKGEqsMIA}
The maximal Lie invariance algebra~$\mathfrak g_f$ of an equation~$K_f$ from the class~\eqref{eq:NonlinKGEqs}
consists of the vector fields of the form $\tau(t)\p_t+\xi(x)\p_x+\big(\eta^1u+\eta^0(t,x)\big)\p_u$,
where the parameter functions $\tau=\tau(t)$, $\xi=\xi(x)$ and~$\eta^0=\eta^0(t,x)$
and the constant~$\eta^1$ satisfy the classifying equation
\begin{gather}\label{eq:NonlinKGEqsClassifyingEq}
\tau f_t + \xi f_x + \big(\eta^1u + \eta^0\big)f_u=\big(\eta^1-\tau_t-\xi_x\big)f+\eta^0_{tx}.
\end{gather}
\end{proposition}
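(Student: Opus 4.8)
The plan is to apply the infinitesimal invariance criterion to a general vector field and to read off the determining equations by splitting with respect to the jet coordinates that remain free on the solution manifold of~$K_f$. I would start from an arbitrary vector field $Q=\tau\p_t+\xi\p_x+\eta\p_u$ whose coefficients $\tau$, $\xi$ and $\eta$ are, a priori, smooth functions of $(t,x,u)$, and compute the coefficient~$\eta^{tx}$ of $\p_{u_{tx}}$ in its second-order prolongation by the standard prolongation formula. The invariance criterion for the equation $u_{tx}=f(t,x,u)$ then reads $\eta^{tx}=\tau f_t+\xi f_x+\eta f_u$ on the manifold cut out by $K_f$ in the second jet space, i.e., after the substitution $u_{tx}=f$.

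Once $u_{tx}=f$ has been used, the resulting relation is a polynomial identity in the jet coordinates $u_t$, $u_x$, $u_{tt}$ and $u_{xx}$, which are unconstrained on this manifold, whereas $f$ and its derivatives, as well as $\tau$, $\xi$ and $\eta$, depend only on $(t,x,u)$. Splitting with respect to these derivatives yields the determining system. The coefficient of $u_{tt}$ is $-(\tau_x+u_x\tau_u)$, and its vanishing for all $u_x$ forces $\tau_x=\tau_u=0$, so $\tau=\tau(t)$; symmetrically, the coefficient of $u_{xx}$ gives $\xi_t=\xi_u=0$, so $\xi=\xi(x)$. With these simplifications the coefficient of the product $u_tu_x$ collapses to $\eta_{uu}$, whence $\eta_{uu}=0$ and $\eta$ is affine in~$u$, $\eta=\eta^1(t,x)u+\eta^0(t,x)$; the coefficients of $u_t$ and $u_x$ then give $\eta^1_x=0$ and $\eta^1_t=0$, so $\eta^1$ is a constant. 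Collecting finally the terms free of $u_t$, $u_x$, $u_{tt}$ and $u_{xx}$ and using $\eta_{tx}=\eta^0_{tx}$ and $\eta_u=\eta^1$ produces exactly the classifying equation~\eqref{eq:NonlinKGEqsClassifyingEq}, completing the description of~$\mathfrak g_f$.

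The computation is routine, and I note that no step relies on the nondegeneracy condition $f_{uu}\ne0$, which merely selects the class~$\mathcal K$ and plays no role in pinning down the form of the generator. The only point demanding care is the bookkeeping of the lengthy prolongation coefficient~$\eta^{tx}$ together with the order of the nested splittings: first eliminating $u_{tt}$ and $u_{xx}$ to fix $\tau$ and $\xi$, and only afterwards splitting the substantially reduced identity with respect to $u_t$ and $u_x$. I expect this organization of the splitting to be the main, though purely technical, obstacle, since the intermediate expression for~$\eta^{tx}$ is cumbersome before the $\tau$- and $\xi$-constraints trim it down.
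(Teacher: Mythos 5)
Your proposal is correct and follows exactly the route the paper intends: the paper proves Proposition~\ref{pro:NonlinKGEqsMIA} by the standard infinitesimal invariance criterion, and your splitting with respect to $u_{tt}$, $u_{xx}$, $u_tu_x$, $u_t$ and $u_x$ reproduces the determining equations $\tau=\tau(t)$, $\xi=\xi(x)$, $\eta=\eta^1u+\eta^0(t,x)$ with $\eta^1=\const$ and the classifying equation~\eqref{eq:NonlinKGEqsClassifyingEq}. Your side remark that $f_{uu}\ne0$ plays no role here is also accurate, since the same classifying equation holds throughout the superclass~$\mathcal K_{\rm gen}$.
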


Consider the linear span $\mathfrak g_\spanindex$ of all the maximal Lie invariance algebras
of equations from the class~\eqref{eq:NonlinKGEqs},
\begin{gather}\label{eq:NonlinKGEqsSpan}
\begin{split}
\mathfrak g_\spanindex:={}&\textstyle\sum_f\mathfrak g_f=\big\{Q=\tau(t)\p_t+\xi(x)\p_x+(\eta^1u +\eta^0(t,x))\p_u\big\}\\
={}&\textstyle\big\langle D^t(\tau),\,D^x(\xi),\,I,\,Z(\eta^0)\big\rangle
\ne\bigcup_f \mathfrak g_f,
\end{split}
\end{gather}
where the parameters $\tau$, $\xi$ and~$\eta^0$ run through the sets of smooth functions of their arguments,
$\eta^1$ is an arbitrary constant, and
\begin{gather*}
D^t(\tau):=\tau(t)\p_t,\quad
D^x(\xi):=\xi(x)\p_x,\quad
I:=u\p_u,\quad
Z(\eta^0):=\eta^0(t,x)\p_u.
\end{gather*}
It is obvious that any vector field~$Q$ of the above form with $(\tau,\xi)\ne(0,0)$
belongs to~$\mathfrak g_f$ for any~$f$ satisfying the classifying equation with the components of~$Q$,
and such a value of the arbitrary element~$f$ necessarily exists.
The last inequality in~\eqref{eq:NonlinKGEqsSpan} holds since the vector fields from~$\langle I,Z(\eta^0)\rangle$
do not belong to~$\mathfrak g_f$ for any~$f$
in view of the auxiliary inequality $f_{uu}\ne 0$ for the arbitrary element~$f$ within the class~\eqref{eq:NonlinKGEqs}.
At the same time, we can represent such vector fields as linear combinations of vector fields from~$\mathfrak g_\spanindex$
with $(\tau,\xi)\ne(0,0)$.
This is why the second equality in~\eqref{eq:NonlinKGEqsSpan} holds as well,
and thus the algebra $\mathfrak g_\spanindex$ coincides
with the projection $\varpi_*\mathfrak g^\sim$ of~$\mathfrak g^\sim$.
Moreover, the span~$\mathfrak g_\spanindex$ is a Lie algebra, since it is closed with respect to the Lie bracket of vector fields.
\looseness=-1

Here and in what follows $\varpi$ denotes the projection of the space with coordinates $(t,x,u,f)$ onto the space with coordinates $(t,x,u)$. 
We also use the notation $\pi^{t,x}$, $\pi^t$ and~$\pi^x$ 
for the projections of the space with coordinates $(t,x,u)$ onto the spaces with coordinates $(t,x)$, $t$ and~$x$, respectively.

The nonidentity actions of elementary equivalence transformations on the above vector fields spanning $\mathfrak g_\spanindex$
are the following:
\begin{alignat*}{3}
&\big(\varpi_*\mathscr D^t(T)\big)_*D^t(\tau)=D^t\big(\tau(\hat T)/\hat T_t\big), &&
 \big(\varpi_*\mathscr D^t(T)\big)_*Z(\eta^0)=Z\big(\eta^0(\hat T,x)\big),        \\
&\big(\varpi_*\mathscr D^x(X)\big)_*D^x(\xi )=D^x\big(\xi (\hat X)/\hat X_x\big), &&
 \big(\varpi_*\mathscr D^x(X)\big)_*Z(\eta^0)=Z\big(\eta^0(t,\hat X)\big),        \\
&\big(\varpi_*\mathscr Z(U^0)\big)_*D^t(\tau)=D^t(\tau)+Z(\tau U^0_t),            &&
 \big(\varpi_*\mathscr D^u(C)\big)_*Z(\eta^0)=Z(C\eta^0),                         \\
&\big(\varpi_*\mathscr Z(U^0)\big)_*D^x(\xi )=D^x(\xi )+Z(\xi  U^0_x),\qquad      &&
 \big(\varpi_*\mathscr Z(U^0)\big)_*I        =I-Z(U^0),
\end{alignat*}
where
$\hat T=\hat T(t)$ and $\hat X=\hat X(x)$ are the inverses of the functions~$T$ and~$X$, respectively.

\begin{definition}
A subalgebra $\mathfrak s$ of $\mathfrak g_\spanindex$ is said to be appropriate
if there exists a value of the arbitrary element~$f$ such that $\mathfrak s=\mathfrak g_f$.
\end{definition}

In view of Lemma~\ref{lem:NonlinKGEqsGsim}, the group classification of the class~\eqref{eq:NonlinKGEqs}
reduces to the classification of appropriate subalgebras
of $\mathfrak g_\spanindex = \varpi_*\mathfrak g^\sim$ up to the $\varpi_*G^\sim$-equivalence.

Splitting the classifying equation~\eqref{eq:NonlinKGEqsClassifyingEq}
with respect to the arbitrary element~$f$ and its derivations,
we obtain the trivial system $\tau=0$, $\xi=0$, $\eta^1=0$ and $\eta^0=0$.
This means that the following assertion holds.

\begin{lemma}
The kernel Lie invariance algebra of the equations from the class~\eqref{eq:NonlinKGEqs} is $\mathfrak g^\cap=\{0\}$.
\end{lemma}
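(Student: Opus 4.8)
The plan is to read off the kernel invariance algebra $\mathfrak g^\cap=\bigcap_f\mathfrak g_f$ directly from the classifying equation~\eqref{eq:NonlinKGEqsClassifyingEq}. By Proposition~\ref{pro:NonlinKGEqsMIA}, every vector field belonging to any $\mathfrak g_f$ has the form $Q=\tau(t)\p_t+\xi(x)\p_x+(\eta^1u+\eta^0(t,x))\p_u$, so a field lies in $\mathfrak g^\cap$ precisely when its components $\tau$, $\xi$, $\eta^0$ and the constant $\eta^1$ satisfy~\eqref{eq:NonlinKGEqsClassifyingEq} \emph{simultaneously for every} admissible arbitrary element~$f$. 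The whole task therefore reduces to showing that this simultaneous requirement forces $\tau=\xi=\eta^0=0$ and $\eta^1=0$.

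First I would note that the components of~$Q$ do not depend on~$f$ or on any of its derivatives: $\tau$ depends only on~$t$, $\xi$ only on~$x$, $\eta^0$ only on~$(t,x)$, and $\eta^1$ is constant. Consequently, when~\eqref{eq:NonlinKGEqsClassifyingEq} is read as an identity that must hold across the whole class, the quantities $f$, $f_t$, $f_x$ and~$f_u$ enter linearly and with $f$-independent coefficients, so one may split the equation with respect to them. The one step requiring care is the legitimacy of this splitting, i.e.\ that the values of $(f,f_t,f_x,f_u)$ at a fixed point can be prescribed arbitrarily in spite of the defining constraint $f_{uu}\ne0$. This I would justify concretely: at any point $(t_0,x_0,u_0)$ and for any prescribed numbers $a,b,c,d$, the arbitrary element $f=a+b(t-t_0)+c(x-x_0)+d(u-u_0)+(u-u_0)^2$ lies in the class (indeed $f_{uu}=2\ne0$) and realizes $f=a$, $f_t=b$, $f_x=c$, $f_u=d$ at that point; letting $a,b,c,d$ vary independently legitimizes the splitting.

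Carrying out the splitting of~\eqref{eq:NonlinKGEqsClassifyingEq} with respect to $f_t$, $f_x$, $f_u$, $f$ and the free term then gives, respectively, $\tau=0$, $\xi=0$, $\eta^1u+\eta^0=0$, together with the two residual relations $\eta^1-\tau_t-\xi_x=0$ and $\eta^0_{tx}=0$. The equation $\eta^1u+\eta^0=0$ is an identity in the independent variable~$u$ whose coefficients $\eta^1$ and $\eta^0$ are $u$-free, so a further elementary splitting with respect to~$u$ yields $\eta^1=0$ and $\eta^0=0$; the two residual relations are then satisfied automatically. Hence every component of~$Q$ vanishes and $\mathfrak g^\cap=\{0\}$, as claimed. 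I expect the only genuine subtlety to be the explicit verification that the class is rich enough to permit the splitting with respect to $(f,f_t,f_x,f_u)$; once that is in place, the remainder is a one-line collection of coefficients.
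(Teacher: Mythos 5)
Your proposal is correct and follows essentially the same route as the paper, which likewise obtains $\tau=\xi=\eta^1=\eta^0=0$ by splitting the classifying equation~\eqref{eq:NonlinKGEqsClassifyingEq} with respect to the arbitrary element~$f$ and its derivatives. Your explicit construction of an $f$ in the class realizing arbitrary prescribed values of $(f,f_t,f_x,f_u)$ at a point is a welcome justification of that splitting, which the paper leaves implicit.
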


Since the kernel Lie invariance algebra~$\mathfrak g^\cap$ is zero,
the condition of the necessary inclusion of it into each appropriate algebra
makes no constraint for such algebras.
Analyzing the classifying equation~\eqref{eq:NonlinKGEqsClassifyingEq} deeper,
we derive really essential constraints for such algebras.

\begin{lemma}\label{lem:NonlinKGEqsConditionsForAppropriateSubalgebras}
{\rm (i)} $\mathfrak g_f\cap\big\langle I,\,Z(\eta^0)\big\rangle =\{0\}$ for any $f=f(t,x,u)$ with $f_{uu}\ne0$,
and therefore \mbox{$\dim\mathfrak g_f=\dim\pi^{t,x}_*\mathfrak g_f$}.
Here $\eta^0$ runs through the set of smooth functions depending on~$(t,x)$.

{\rm (ii)} $\dim\mathfrak g_f=\infty$ if and only if $f={\rm e}^u$ $(\!{}\bmod G^\sim)$.

{\rm (iii)} If $f\ne{\rm e}^u$ $(\!{}\bmod G^\sim)$, then $\dim\mathfrak g_f\leqslant4$.
\end{lemma}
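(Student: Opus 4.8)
\emph{Part (i).} The plan is a direct computation. Take $Q\in\mathfrak g_f$ lying in $\langle I,Z(\eta^0)\rangle$, i.e.\ with $\tau=\xi=0$, so that $Q=(\eta^1u+\eta^0)\p_u$. Substituting $\tau=\xi=0$ into the classifying equation~\eqref{eq:NonlinKGEqsClassifyingEq} and differentiating once with respect to~$u$ makes the terms $\eta^1f_u$ cancel and leaves $(\eta^1u+\eta^0)f_{uu}=0$. Since $f_{uu}\ne0$, this forces $\eta^1u+\eta^0\equiv0$, whence $\eta^1=0$, $\eta^0=0$ and thus $Q=0$; hence $\mathfrak g_f\cap\langle I,Z(\eta^0)\rangle=\{0\}$. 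As the kernel of $\pi^{t,x}_*|_{\mathfrak g_f}$ consists exactly of the elements of $\mathfrak g_f$ with $\tau=\xi=0$, this kernel is trivial, $\pi^{t,x}_*|_{\mathfrak g_f}$ is injective, and $\dim\mathfrak g_f=\dim\pi^{t,x}_*\mathfrak g_f$.

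\emph{Structural reduction for (ii) and (iii).} By part~(i) it suffices to study $\mathfrak a:=\pi^{t,x}_*\mathfrak g_f\subseteq\{\tau(t)\p_t+\xi(x)\p_x\}$. Projectability lets me project further to each line: $\pi^t_*$ and $\pi^x_*$ are Lie-algebra homomorphisms from $\mathfrak a$ onto the realizations $\mathfrak a^t:=\pi^t_*\mathfrak g_f$ and $\mathfrak a^x:=\pi^x_*\mathfrak g_f$ by vector fields on the $t$- and $x$-lines. Their common kernel is trivial (a field with $\tau=\xi=0$ is zero), so the joint map embeds $\mathfrak a\hookrightarrow\mathfrak a^t\oplus\mathfrak a^x$. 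The Lie theorem bounds each finite-dimensional factor by~$3$, and using $\mathscr D^t(T)$ and $\mathscr D^x(X)$ independently I may bring $\mathfrak a^t$ and $\mathfrak a^x$ to its canonical forms. In particular, if both factors are finite then $\dim\mathfrak g_f\le6$.

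\emph{The key relative-invariance equation and part (ii).} Differentiating~\eqref{eq:NonlinKGEqsClassifyingEq} twice with respect to~$u$ turns it into the homogeneous equation $\tau g_t+\xi g_x+(\eta^1u+\eta^0)g_u=-(\tau_t+\xi_x+\eta^1)g$ for $g:=f_{uu}\ne0$; equivalently, with $h:=\ln|f_{uu}|$ and $p:=h_u$, every $Q\in\mathfrak g_f$ satisfies $Qh=-(\tau_t+\xi_x+\eta^1)$ and $Qp=-\eta^1p$. If $\dim\mathfrak g_f=\infty$ then $\dim\mathfrak a=\infty$, so by the embedding at least one factor, say $\mathfrak a^t$ after applying $\mathscr I^0$ if needed, is infinite-dimensional; intersecting this infinite family with the subspace $\eta^1=0$ (of codimension at most one) keeps it infinite. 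For that family $Qp=0$, and the abundance of independent coefficient tuples in the linear relations $\tau p_t+\xi p_x+\eta^0p_u=0$ forces $p_u=0$, i.e.\ $f_{uu}$ is exponential in~$u$; feeding this back into the full equation~\eqref{eq:NonlinKGEqsClassifyingEq} and normalizing by $\mathscr D^u(C)$, $\mathscr D^t(T)$, $\mathscr D^x(X)$ and $\mathscr Z(U^0)$ reduces $f$ to~${\rm e}^u$. The converse is immediate: for $f={\rm e}^u$ the fields $\tau(t)\p_t+\xi(x)\p_x-(\tau_t+\xi_x)\p_u$ with arbitrary $\tau,\xi$ solve~\eqref{eq:NonlinKGEqsClassifyingEq}, giving an infinite-dimensional algebra.

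\emph{The sharp bound (iii) and the main obstacle.} Let $\mathfrak k^t$ and $\mathfrak k^x$ be the ideals of $\mathfrak a$ formed by the fields with $\xi=0$ and with $\tau=0$, and $\mathfrak q:=\mathfrak a^t/\mathfrak k^t\cong\mathfrak a^x/\mathfrak k^x$ the shared quotient coupling the two lines. Rank–nullity for the two projections gives $\dim\mathfrak g_f=\dim\mathfrak a^t+\dim\mathfrak a^x-\dim\mathfrak q$, so when $\dim\mathfrak q\ge2$ the factor bound already yields $\dim\mathfrak g_f\le4$. The real work, and the step I expect to be hardest, is excluding dimensions~$5$ and~$6$: these require $\mathfrak q$ small while one factor is three-dimensional, i.e.\ one line carries the full $\mathfrak{sl}_2$-realization $\langle\p_t,t\p_t,t^2\p_t\rangle$ with the coupling staying small. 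I plan to close this by feeding the three canonical generators into~\eqref{eq:NonlinKGEqsClassifyingEq} and combining them with the relative-invariance equation above to show that such a configuration again forces $p_u=0$, hence $f={\rm e}^u\pmod{G^\sim}$. Consequently, for $f\ne{\rm e}^u$ one must have $\dim\mathfrak q\ge2$ (or strictly smaller factors), whence $\dim\mathfrak g_f\le4$; it is this final case analysis, rather than the structural reduction, that carries the weight of the argument.
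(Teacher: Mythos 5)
Your part~(i) is correct and is essentially the paper's own argument: with $\tau=\xi=0$ the classifying equation~\eqref{eq:NonlinKGEqsClassifyingEq} reduces to $(\eta^1u+\eta^0)f_u=\eta^1f+\eta^0_{tx}$, and either your $u$-differentiation or the paper's direct integration shows $f$ would be affine in~$u$ unless $\eta^1=\eta^0=0$. Your structural frame for (ii)--(iii) --- embedding $\pi^{t,x}_*\mathfrak g_f$ into $\pi^t_*\mathfrak g_f\oplus\pi^x_*\mathfrak g_f$, the Goursat-type dimension count $\dim\mathfrak g_f=\dim\pi^t_*\mathfrak g_f+\dim\pi^x_*\mathfrak g_f-\dim\mathfrak q$, and the Lie theorem bound on each factor --- is sound and genuinely different from the paper, which never decomposes the algebra this way. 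But it only delivers $\dim\mathfrak g_f\leqslant6$ in the finite-dimensional case; everything beyond that is asserted rather than proved.

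The gaps are concentrated in the two steps that carry the weight. First, the claim that infinitely many linearly independent fields $Q$ with $\eta^1=0$ satisfying $\tau p_t+\xi p_x+\eta^0p_u=0$ ``force $p_u=0$ by abundance of independent coefficient tuples'' does not follow as stated: each relation constrains only the pointwise values $\big(\tau(t_0),\xi(x_0),\eta^0(t_0,x_0)\big)$, and the evaluation map on an infinite-dimensional space of tuples has infinite-dimensional kernel, so no contradiction with $p_u\ne0$ is obtained without further work. The paper's replacement for this step is quantitative: differentiating~\eqref{eq:NonlinKGEqsClassifyingEq} twice and, in the generic case $(f_u/f_{uu})_u\ne0$, solving for $\tau_t+\xi_x$, fixing $u=u_0$ and then $t=t_0$, one finds that $\xi$ and then $\tau$ satisfy first-order linear ODEs whose solutions involve at most the four parameters $\tau(t_0)$, $\tau_t(t_0)$, $\eta^1$, $C_1$, giving $\dim\mathfrak g_f\leqslant4$ outright and leaving only the explicit family $f=\gamma{\rm e}^{\alpha u}+\beta$ to analyze. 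Second, even granting $p_u=0$, your normal form $f_{uu}=\gamma(t,x){\rm e}^{\alpha(t,x)u}$ includes $\alpha=0$, i.e.\ $f$ quadratic in $u$ up to affine terms; this case does \emph{not} ``reduce to ${\rm e}^u$'' and must instead be shown to have a finite-dimensional algebra, which your sketch omits. Third, for (iii) the exclusion of dimensions $5$ and $6$ (one projection equal to $\langle\p_t,t\p_t,t^2\p_t\rangle$ with trivial coupling $\mathfrak q$) is explicitly left as a plan; you correctly identify it as the hard case, but it is not carried out, whereas the paper's ODE argument makes this case analysis unnecessary. As written, the proposal proves (i) and $\dim\mathfrak g_f\leqslant6$, but not (ii) or (iii).
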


\begin{proof}
Suppose that for a value of the arbitrary element~$f$, the algebra~$\mathfrak g_f$ contains a vector field~$\eta^1I+Z(\eta^0)$
with $(\eta^1,\eta^0)\ne(0,0)$.
The classifying equation~\eqref{eq:NonlinKGEqsClassifyingEq} implies that
the function~$f$ satisfies the equation $(\eta^1u+\eta^0)f_u=\eta^1f+\eta^0_{tx}$.
Considering the cases $\eta^1\ne0$ and $\eta^1=0$ separately, we easily show that in both cases
$f$ is affine in~$u$, which contradicts the auxiliary inequality~$f_{uu}\ne0$ for the class~\eqref{eq:NonlinKGEqs}.
This proves item~(i) of the lemma.

We differentiate the classifying equation~\eqref{eq:NonlinKGEqsClassifyingEq} with respect to~$u$
and, in view of the auxiliary inequality~$f_{uu}\ne0$,
divide the result of differentiation by~$f_{uu}$.
Then we differentiate the obtained equation once more with respect to~$u$,
which gives
\begin{gather}\label{eq:NonlinKGEqsClassifyingEqDiffConsequence}
\tau\left(\frac{f_{ut}}{f_{uu}}\right)_u+\xi\left(\frac{f_{ux}}{f_{uu}}\right)_u+\eta^1=-(\tau_t+\xi_x)\left(\frac{f_u}{f_{uu}}\right)_u.
\end{gather}
We need to consider two cases depending on whether or not the expression $(f_u/f_{uu})_u$ vanishes.

Upon the condition $(f_u/f_{uu})_u\ne0$, we can rewrite the equation~\eqref{eq:NonlinKGEqsClassifyingEqDiffConsequence} as
\[
\tau_t+\xi_x=-\tau\frac{(f_{ut}/f_{uu})_u}{({f_u}/{f_{uu}})_u}
-\xi\frac{(f_{ux}/f_{uu})_u}{({f_u}/{f_{uu}})_u}-\eta^1\frac1{({f_u}/{f_{uu}})_u}.
\]
After fixing a value $u=u_0$, the last equation takes the form
$\tau_t+\xi_x=A(t,x)\tau+B(t,x)\xi+C(t,x)$,
where the coefficients~$A$, $B$ and~$C$ are obviously expressed via derivatives of~$f$ at $u=u_0$.
After additionally fixing a value $t=t_0$, we derive the first-order inhomogeneous linear ordinary differential equation
\[\xi_x=B(t_0,x)\xi-\tau_t(t_0)+A(t_0,x)\tau(t_0)+\eta^1C(t_0,x)\]
with respect to~$\xi$,
and its inhomogeneity involves three constant parameters~$\tau(t_0)$, $\tau_t(t_0)$ and~$\eta^1$.
The general solution of this equation can be represented in the form
\[\xi=C_1\xi^1(x)+\tau(t_0)\xi^2(x)+\tau_t(t_0)\xi^3(x)+\eta^1\xi^4(x),\]
where $\xi^k(x)$, $k=1,\dots,4$, are fixed smooth functions of~$x$,
and hence it is linearly parameterized at most four independent arbitrary constants.
In other words, $\dim\pi^x_*\mathfrak g_f\leqslant4$.
Similarly, after taking into account the derived expression for~$\xi$,
we fix a value~$x_0$ for~$x$ instead of~$t$ and obtain the first-order inhomogeneous linear ordinary differential equation
\begin{gather*}
\begin{split}
\tau_t={}&A(t,x_0)\tau+B(t,x_0)\big(C_1\xi^1(x_0)+\tau(t_0)\xi^2(x_0)+\tau_t(t_0)\xi^3(x_0)+\eta^1\xi^4(x_0)\big)\\
&{}-C^1\xi^1_x(x_0)-\tau(t_0)\xi^2_x(x_0)-\tau_t(t_0)\xi^3_x(x_0)-\eta^1\xi^4_x(x_0)+C(t,x_0)
\end{split}
\end{gather*}
with respect to~$\tau$, where the inhomogeneity involves the four constant parameters~$\tau(t_0)$, $\tau_t(t_0)$, $\eta^1$ and~$C_1$.
Since the value of~$\tau$ in the fixed point~$t=t_0$ is among these parameters,
the general solution of this equation merely involves these very parameters,
i.e., $\dim\pi^t_*\mathfrak g_f\leqslant4$.
Therefore, $\dim\pi^{t,x}_*\mathfrak g_f\leqslant4$ and thus, in view of item~(i) of the lemma, $\dim\mathfrak g_f\leqslant4$.

Now we study the second case $(f_u/f_{uu})_u=0$, i.e.,
$f=\gamma(t,x){\rm e}^{\alpha(t,x)u}+\beta(t,x)$, where $\alpha\gamma\ne0$
in view of the auxiliary inequality~$f_{uu}\ne0$ for the class~\eqref{eq:NonlinKGEqs}.
Hence $\gamma=1$ $(\!{}\bmod G^\sim)$.
Substituting the expression for~$f$ into the classifying equation~\eqref{eq:NonlinKGEqsClassifyingEq},
we collect the coefficients of the linearly independent functions $u{\rm e}^{\alpha u}$, ${\rm e}^{\alpha u}$ and~1,
treated as functions of~$u$, which leads to the system
\begin{gather}\label{eq:NonlinKGEqsClassifyingEqDiffConsequence2}
\tau\alpha_t+\xi\alpha_x+\eta^1\alpha=0,\quad
\alpha\eta^0=\eta^1-\tau_t-\xi_x,\quad
\tau\beta_t+\xi\beta_x=\big(\eta^1-\tau_t-\xi_x\big)\beta+\eta^0_{tx}.
\end{gather}
If $\alpha_x\ne0$, then \smash{$\xi=-\big((\tau\alpha_t+\eta^1\alpha)/\alpha_x\big)\big|_{t=t_0}$},
i.e., the component~$\xi$ involves at most two varying constants, $\tau(t_0)$ and~$\eta^1$.
Similarly, if $\alpha_t\ne0$, then \smash{$\tau=\big((\xi\alpha_x+\eta^1\alpha)/\alpha_t\big)\big|_{x=x_0}$},
i.e., the component~$\tau$ also involves at most two varying constants, $\xi(x_0)$ and~$\eta^1$.
Therefore, in the case $\alpha_t\alpha_x\ne0$, the components~$\tau$ and~$\xi$ in total involve
at most three different varying constants, and hence $\dim\mathfrak g_f\leqslant3$.
If exactly one of the derivatives~$\alpha_t$ and~$\alpha_x$ is nonzero,
then up to the equivalence transformation~$\mathscr I^0$, we can assume that $\alpha_x\ne0$ and~$\alpha_t=0$.
Then the substitution of the expressions for $\xi$ and~$\eta^0$ implied by
the first two equations of~\eqref{eq:NonlinKGEqsClassifyingEqDiffConsequence2},
$\xi=-\eta^1\alpha/\alpha_x$ and $\eta^0=(\eta^1-\tau_t-\xi_x)/\alpha$,
into the last equation of~\eqref{eq:NonlinKGEqsClassifyingEqDiffConsequence2}
leads to the second-order inhomogeneous linear ordinary differential equation
$(1/\alpha)_x\tau_{tt}-(\beta\tau)_t=\eta^1(\alpha\beta/\alpha_x)_x-\eta^1\beta$
with respect to~$\tau$,
where the leading coefficient $(1/\alpha)_x$ does not vanish,
and the inhomogeneity involves the single constant parameter~$\eta^1$.
Analogously to the above consideration, we obtain $\dim\mathfrak g_f\leqslant3$.
Otherwise, $\alpha_t=\alpha_x=0$, and hence $\alpha=\const$.
Since $\alpha\ne0$, we can set $\alpha=1$ by scalings of~$u$ and~$\mathscr I^u$.
The above system reduces to $\eta^1=0$, $\eta^0=-\tau_t-\xi_x$, $\tau\beta_t+\xi\beta_x=-(\tau_t+\xi_x)\beta$.
For $\beta\ne0$, we treat the last equation in the same way
as the equation~\eqref{eq:NonlinKGEqsClassifyingEqDiffConsequence}
and conclude that $\dim\mathfrak g_f\leqslant3$ in this case.
If $\beta=0$, then the equation~$K_f$ coincides with the Liouville equation $u_{tx}={\rm e}^u$,
whose maximal Lie invariance algebra is $\mathfrak g_f=\langle\tau(t)\p_t+\xi(x)\p_x-(\tau_t(t)+\xi_x(x))\p_u\rangle$,
where the components~$\tau$ and~$\xi$ run through the sets of smooth functions of~$t$ or~$x$, respectively.
Therefore, $\dim\mathfrak g_f=\infty$,
and up to the $G^\sim$-inequivalence, this is the only case with the infinite dimension of~$\dim\mathfrak g_f$,
which proves item (ii) of the lemma.

For all the other equations from the class~\eqref{eq:NonlinKGEqs}, we derive that
the dimensions of the corresponding maximal Lie invariance algebras do not exceed four,
which implies item (iii) of the lemma.
\end{proof}

\begin{corollary}\label{cor:NonlinKGEqsProjectionDim}
If $f\ne{\rm e}^u$ $(\!{}\bmod G^\sim)$,
then $\dim\pi^t_*\mathfrak g_f\leqslant3$ and $\dim\pi^x_*\mathfrak g_f\leqslant3$.
\end{corollary}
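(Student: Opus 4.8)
The plan is to obtain both bounds at once by pushing $\mathfrak g_f$ forward to the $t$-line and the $x$-line and then invoking the Lie theorem. By Proposition~\ref{pro:NonlinKGEqsMIA}, every element of $\mathfrak g_f$ has the form $\tau(t)\p_t+\xi(x)\p_x+(\eta^1u+\eta^0(t,x))\p_u$, whose $t$-component $\tau$ depends on~$t$ alone and whose $x$-component $\xi$ depends on~$x$ alone. Consequently these vector fields are projectable under both $\pi^t$ and $\pi^x$, so that $\pi^t_*$ and $\pi^x_*$ are well defined on $\mathfrak g_f$ and, being pushforwards along the projections of $\pi^t$- and $\pi^x$-related vector fields, are Lie algebra homomorphisms; explicitly $\pi^t_*\big(\tau\p_t+\xi\p_x+(\eta^1u+\eta^0)\p_u\big)=\tau(t)\p_t$ and $\pi^x_*\big(\tau\p_t+\xi\p_x+(\eta^1u+\eta^0)\p_u\big)=\xi(x)\p_x$. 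Hence $\pi^t_*\mathfrak g_f$ and $\pi^x_*\mathfrak g_f$ are Lie algebras of vector fields on the $t$-line and the $x$-line, respectively.

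Next I would use the hypothesis $f\ne{\rm e}^u\ (\!{}\bmod G^\sim)$ together with item~(iii) of Lemma~\ref{lem:NonlinKGEqsConditionsForAppropriateSubalgebras} to conclude that $\dim\mathfrak g_f\leqslant4<\infty$. Since the dimension of a homomorphic image cannot exceed that of its source, both $\pi^t_*\mathfrak g_f$ and $\pi^x_*\mathfrak g_f$ are finite-dimensional. They are therefore finite-dimensional realizations of Lie algebras by vector fields on the line, and the Lie theorem asserts that, up to local diffeomorphisms, any such realization is one of $\{0\}$, $\langle\p_t\rangle$, $\langle\p_t,t\p_t\rangle$, $\langle\p_t,t\p_t,t^2\p_t\rangle$, all of dimension at most three. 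This immediately gives $\dim\pi^t_*\mathfrak g_f\leqslant3$ and $\dim\pi^x_*\mathfrak g_f\leqslant3$.

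The only genuinely delicate point is the finiteness required to invoke the Lie theorem, which classifies \emph{finite-dimensional} realizations: it is essential that $\mathfrak g_f$ itself be finite-dimensional, and this is precisely why the exceptional Liouville case $f={\rm e}^u$ must be excluded from the statement. Once finiteness is secured through Lemma~\ref{lem:NonlinKGEqsConditionsForAppropriateSubalgebras}(iii), the remainder is purely structural and needs no further computation, in contrast with the explicit bounds on $\dim\pi^t_*\mathfrak g_f$ and $\dim\pi^x_*\mathfrak g_f$ established inside the proof of that lemma; the corollary thus records the sharper reason, via the Lie theorem, why those projections can have dimension at most three.
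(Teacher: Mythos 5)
Your argument is correct and coincides with the paper's own proof: both deduce $\dim\mathfrak g_f\leqslant4$ from Lemma~\ref{lem:NonlinKGEqsConditionsForAppropriateSubalgebras}(iii), note that the projections $\pi^t_*\mathfrak g_f$ and $\pi^x_*\mathfrak g_f$ are then finite-dimensional Lie algebras of vector fields on the line, and conclude via the Lie theorem. The extra remarks on projectability and the homomorphism property of the pushforwards are sound but already implicit in the paper's setup.
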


\begin{proof}
In view of item (iii) of Lemma~\ref{lem:NonlinKGEqsConditionsForAppropriateSubalgebras},
we have $\dim\mathfrak g_f\leqslant4$ if $f\ne{\rm e}^u$ $(\!{}\bmod G^\sim)$.
Then $\dim\pi^t_*\mathfrak g_f\leqslant\dim\mathfrak g_f\leqslant4$ and $\dim\pi^x_*\mathfrak g_f\leqslant\dim\mathfrak g_f\leqslant4$,
i.e., $\pi^t_*\mathfrak g_f$ and~$\pi^x_*\mathfrak g_f$
are finite-dimensional Lie algebras of vector fields on the $t$- and the $x$-lines, respectively.
Then the required inequalities directly follow from the Lie theorem on such algebras.
\end{proof}

\section{Result of group classification}
\label{sec:NonlinKGEqsResults}

The main result of the paper is the following theorem.

\begin{theorem}\label{thm:NonlinKGEqsGroupClassification}
A complete list of $G^\sim$-inequivalent (maximal) Lie-symmetry extensions
in the class~\eqref{eq:NonlinKGEqs} is exhausted by the following cases:
{\rm
\begin{enumerate}\itemsep=0ex\setcounter{enumi}{-1}
\item\label{EKGcase0} 
General case $f=f(t,x,u)$: \ $\{0\}$;
\item\label{EKGcase1} 
$f=\hat f(x,u)$: \ $\langle\p_t\rangle$;
\item\label{EKGcase2} 
$f=\hat f(x-t,u)$: \ $\langle\p_t+\p_x\rangle$;
\item\label{EKGcase3} 
$f={\rm e}^t\hat f(x,{\rm e}^{-t}u)$: \ $\langle\p_t+u\p_u\rangle$;
\item\label{EKGcase4} 
$f={\rm e}^{x+t}\hat f(x-t,{\rm e}^{-x-t}u)$: \ $\langle\p_t+\p_x+2u\p_u\rangle$;
\item\label{EKGcase5} 
$f={\rm e}^t\hat f({\rm e}^{-t}u)$: \ $\langle\p_t+u\p_u,\,\p_x\rangle$;
\item\label{EKGcase6} 
$f={\rm e}^{x+t}\hat f({\rm e}^{-x-t}u)$: \ $\langle\p_t+u\p_u,\,\p_x+u\p_u\rangle$;
\item\label{EKGcase7} 
$f=|x-t|^{-q-2}\hat f(|x-t|^qu)$, $q\ne0$: \ $\langle\p_t+\p_x,\,t\p_t+x\p_x-qu\p_u\rangle$;
\item\label{EKGcase8} 
$f=|x|^{-q-2}\hat f(|x|^qu)$, $q\ne0$: \ $\langle\p_t,\,t\p_t+x\p_x-qu\p_u\rangle$;
\item\label{EKGcase9} 
$f=\hat f(u)$: \ $\langle\p_t,\,\p_x,\,t\p_t-x\p_x\rangle$;
\item\label{EKGcase10} 
$f=(x-t)^{-2}\hat f(u)$: \ $\langle\p_t+\p_x,\,t\p_t+x\p_x,\,t^2\p_t+x^2\p_x\rangle$;
\item\label{EKGcase11} 
$f={\rm e}^{u/x}$: \ $\langle\p_t,\,t\p_t-x\p_u,\,x\p_x+u\p_u\rangle$;
\item\label{EKGcase12} 
$f=|u|^p u$, \ $p\ne-1,0$: \ $\langle\p_t,\,\p_x,\,t\p_t-x\p_x,\,-pt\p_t+u\p_u\rangle$;
\item\label{EKGcase13} 
$f={\rm e}^u$: \ $\langle\tau(t)\p_t+\xi(x)\p_x-(\tau_t(t)+\xi_x(x))\p_u\rangle$.
\end{enumerate}
}
\noprint{
\newcounter{case}
\newcommand{\LScase}{\protect\refstepcounter{case}\makebox[3.5ex][r]{\rm\thecase.\ }}
\setcounter{case}{-1}
\begin{gather*}
\LScase\label{EKGcase0} 
\mbox{General case } f=f(t,x,u)\colon \ \{0\};\\
\LScase\label{EKGcase1} 
f=\hat f(x,u)\colon \ \langle\p_t\rangle;\\
\LScase\label{EKGcase2} 
f=\hat f(x-t,u)\colon \ \langle\p_t+\p_x\rangle;\\
\LScase\label{EKGcase3} 
f={\rm e}^t\hat f(x,{\rm e}^{-t}u)\colon \ \langle\p_t+u\p_u\rangle;\\
\LScase\label{EKGcase4} 
f={\rm e}^{x+t}\hat f(x-t,{\rm e}^{-x-t}u)\colon \ \langle\p_t+\p_x+2u\p_u\rangle;\\
\LScase\label{EKGcase5} 
f={\rm e}^t\hat f({\rm e}^{-t}u)\colon \ \langle\p_t+u\p_u,\,\p_x\rangle;\\
\LScase\label{EKGcase6} 
f={\rm e}^{x+t}\hat f({\rm e}^{-x-t}u)\colon \ \langle\p_t+u\p_u,\,\p_x+u\p_u\rangle;\\
\LScase\label{EKGcase7} 
f=|x-t|^{-q-2}\hat f(|x-t|^qu),\ q\ne0\colon \ \langle\p_t+\p_x,\,t\p_t+x\p_x-qu\p_u\rangle;\\
\LScase\label{EKGcase8} 
f=|x|^{-q-2}\hat f(|x|^qu),\ q\ne0\colon \ \langle\p_t,\,t\p_t+x\p_x-qu\p_u\rangle;\\
\LScase\label{EKGcase9} 
f=\hat f(u)\colon \ \langle\p_t,\,\p_x,\,t\p_t-x\p_x\rangle;\\
\LScase\label{EKGcase10} 
f=(x-t)^{-2}\hat f(u)\colon \ \langle\p_t+\p_x,\,t\p_t+x\p_x,\,t^2\p_t+x^2\p_x\rangle;\\
\LScase\label{EKGcase11} 
f={\rm e}^{u/x}\colon \ \langle\p_t,\,t\p_t-x\p_u,\,x\p_x+u\p_u\rangle;\\
\LScase\label{EKGcase12} 
f=|u|^p u, \ p\ne-1,0\colon \ \langle\p_t,\,\p_x,\,t\p_t-x\p_x,\,-pt\p_t+u\p_u\rangle;\\
\LScase\label{EKGcase13} 
f={\rm e}^u\colon \ \langle\tau(t)\p_t+\xi(x)\p_x-(\tau_t(t)+\xi_x(x))\p_u\rangle.
\end{gather*}
}
\noindent
Here $\hat f$ is an arbitrary smooth function of its arguments
whose second derivative with respect to the argument involving~$u$ is nonzero,
and $q$ and~$p$ are arbitrary constants that satisfy the conditions indicated in the corresponding cases.
In Case~\ref{EKGcase13}, the components~$\tau$ and~$\xi$
run through the sets of smooth functions of~$t$ or~$x$, respectively.
\end{theorem}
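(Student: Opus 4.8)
The plan is to apply the algebraic method of group classification, which is justified by the normalization of the class established in Lemma~\ref{lem:NonlinKGEqsGsim}: the classification of Lie-symmetry extensions reduces to the classification of appropriate subalgebras $\mathfrak s=\mathfrak g_f$ of $\mathfrak g_\spanindex=\varpi_*\mathfrak g^\sim$ up to $\varpi_*G^\sim$-equivalence. The case $f={\rm e}^u$ is infinite-dimensional and, by item~(ii) of Lemma~\ref{lem:NonlinKGEqsConditionsForAppropriateSubalgebras}, is the unique such case up to $G^\sim$-equivalence; it yields Case~\ref{EKGcase13} and is set aside. For every other $f$ we have $\dim\mathfrak s\leqslant4$ by item~(iii), while item~(i) gives $\mathfrak s\cap\langle I,Z(\eta^0)\rangle=\{0\}$, so the projection $\pi^{t,x}_*$ restricts to a Lie-algebra isomorphism $\mathfrak s\cong\pi^{t,x}_*\mathfrak s$. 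Consequently each generator of $\mathfrak s$ is recovered from its $(\tau,\xi)$-part $\tau(t)\p_t+\xi(x)\p_x$ together with a uniquely determined $u$-component $(\eta^1u+\eta^0)\p_u$, and the whole classification is driven by the possible $(t,x)$-projections.

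First I would exploit the twofold projectability. By Corollary~\ref{cor:NonlinKGEqsProjectionDim} the projections $\pi^t_*\mathfrak s$ and $\pi^x_*\mathfrak s$ are finite-dimensional Lie algebras of vector fields on the $t$- and $x$-lines with dimensions at most three. Since the elementary equivalence transformations $\mathscr D^t(T)$ and $\mathscr D^x(X)$ act as \emph{arbitrary} and \emph{independent} reparametrizations of these two lines, the Lie theorem lets me simultaneously reduce $\pi^t_*\mathfrak s$ and $\pi^x_*\mathfrak s$ to canonical forms drawn from the list $\{0\}$, $\langle\p\rangle$, $\langle\p,s\p\rangle$, $\langle\p,s\p,s^2\p\rangle$. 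The combined projection $\pi^{t,x}_*\mathfrak s$ is then a subdirect product of the two canonicalized line algebras, i.e.\ a subalgebra of $\pi^t_*\mathfrak s\oplus\pi^x_*\mathfrak s$ projecting onto each summand; I would enumerate all such subdirect products by a Goursat-type argument, carefully tracking which reparametrization freedom has already been spent in fixing the canonical forms on each line.

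For each candidate $(t,x)$-projection I would then restore the $u$-components and reconstruct the arbitrary element. Attaching $(\eta^1u+\eta^0)\p_u$ to each generator, I impose two requirements: closure of $\mathfrak s$ under the Lie bracket, which couples the constants $\eta^1$ and the functions $\eta^0$ across generators, and solvability of the classifying equation~\eqref{eq:NonlinKGEqsClassifyingEq} for an $f$ with $f_{uu}\ne0$. The residual equivalence transformations $\mathscr Z(U^0)$ and $\mathscr D^u(C)$, whose action on $D^t(\tau)$, $D^x(\xi)$ and $I$ is recorded above, are used to absorb the shift freedom in $\eta^0$ and the scaling freedom in $\eta^1$, bringing each algebra to the canonical form displayed in the theorem. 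Solving~\eqref{eq:NonlinKGEqsClassifyingEq} for the resulting generators yields the explicit forms of $f$ listed in Cases~\ref{EKGcase0}--\ref{EKGcase12} and simultaneously confirms maximality, namely that $\mathfrak g_f$ equals $\mathfrak s$ rather than a larger algebra.

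The last step is to verify that the listed cases are pairwise $G^\sim$-inequivalent and exhaustive, for which I would use $G^\sim$-invariant integer characteristics of $\mathfrak s$ such as $\dim\mathfrak s$, $\dim\pi^t_*\mathfrak s$, $\dim\pi^x_*\mathfrak s$ and the dimensions of their common and intersection parts. The main obstacle is the combinatorial bookkeeping of the subdirect-product structure: once the $t$- and $x$-line algebras are independently canonicalized, one must determine precisely which pairings are realizable and, for each, which $u$-components survive the closure and nonlinearity constraints. In particular the scaling generators on the two lines are linked through the single constant $\eta^1$ appearing in~\eqref{eq:NonlinKGEqsClassifyingEq}, and it is this coupling that forces the one-parameter families with parameters $q$ and $p$ and separates the superficially similar Cases~\ref{EKGcase7}--\ref{EKGcase10} and~\ref{EKGcase12}.
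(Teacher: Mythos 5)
Your proposal is correct and follows essentially the same route as the paper: reduction to appropriate subalgebras of $\varpi_*\mathfrak g^\sim$ via normalization, exclusion of the Liouville case, twofold application of the Lie theorem to the $t$- and $x$-projections, reconstruction of the $u$-components from the commutation relations and the classifying equation, and discrimination of cases by $G^\sim$-invariant integer characteristics. The only difference is organizational — the paper structures the enumeration of compatible pairs of line realizations by descending on the triple $(m,n,k)$ rather than by an explicit Goursat-type subdirect-product argument, but the content is the same (e.g., the simplicity of $\mathfrak{sl}(2,\mathbb R)$ forcing the second projection to be faithful or zero is exactly the Goursat observation in the $m=3$ case).
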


The proof of Theorem~\ref{thm:NonlinKGEqsGroupClassification} is presented in the next Section~\ref{sec:NonlinKGEqsProof}.

\begin{remark}\label{rem:NonlinKGEqsOnMaxLieSymExtensions}
There are two ways of interpreting the classification cases listed in Theorem~\ref{thm:NonlinKGEqsGroupClassification},
in a \emph{weak} sense and in a \emph{strong} sense.
Within the framework of the weak group classification,
we consider the entire subclass~$\mathcal K_N$
of the equations from the class~$\mathcal K$ with the form of~$f$ presented in Case~$N$,
\[
N\in\Gamma:=\{\ref{EKGcase0},\dots,\ref{EKGcase6},\ref{EKGcase7}_q,\ref{EKGcase8}_q,\ref{EKGcase9},
\ref{EKGcase10},\ref{EKGcase11},\ref{EKGcase12}_p,\ref{EKGcase13}\mid q\ne0,\, p\ne-1,0\},
\]
and then the corresponding algebra is the kernel Lie invariance algebra~$\mathfrak g^\cap_N$
of the equations from the subclass~$\mathcal K_N$.
Here we use the notation $\ref{EKGcase7}_q:=(\ref{EKGcase7},q)$, $\ref{EKGcase8}_q:=(\ref{EKGcase8},q)$ and $\ref{EKGcase12}_p:=(\ref{EKGcase12},p)$.
We refer to Cases~\ref{EKGcase7}, \ref{EKGcase8} and~\ref{EKGcase12} as to collections of
Cases~$\ref{EKGcase7}_q$, $\ref{EKGcase8}_q$ and $\ref{EKGcase12}_p$ with fixed values of~$q$ or~$p$, respectively.
It is obvious that $\mathcal K_0=\mathcal K$.
Under the strong group classification,
Case~$N$ includes only the equations from the subclass~$\mathcal K_N$ for which $\mathfrak g_f=\mathfrak g^\cap_N$.
Thus, the discussion after the equation~\eqref{eq:NonlinKGEqsSpan} implies that $\mathfrak g_f=\{0\}$
and hence $K_f$ belongs to strong Case~\ref{EKGcase0}
if and only if $f$ does not satisfy the classifying equation~\eqref{eq:NonlinKGEqsClassifyingEq}
for any constant~$\eta^1$ and
any smooth functions $\tau=\tau(t)$, $\xi=\xi(x)$ and~$\eta^0=\eta^0(t,x)$ with $(\tau,\xi)\ne(0,0)$.
For \mbox{Cases~\ref{EKGcase1}--\ref{EKGcase9}} to merely collect maximal Lie-symmetry extensions,
the parameter function~$\hat f$ should take only values
for which the associated values of the arbitrary element~$f$
are not $G^\sim$-equivalent to ones from the other listed cases
with maximal Lie invariance algebras of greater dimensions.
In other words, a value of the parameter function~$\hat f$ leads to a maximal Lie-symmetry extension
if and only if it satisfies no equation among those associated with the corresponding case in
Proposition~\ref{pro:FurtherLieSymExtestionsForCases1-4} or~\ref{pro:FurtherLieSymExtestionsForCases5-9} below.
Case~\ref{EKGcase10} is special since $\mathfrak g_f=\mathfrak g^\cap_{10}$ for any $K_f\in\mathcal K_{10}$,
and thus it does not depend on interpreting the group classification.
We will mostly omit the attributes ``weak'' and ``strong'', explicitly indicating all places where the weak interpretation is used.
\end{remark}

\begin{remark}\label{rem:NonlinKGEqsOnEquivFormOfClassificationCases}
Cases~\ref{EKGcase3}--\ref{EKGcase6} and~\ref{EKGcase8} can be replaced by $G^\sim$-equivalent cases,
for each of which the arbitrary element~$f$ just runs through the set of arbitrary smooth functions of either one or two arguments
without an additional multiplier:
\begin{gather*}
3'.\ f=\hat f\big(x,t^{-1}u\big)\colon \ \ \mathfrak g_{f}=\langle t\p_t+u\p_u\rangle;\\[.5ex]
4'.\ f= \hat f\big(t^{-1}x,(tx)^{-1}u\big)\colon \ \ \mathfrak g_f=\langle t\p_t+x\p_x+2u\p_u\rangle;\\[.5ex]
5'.\ f= \hat f\big(t^{-1}u\big)\colon \ \ \mathfrak g_f=\langle t\p_t+u\p_u,\,\p_x\rangle;\\[.5ex]
6'.\ f= \hat f\big((tx)^{-1}u\big)\colon \ \ \mathfrak g_f=\langle t\p_t+u\p_u,\,x\p_x+u\p_u\rangle;\\[.5ex]
8'a.\ f= \hat f\big(|x|^{q'}u\big), \ q'\ne0,-1\colon \ \ \mathfrak g_f=\langle\p_t,\,(q'+1)t\p_t-x\p_x+q'u\p_u\rangle;\\[.5ex]
8'b.\ f= \hat f\big({\rm e}^{-x}u\big)\colon \ \ \mathfrak g_f=\langle\p_t,\,t\p_t+\p_x+u\p_u\rangle.
\end{gather*}
Here Case~\ref{EKGcase8} splits into two subcases, 8$'$a and~8$'$b,
respectively associated with $q\ne0,-1$ and $q=-1$.
\end{remark}

\begin{remark}\label{rem:NonlinKGEqsOnTransitionToStandardNonlinKGEqs}
The group classification of the class~\eqref{eq:NonlinKGEqs} can be easily mapped
via the point transformation $\check t=x+t$, $\check x=x-t$, $\check u=u$, $\check f=f$
to the group classification of nonlinear Klein--Gordon equations in the standard spacetime variables,
$\check u_{\check t\check t}-\check u_{\check x\check x}=\check f(\check t,\check x,\check u)$.
\end{remark}

\begin{remark}\label{rem:NonlinKGEqsOnMinTupleOfDistingueshingInvIntegers}
We found eight triples of $G^\sim$-invariant integer characteristics of subalgebras~$\mathfrak s$ of~$\mathfrak g_\spanindex$,
which suffice for distinguishing the $G^\sim$-inequivalent cases of Lie-symmetry extensions from each other,
see Remark~\ref{rem:IdentifyingInvValues} below.
The most significant among these eight triples is $(r_3,j_1,r_2)$,
where
\begin{gather*}
r_3=r_3(\mathfrak s):=3-\min\big\{\dim\big\langle D^t(\tau),D^x(\xi )\big\rangle\mid \exists\,\eta^0\colon D^t(\tau)+D^x(\xi )+I+Z(\eta^0)\in\mathfrak s\big\},\\[.5ex]
j_1=j_1(\mathfrak s):=\max(\dim\mathfrak s^1,\dim\mathfrak s^2),\\[.5ex]
r_2=r_2(\mathfrak s):=\min(\dim\pi^t_*\mathfrak s^{12},\dim\pi^x_*\mathfrak s^{12})
\end{gather*}
with 
$\mathfrak s^1   :=\mathfrak s\cap\big\langle D^t(\tau),Z(\eta^0)\big\rangle$,
$\mathfrak s^2   :=\mathfrak s\cap\big\langle D^x(\xi ),Z(\eta^0)\big\rangle$ and 
$\mathfrak s^{12}:=\mathfrak s\cap\big\langle D^t(\tau),D^x(\xi),Z(\eta^0)\big\rangle$.
\end{remark}

In the next four remarks, we discuss weak Lie-symmetry extensions, omitting the attribute ``weak''. 

\begin{remark}\label{rem:NonlinKGEqsOnOrderingSuccessiveLieSymExtensions}
We can partially order the collection of Lie-symmetry extensions within the class~\eqref{eq:NonlinKGEqs}.
Here $\mbox{Case\ }N\prec\mbox{Case\ }\bar N$ with $N,\bar N\in\Gamma$ means
that Case~$\bar N$ is a further Lie-symmetry extension of Case~$N$ modulo the $G^\sim$-equivalence,
i.e., there exists $\mathscr T\in G^\sim$ such that $\mathfrak s\varsubsetneq(\varpi_*\mathscr T)\bar{\mathfrak s}$,
where $\mathfrak s$ and~$\bar{\mathfrak s}$ are subalgebras of~$\mathfrak g_\spanindex$
associated with Cases~$N$ and~$\bar N$, respectively.
For the corresponding subclasses~$\mathcal K_N$ and~$\mathcal K_{\bar N}$,
we have the inverse inclusion, $\mathcal K_N\varsupsetneq(\varpi_*\mathscr T)\mathcal K_{\bar N}$
with the same $\mathscr T\in G^\sim$.
\end{remark}

\begin{remark}\label{rem:NonlinKGEqsOnMinTupleOfDistinguishingSuccessiveLieSymExtensions}
We assign two more $G^\sim$-invariant integer characteristics of subalgebras~$\mathfrak s$ of~$\mathfrak g_\spanindex$
to those from Remark~\ref{rem:NonlinKGEqsOnMinTupleOfDistingueshingInvIntegers},
\[
n=n(\mathfrak s):=\dim\mathfrak s,\quad
k=k(\mathfrak s):=\min(\dim\pi^t_*\mathfrak s,\dim\pi^x_*\mathfrak s),
\]
see Remark~\ref{rem:IdentifyingInvValues} below.
Thus, we consider tuples of the form $(n,r_3,r_2,j_1,k)$,
where the entries are ordered according to their importance,
see the beginning of Section~\ref{sec:SuccessiveLie-SymExtensions}.
We partially order the set of these tuples,
assuming $(n,r_3,r_2,j_1,k)<(\bar n,\bar r_3,\bar r_2,\bar j_1,\bar k)$
if $n<\bar n$, $r_3\leqslant\bar r_3$, $r_2\leqslant\bar r_2$, $j_1\leqslant\bar j_1$ and $k\leqslant\bar k$.
Let $(n,r_3,r_2,j_1,k)$ and $(\bar n,\bar r_3,\bar r_2,\bar j_1,\bar k)$
correspond to the algebras~$\mathfrak s$ and~$\bar{\mathfrak s}$ of Cases~$N$ and~$\bar N$, respectively.
Then the relation $\mbox{Case\ }N\prec\mbox{Case\ }\bar N$ is equivalent
to the relation $(n,r_3,r_2,j_1,k)<(\bar n,\bar r_3,\bar r_2,\bar j_1,\bar k)$.
\end{remark}

\begin{remark}\label{rem:NonlinKGEqsOnHasseDiagram}   
The Hasse diagram for the partially ordered set of $G^\sim$-inequivalent Lie-symmetry extensions
within the class~\eqref{eq:NonlinKGEqs} 
is presented in Figure~\ref{fig:HasseDiagram}.
According to the rule of constructing Hasse diagrams,
the arrows in Figure~\ref{fig:HasseDiagram} depict only the direct Lie-symmetry extensions.
We say that the pair $(\mbox{Case\ }N,\mbox{Case\ }\bar N)$ of classification cases
with $\mbox{Case\ }N\prec\mbox{Case\ }\bar N$
presents a \emph{direct Lie-symmetry extension} if there does not exist $\check N\in\Gamma$ such that
$
\mbox{Case\ }N\prec\mbox{Case\ }\check N\prec\mbox{Case\ }\bar N.
$
\begin{figure}[t!]
\centering
\includegraphics[width=.8\linewidth]{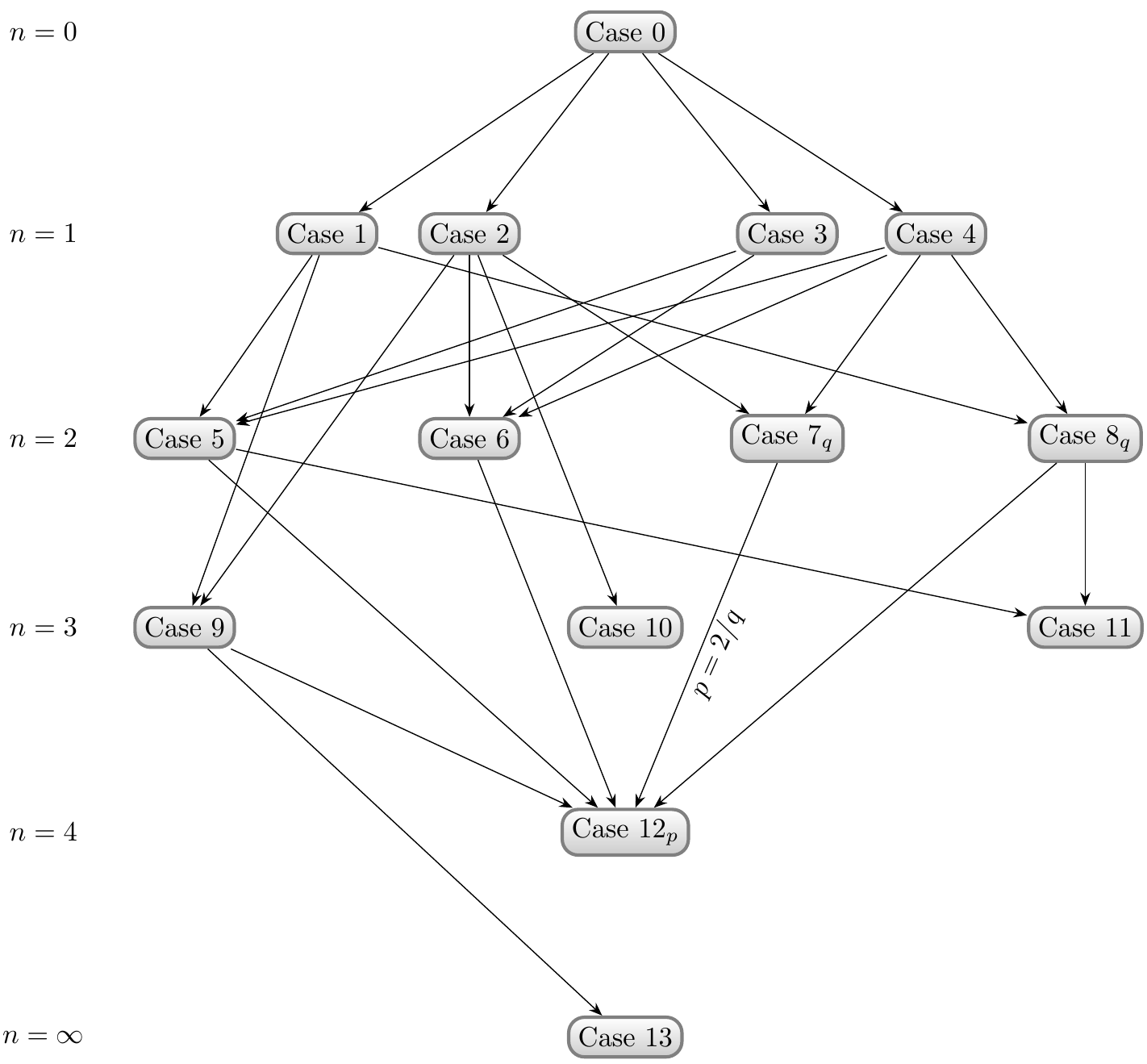}
\caption{The Hasse diagram of Lie-symmetry extensions within the class~\eqref{eq:NonlinKGEqs}.}\label{fig:HasseDiagram}
\end{figure}
\end{remark}

\noprint{
\begin{figure}[th!]\centering \small
\begin{tikzpicture}[scale=1, node distance=5mm,
terminal/.style={
rectangle,minimum size=5mm,rounded corners=2mm,
very thick,draw=black!50,
top color=white,bottom color=black!20
},point/.style={circle,inner sep=0pt,minimum size=2pt,fill=red},
skip loop/.style={to path={-- ++(0,#1) -| (\tikztotarget)}}]
\matrix[row sep=20mm,column sep=5mm] {
\node (dim=0) {$n=0$};  &&  & &\node (Case0) [terminal] {Case 0};&   &  \\
\node (dim=1) {$n=1$};  && \node (Case1) [terminal] {Case 1}; & \node (Case2) [terminal] {Case 2}; && \node (Case3) [terminal] {Case 3};  & \node (Case4) [terminal] {Case 4}; \\
\node (dim=2) {$n=2$}; & \node (Case5) [terminal] {Case 5}; && \node (Case6) [terminal] {Case 6}; &&
\node (Case7) [terminal] {Case 7$_q$}; &&  \node (Case8) [terminal] {Case 8$_q$}; \\[-2mm]
\node (dim=3) {$n=3$}; &  \node (Case9) [terminal] {Case 9}; &&& \node (Case10) [terminal] {Case 10}; &&&
\node (Case11) [terminal] {Case 11};\\
\node (dim=4) {$n=4$};   && && \node (Case12) [terminal] {Case 12$_p$};\\
\node (dim=infty) {$n=\infty$};  & &&& \node (Case13) [terminal] {Case 13};\\
};
\path (Case0) edge[thin,-Stealth] (Case1);
\path (Case0) edge[thin,-Stealth] (Case2);
\path (Case0) edge[thin,-Stealth] (Case3);
\path (Case0) edge[thin,-Stealth] (Case4);
\path (Case1) edge[thin,-Stealth] (Case5);
\path (Case1) edge[thin,-Stealth] (Case8);
\path (Case1) edge[thin,-Stealth] (Case9);
\path (Case2) edge[thin,-Stealth] (Case6);
\path (Case2) edge[thin,-Stealth] (Case7);
\path (Case2) edge[thin,-Stealth] (Case9);
\path (Case2) edge[thin,-Stealth] (Case10);
\path (Case3) edge[thin,-Stealth] (Case5);
\path (Case3) edge[thin,-Stealth] (Case6);
\path (Case4) edge[thin,-Stealth] (Case5);
\path (Case4) edge[thin,-Stealth] (Case6);
\path (Case4) edge[thin,-Stealth] (Case7);
\path (Case4) edge[thin,-Stealth] (Case8);
\path (Case5) edge[thin,-Stealth] (Case11);
\path (Case5) edge[thin,-Stealth] (Case12);
\path (Case6) edge[thin,-Stealth] (Case12);
\path (Case7) edge[thin,-Stealth] (Case12);
\path (Case8) edge[thin,-Stealth] (Case11);
\path (Case8) edge[thin,-Stealth] (Case12);
\path (Case9) edge[thin,-Stealth] (Case12);
\path (Case9) edge[thin,-Stealth] (Case13);
\node[rotate=66] at (1.87,-1.5) {$p=2/q$};
\end{tikzpicture}
\caption{Hasse diagram.}\label{HasseDiagram2}
\end{figure}
}

\begin{remark}\label{rem:NonlinKGEqsOnLimitProcessesBetweenClassificationCases}
There are many pairs among $\{(\mbox{Case\ }N,\mbox{Case\ }\bar N),\,N,\bar N\in\Gamma\}$, 
where the pair components are related to each other via limit processes 
supplemented, if necessary, with preliminary or subsequent equivalence transformations.%
\footnote{%
Such a limit process was considered for the class of nonlinear diffusion equations 
in~\cite{blum1988a} and~\cite[p.~181]{blum1989A}, where the exponential nonlinearity was excluded 
from the classification list as a limit of power nonlinearities. 
A theory of such limit processes and a number of their examples were presented in \cite{ivan2007c,popo2005a,vane2012b}.
}
All these limits lead to contractions $\mathfrak g^\cap_N\to\mathfrak h\subseteq\mathfrak g^\cap_{\bar N}$ 
as contractions of realizations of abstract Lie algebras by vector fields, 
where $\mathfrak h$ is a subalgebra of~$\mathfrak g^\cap_{\bar N}$, 
which often coincides with the entire~$\mathfrak g^\cap_{\bar N}$. 
The most obvious limit process is 
\[
\mbox{Case~\ref{EKGcase7}}_q\to\mbox{Case~\ref{EKGcase10}}\quad\mbox{as}\quad q\to0 
\quad\mbox{with}\quad \mathfrak g^\cap_{7,1}\simeq\mathfrak g^\cap_{7,q}\to\langle\p_t+\p_x,\,t\p_t+x\p_x\rangle\subsetneq\mathfrak g^\cap_{10}.
\]
An example with a necessary subsequent equivalence transformation is 
\begin{gather*}
\mbox{Case~\ref{EKGcase8}}_q\to\mathscr D^x(-x^{-1})_*(\mbox{Case~\ref{EKGcase9}})\quad\mbox{as}\quad q\to0 \\
\qquad\mbox{with}\quad \mathfrak g^\cap_{8,1}\simeq\mathfrak g^\cap_{8,q}\to\langle\p_t,\,t\p_t+x\p_x\rangle\subsetneq\big(\varpi_*D^x(-x^{-1})\big)_*\mathfrak g^\cap_9.
\end{gather*}
The limit processes 
\begin{gather*}
\mbox{Case~\ref{EKGcase2}}\to\mbox{Case~\ref{EKGcase1}},\quad
\mbox{Case~\ref{EKGcase4}}\to\mbox{Case~\ref{EKGcase2}},\quad
\mbox{Case~\ref{EKGcase6}}\to\mbox{Case~\ref{EKGcase5}},\quad
\mbox{Case~\ref{EKGcase7}}\to\mbox{Case~\ref{EKGcase8}},\\
\mbox{Case~\ref{EKGcase10}}\to\mathscr D^x(-x^{-1})_*(\mbox{Case~\ref{EKGcase9}})
\end{gather*}
as $q\to 0$
with contractions between the entire corresponding Lie algebras
are realized via preliminarily introducing the parameter~$q$ using a scaling equivalence transformation. 
For instance, for each value of the parameter function~$\hat f$ in Case~\ref{EKGcase2}, 
we take the family $f^q=q^{-1}\hat f(x-t,u)$ with $q\ne0$ and act on each equation~$K_{f^q}$ by 
the equivalence transformation $\mathscr D^t(q^{-1}t)$. 
For each $q\ne0$, we obtain the equation~$K_{\tilde f^q}$ with $\tilde f^q=\hat f(\tilde x-q\tilde t,\tilde u)$, 
which is invariant with respect to the algebra $\langle\p_{\tilde t}+q\p_{\tilde x}\rangle$.
The limit as $q\to 0$ then gives Case~\ref{EKGcase1}.
One more set of limit processes 
\[
\mbox{Case~\ref{EKGcase7}}\to\mbox{Case~\ref{EKGcase6}},\quad
\mbox{Case~\ref{EKGcase8}}\to(\mathscr I^t\circ\mathscr I^0)_*(\mbox{Case~\ref{EKGcase5}}),\quad
\mbox{Case~\ref{EKGcase12}}\to\mbox{Case~\ref{EKGcase13}}
\]
is based on the remarkable limit $(1+q^{-1})^q\to{\rm e}$ as $q\to+\infty$.
The first two limit processes again give the contractions between the entire corresponding Lie algebras, 
whereas the last limit process results in a contraction to a proper subalgebra.
We describe in detail the second limit process.
For each value of the parameter function~$\hat f$ in Case~\ref{EKGcase8}, 
we act on the equation~$K_{f^q}$ with $f^q=q|x|^{-q-2}\hat f(|x|^qu)$, $q\in\mathbb R_{\ne0}$, 
by the  equivalence transformation $\mathscr D^x\big(q(x-1)\big)$, 
which leads to the equation~\smash{$K_{\tilde f^q}$} with $\tilde f^q=|1+\tilde x/q|^{-q-2}\hat f(|1+\tilde x/q|^q\tilde u)$.
The Lie invariance algebra of~\smash{$K_{\tilde f^q}$} is $\langle q^{-1}\tilde t\p_{\tilde t}+(1+q^{-1}\tilde x)\p_{\tilde x}-\tilde u\p_{\tilde u}\rangle$.
Proceeding to the limit as $q\to+\infty$, we get $(\mathscr I^t\circ\mathscr I^0)_*(\mbox{Case~\ref{EKGcase5}})$.
A~less standard limit process is $\mbox{Case~\ref{EKGcase11}}\to\mbox{Case~\ref{EKGcase13}}$, 
where we introduce the parameter~$q$ via the transformation $\mathscr D^t(qt)\circ\mathscr D^x\big(q^{-1}(x-1)\big)$
and take the limit as $q\to 0$. 
The contracted algebra $\langle\p_t,\,t\p_t-\p_u,\,\p_x\rangle$ is a (proper) subalgebra 
of the infinite-dimensional algebra~$\mathfrak g^\cap_{13}$.
\end{remark}

\section{Proof of group classification}\label{sec:NonlinKGEqsProof}

In view of Lemma~\ref{lem:NonlinKGEqsConditionsForAppropriateSubalgebras}(ii),
we can exclude Case~\ref{EKGcase13}, which corresponds to the Liouville equation,
from the further consideration.
For convenience, denote by~$\mathcal L$ the subclass of~$\mathcal K$ consisting of the equations
that are $G^\sim$-equivalent to the Liouville equation.
The equations from the class~$\mathcal K$
with finite-dimensional Lie invariance algebras, for which $f\ne{\rm e}^u$ $(\!{}\bmod G^\sim)$,
constitute the complement $\mathcal K\setminus\mathcal L$.
In this notation, we need to classify only Lie symmetries of equations from the class $\mathcal K\setminus\mathcal L$.
The classification of such equations splits into different cases depending on the following
three $G^\sim$-invariant integer values for subalgebras~$\mathfrak s$ of~$\mathfrak g_\spanindex$:
\begin{gather}\label{eq:NonlinKGEqsMNK}
m:=\max(\dim\pi^t_*\mathfrak s,\dim\pi^x_*\mathfrak s),\quad
n:=\dim\mathfrak s,\quad
k:=\min(\dim\pi^t_*\mathfrak s,\dim\pi^x_*\mathfrak s),
\end{gather}
which are defined by~$f$ for $\mathfrak s=\mathfrak g_f$ 
and are listed in accordance with their influence on the classification splitting.
In view of their definition, they satisfy the inequality $0\leqslant n-m\leqslant k\leqslant m\leqslant n$.
According to Lemma~\ref{lem:NonlinKGEqsConditionsForAppropriateSubalgebras}(iii),
we can suppose from the very beginning that $n\leqslant4$,
whereas Corollary~\ref{cor:NonlinKGEqsProjectionDim} implies that $m\leqslant3$.
Moreover, it follows from Lemma~\ref{lem:NonlinKGEqsConditionsForAppropriateSubalgebras}(i) that $m>0$
for any equation in the class~\eqref{eq:NonlinKGEqs} with Lie-symmetry extension.
In other words, looking for \mbox{$G^\sim$-inequivalent} equations~$K_f$ from $\mathcal K\setminus\mathcal L$
with nonzero maximal Lie invariance algebra~$\mathfrak g_f$,
we select the appropriate triples $(m,n,k)$ among those that satisfy the condition
\[
m\in\{1,2,3\},\quad
n\in\{1,2,3,4\},\quad
0\leqslant n-m\leqslant k\leqslant m\leqslant n.
\]

Since the class~\eqref{eq:NonlinKGEqs} possesses the discrete equivalence transformation $\mathscr I^0$,
which permutes the variables~$t$ and~$x$,
without loss of generality we can assume that $m=\dim\pi^t_*\mathfrak g_f$.
We choose an initial basis of~$\mathfrak g_f$ consisting of vector~fields
\[
Q^i=\tau^i(t)\p_t+\xi^i(x)\p_x+\big(\eta^{1i}u +\eta^{0i}(t,x)\big)\p_u, \quad i=1,\dots,n \quad\mbox{with}\quad n\leqslant4,
\]
see Proposition~\ref{pro:NonlinKGEqsMIA}.
In addition to Corollary~\ref{cor:NonlinKGEqsProjectionDim},
the application of the Lie theorem to the group classification of the class~\eqref{eq:NonlinKGEqs}
is based on the fact that $\pi^t_*G^\sim$ and $\pi^x_*G^\sim$ coincide with the diffeomorphism groups
on~the $t$- and the $x$-lines, respectively.

We separately consider the cases with different values of $m\in\{1,2,3\}$ in descending order.

\medskip\par\noindent{$\boldsymbol{m=3.}$}
Therefore, $n\in\{3,4\}$.
We change the basis $(Q^1,\dots,Q^n)$ in such a way that
$\tau^1$, $\tau^2$ and~$\tau^3$ are linearly independent and, if $n=4$, $\tau^4=0$.
Then $\pi^t_*\mathfrak g_f=\langle\tau^1\p_t,\tau^2\p_t,\tau^3\p_t\rangle$
is a~faithful realization of the algebra $\mathfrak{sl}(2,\mathbb R)$ on the line.
This fact has several implications.
In particular,  
the algebra $\mathfrak f:=\langle Q^1,Q^2,Q^3\rangle$ is, 
up to combining $Q^1$, $Q^2$ and~$Q^3$ with~$Q^4$ if $n=4$,
a~faithful realization of $\mathfrak{sl}(2,\mathbb R)$ as well,
and $\pi^x_*\mathfrak f=\langle\xi^1\p_x,\xi^2\p_x,\xi^3\p_x\rangle$ is also a (not necessarily faithful)
realization of $\mathfrak{sl}(2,\mathbb R)$.%
\footnote{%
Indeed, this claim is obvious for $n=3$. 
Suppose that $n=4$.
The algebra~$\mathfrak g_f$ is not solvable 
since otherwise both the algebra~$\mathfrak f$ and  
the projection $\pi^t_*\mathfrak g_f=\pi^t_*\mathfrak f$ are solvable, 
which is not the case. 
There are only two four-dimensional unsolvable real Lie algebras, 
$\mathfrak{sl}(2,\mathbb R)\oplus\mathfrak a$ and $\mathfrak{so}(3)\oplus\mathfrak a$, 
where $\mathfrak a$ is the one-dimensional abelian Lie algebra. 
The algebra~$\mathfrak g_f$ cannot be isomorphic to $\mathfrak{so}(3)\oplus\mathfrak a$ 
in view of $\pi^t_*\mathfrak g_f\simeq\mathfrak{sl}(2,\mathbb R)$. 
Thus, $\mathfrak g_f\simeq\mathfrak{sl}(2,\mathbb R)\oplus\mathfrak a$, 
and the existence of required basis is obvious.  
}
The last realization should be either faithful or zero
since the kernel of any homomorphism of a Lie algebra $\mathfrak g$ to a Lie algebra is an ideal of $\mathfrak g$,
and the algebra $\mathfrak{sl}(2,\mathbb R)$ has no proper ideals.
In other words, either $\xi^1=\xi^2=\xi^3=0$ or $\xi^1\xi^2\xi^3\ne 0$.
In view of the Lie theorem, up to linearly combining $Q^1$, $Q^2$ and~$Q^3$ with each other,
we can set $(\tau^1,\tau^2,\tau^3)=(1,t,t^2)$ $(\!{}\bmod \pi^t_*G^\sim)$.
Moreover, the derived algebra $[\mathfrak f,\mathfrak f]$ of the algebra~$\mathfrak f$ coincides with the algebra~$\mathfrak f$ itself,
and thus $\eta^{1i}=0$, $i=1,2,3$.

In the case $\xi^1=\xi^2=\xi^3=0$, we have $Q^i=t^{i-1}\p_t+\eta^{0i}\p_u$, $i=1,2,3$,
with $\eta^{01}=0$ $(\!{}\bmod G^\sim)$ and, if $n=4$, $\tau^4=0$.
The commutator relations of~$\mathfrak f$,
\begin{gather*}
[Q^1,Q^2]=\p_t+\eta^{02}_t\p_u=Q^1, \\
[Q^1,Q^3]=2t\p_t+\eta^{03}_t\p_u=2Q^2,\\
[Q^2,Q^3]=t^2\p_t+(t\eta^{03}_t-t^2\eta^{02}_t)\p_u=Q^3,
\end{gather*}
imply the system $\eta_t^{02}=0$, $\eta^{03}_t=2\eta^{02}$, $t\eta^{03}_t=\eta^{03}$,
and thus $Q^1=\p_t$, $Q^2=t\p_t-\mu(x)\p_u$, $Q^3=t^2\p_t-2t\mu (x)\p_u$,
where $\mu$ is a smooth function of~$x$.
Successively substituting the obtained vector fields $Q^1$, $Q^2$ and $Q^3$
into the classifying equation~\eqref{eq:NonlinKGEqsClassifyingEq},
we derive the following system with respect to the arbitrary element~$f$:
$f_t=0$, $\mu f_u=f$, $2\mu tf_u=2tf+\mu_x$.
An obvious consequence of this system is $\mu_x=0$ and,
since~$f\ne0$, we also get $\mu\ne0$ and $f=\nu(x){\rm e}^{u/\mu}$.
Therefore, the equation~$K_f$ is $G^\sim$-equivalent to the Liouville equation,
cf.\ the proof of Lemma~\ref{lem:NonlinKGEqsConditionsForAppropriateSubalgebras},
i.e., $\dim\mathfrak g_f=\infty$, which contradicts the supposition $\dim\mathfrak g_f\leqslant4$.

Now we consider the case $\xi^1\xi^2\xi^3\ne 0$ and, up to the $G^\sim$-equivalence, set $\xi^1=1$ and $\eta^{01}=0$.
After expanding the commutator relations of~$\mathfrak f$ and collecting the components of vector fields,
\begin{gather*}
 [Q^1, Q^2]=\p_t+\xi^2_x\p_x+(\eta^{02}_t+\eta_x^{02})\p_u=Q^1, \\
 [Q^1, Q^3]=2t\p_t+\xi^3_x\p_x+(\eta^{03}_t+\eta^{03}_x)\p_u=2Q^2, \\
 [Q^2, Q^3]=t^2\p_t+(\xi^2\xi^3_x-\xi^3\xi^2_x)\p_x +(t\eta^{03}_t+\xi^2\eta^{03}_x-t^2\eta^{02}_t-\xi^3\eta^{02}_x)\p_u=Q^3,
\end{gather*}
we obtain the system
\begin{gather*}
\xi_x^2=1,\quad \eta_t^{02}+\eta_x^{02}=0,\quad
\xi^3_x=2\xi^2,\quad \eta_t^{03}+\eta_x^{03}=2\eta^{02},\\
\xi^2\xi^3_x-\xi^3\xi^2_x=\xi^3,\quad 
t\eta^{03}_t+\xi^2\eta^{03}_x-t^2\eta^{02}_t-\xi^3\eta^{02}_x=\eta^{03}.
\end{gather*}
From the first, the third and the fifth equations of this system,
up to the equivalence transformations of shifts with respect to~$x$,
we derive $\xi^2=x$ and $\xi^3=x^2$.
From the second and the fourth equations, we find
$\eta^{02}=\rho (\omega^-)$ and $\eta^{03}=\rho(\omega^-)\omega^+ +\theta(\omega^-)$
with $\omega^-:=x-t$ and $\omega^+:=x+t$.
Then the sixth equation of the system reduces to $\omega^-\theta_{\omega^-}=\theta$,
which integrates to $\theta=\lambda\omega^-$, where $\lambda$ is an arbitrary constant.
Thus, 
$Q^1=\p_t+\p_x$, $Q^2=t\p_t+x\p_x+\rho(\omega^-)\p_u$, $Q^3=t^2\p_t +x^2\p_x+\big((x+t)\rho(\omega^-)+\lambda(x-t)\big)\p_u$,
where we can set the parameter function $\rho=\rho(\omega^-)$ to zero by the transformation
$\tilde t=t$, $\tilde x=x$, $\tilde u=u-\int(\omega^-)^{-1}\rho(\omega^-)\,{\rm d}\omega^-$,
which is the projection of an equivalence transformation.
We successively substitute the components of the vector fields $Q^1$, $Q^2$ and $Q^3$
into the classifying equation~\eqref{eq:NonlinKGEqsClassifyingEq}.
The derived system with respect to the arbitrary element~$f$,
\[
f_t+f_x=0,\quad tf_t+xf_x=-2f,\quad t^2f_t+x^2f_x+\lambda(x-t)f_u=-2(x+t)f,
\]
is consistent under the inequality $f_u\ne0$ if and only if $\lambda=0$.

Suppose that $n=4$, $\tau^4=0$ and $\xi^4\ne0$.
Since $\dim\pi^x_*\mathfrak g_f\leqslant3$, the component~$\xi^4$
should belong to $\langle\xi^1,\xi^2,\xi^3\rangle=\langle1,x,x^2\rangle$.
Moreover, $[\mathfrak f,\langle Q^4\rangle]\subseteq\langle Q^4\rangle$,
which implies $[\pi^x_*\mathfrak f,\langle\xi^4\p_x\rangle]\subseteq\langle\xi^4\p_x\rangle$,
i.e., $\langle \xi^4\p_x \rangle$ is an ideal of $\pi^x_*\mathfrak f\simeq\mathfrak{sl}(2,\mathbb R)$.
The algebra~$\mathfrak{sl}(2,\mathbb R)$ has no proper ideals.
Hence $\xi^4=0$, which is a contradiction. 
This gives Case~\ref{EKGcase10} of Theorem~\ref{thm:NonlinKGEqsGroupClassification}.
Moreover, Case~\ref{EKGcase10} admits no further Lie-symmetry extensions
via specifying~$\hat f$;
cf.\ the beginning of Section~\ref{sec:SuccessiveLie-SymExtensions}.

\medskip\par\noindent{$\boldsymbol{m=2.}$}
We change the basis $(Q^1,\dots,Q^n)$ in such a way that $\tau^1$, $\tau^2$
are linearly independent and $\tau^i=0$, $i=3,\dots,n$.
Then $\pi^t_*\mathfrak g_f=\langle \tau^1\p_t,\tau^2\p_t\rangle$ is a faithful
realization of a two-dimensional Lie algebra on the $t$-line,
and according to the Lie theorem, up to linearly combining $Q^1$ and~$Q^2$,
we can set $(\tau^1,\tau^2)=(1,t)$ $(\!{}\bmod \pi^t_*G^\sim)$.
The possible values of $n=\dim\mathfrak g_f$ are $2$, $3$ and~$4$.

Suppose that $n=4$.
Then $\dim \pi^x_*\mathfrak g_f=\dim \langle \xi^3\p_x,\xi^4\p_x\rangle=2$
in view of Lemma~\ref{lem:NonlinKGEqsConditionsForAppropriateSubalgebras}(i).
Up to linearly combining~$Q^3$ and~$Q^4$, we can set $(\xi^3,\xi^4)=(1,x)$ $(\!{}\bmod \pi^x_*G^\sim)$.
Since $\xi^1,\xi^2\in\langle\xi^3,\xi^4\rangle$, we can further linearly combine $Q^1$ and~$Q^2$ with~$Q^3$ and~$Q^4$
to annihilate $\xi^1$ and~$\xi^2$.
We have
\begin{gather*}
Q^1=\p_t+(\eta^{11}u+\eta^{01})\p_u,\quad
Q^2=t\p_t+(\eta^{12}u+\eta^{02})\p_u,\\
Q^3=\p_x+(\eta^{13}u+\eta^{03})\p_u,\quad
Q^4=x\p_x+(\eta^{14}u+\eta^{04})\p_u,
\end{gather*}
where $\eta^{11}$, \dots, $\eta^{14}$ are constants and $\eta^{01}$, \dots, $\eta^{04}$ are smooth functions of~$(t,x)$.
The commutation relations $[Q^1, Q^2]=Q^1$ and $[Q^3, Q^4]=Q^3$ imply $\eta^{11}=\eta^{13}=0$.
Acting by a transformation $\varpi_*\mathscr Z(U^0)$, we can set $\eta^{01}=0$.
From the commutation relation $[Q^1, Q^3]=\eta^{03}_t\p_u=0$, we derive $\eta^{03}_t=0$.
All transformations $\varpi_*\mathscr Z(U^0)$ with $U^0=U^0(x)$ preserve the reduced form of $Q^1=\p_t$,
and among them there is a transformation allowing us to set $\eta^{03}=0$.
We still need to take into account the following commutation relations:
\begin{gather*}
[Q^1, Q^2]=\p_t+\eta^{02}_t\p_u=Q^1, \quad
[Q^3, Q^2]=\eta^{02}_x\p_u=0, \\
[Q^3, Q^4]=\p_x+\eta^{04}_x\p_u=Q^3,\quad
[Q^1, Q^4]=\eta^{04}_t\p_u=0,
\end{gather*}
which leads to $\eta^{02}_t=\eta^{02}_x=\eta^{04}_t=\eta^{04}_x=0$, i.e.,
$\eta^{02}$ and $\eta^{04}$ are constants.
Substituting the components of the vector fields $Q^1$, $Q^2$, $Q^3$ and $Q^4$
into the classifying equation~\eqref{eq:NonlinKGEqsClassifyingEq},
we obtain a system with respect to the arbitrary element~$f$,
\[
f_t=f_x=0,\quad (\eta^{12}u+\eta^{02})f_u=f,\quad (\eta^{14}u+\eta^{04})f_u=f.
\]
This system is consistent if and only if $\eta^{12}=\eta^{14}$ and $\eta^{02}=\eta^{04}$.
Moreover, $\eta^{12}=\eta^{14}\ne0$ since otherwise we have the Liouville equation.
In view of the derived conditions, we can set $\eta^{02}=\eta^{04}=0$ using a shift of~$u$ by a constant,
which preserves all the posed constraints of the components of~$Q^1$, \dots, $Q^4$.
Denoting $p:=-1/\eta^{12}$, we derive Case~\ref{EKGcase12},
where the constant multiplier~$\hat C$ of~$f$ can be removed by the transformation~$\mathscr D^t(\hat C t)$.

Let $n=3$. Hence $\xi^3\ne0$ and $k:=\dim\pi^x_*\mathfrak g_f=\dim\langle\xi^1,\xi^2,\xi^3\rangle\in\{1,2\}$.

If $k=2$, then the Lie theorem implies that $\xi^i=a_ix+b_i$ with some constants~$a_i$ and~$b_i$, $i=1,2,3$.
Therefore, we need to examine the cases with respect to coefficients of $\xi^i$.
For $a_3\ne0$, we can set $a_3=1$ and $b_3=0$
by rescaling the vector field~$Q^3$ and by shifting~$x$, respectively.
We can further linearly combine $Q^1$ and~$Q^2$ with~$Q^3$ to make $a_1=a_2=0$.
Thus, the basis elements~$Q^1$, $Q^2$ and~$Q^3$ satisfy the commutation relations
$[Q^1,Q^2]=Q^1$, \mbox{$[Q^1,Q^3]=0$} and $[Q^2,Q^3]=0$.
The last two commutation relations imply $b_1=b_2=0$,
which means $k=1$,
contradicting the supposed condition $k=2$.
For $a_3=0$, the condition $\xi^3\ne0$ is equivalent to $b_3\ne0$.
Rescaling~$Q^3$ and linearly combining $Q^1$ and~$Q^2$ with~$Q^3$, we make $b_3=1$ and $b_1=b_2=0$.
The commutation relation $[Q^1,Q^2]=Q^1$ leads to $a_1=0$ and $\eta^{11}=0$, and thus $a_2\ne0$ since $k=2$.
Acting by a transformation $\varpi_*\mathscr Z(U^0)$, we can set $\eta^{01}=0$.
The basis elements take the form
\begin{gather*}
Q^1=\p_t,\quad
Q^2=t\p_t+a_2x\p_x+(\eta^{12}u+\eta^{02})\p_u, \quad
Q^3=\p_x+(\eta^{13}u+\eta^{03})\p_u,
\end{gather*}
where $\eta^{1i}$ is a constant and $\eta^{0i}$ is a smooth function of~$(t,x)$, $i=2,3$.
From the commutation relation $[Q^1, Q^3]=\eta^{03}_t\p_u=0$, we derive $\eta^{03}_t=0$.
Since all the pushforwards $\varpi_*\mathscr Z(U^0)$ with $U^0=U^0(x)$ preserve the reduced form of $Q^1=\p_t$,
in view of the equation $\eta^{03}_t=0$ we can assume that $\eta^{03}=0$ up to these pushforwards.
More conditions on the basis elements follow from the commutation relations
\begin{gather*}
[Q^1, Q^2]=\p_t+\eta_t^{02}\p_u=Q^1,\quad
[Q^3, Q^2]= a_2\p_x+(\eta_x^{02}-\eta^{13}\eta^{02})\p_u=a_2Q^3,
\end{gather*}
which are $\eta^{13}=0$, $\eta_t^{02}=0$, $\eta_x^{02}=0$, i.e., $\eta^{02}=\const$.
Substituting the components of the vector fields $Q^1$, $Q^2$ and $Q^3$
into the classifying equation~\eqref{eq:NonlinKGEqsClassifyingEq},
we obtain a system with respect to the arbitrary element~$f$,
\[
f_t=f_x=0,\quad (\eta^{12}u+\eta^{02})f_u=(\eta^{12}-a_2-1)f.
\]
If $\eta^{12}=\eta^{02}=0$, then $a_2=-1$ since $f\ne0$, and we get Case~\ref{EKGcase9}.
For $\eta^{12}=0$ and $\eta^{02}\ne0$, the inequality $f_u\ne0$ implies $a_2\ne-1$,
which leads to the Liouville equation.
If $\eta^{12}\ne0$, then the corresponding values of the arbitrary element~$f$ are, up to the shifts of~$u$,
of the form as in Case~\ref{EKGcase12}, where $n=4$.

Consider the case $k=1$. Then $\xi^3\ne0$, and we can assume $\xi^3=1$, $\xi^1=\xi^2=0$.
In view of the commutation relations $[Q^1,Q^2]=Q^1$ and $[Q^1,Q^3]=0$,
analogously to previous cases, we have $\eta^{11}=0$ and we can set $\eta^{01}=\eta^{03}=0$
acting by the transformation $\mathscr Z(U^0)$.
Further expanding commutation relations,
$[Q^1, Q^2]=\p_t+\eta^{02}_t\p_u=Q^1$,
$[Q^3, Q^2]=(\eta^{02}_x-\eta^{13}\eta^{02})\p_u=0$,
we derive $\eta^{02}_t=0$, $\eta^{02}_x=\eta^{13}\eta^{02}$.
From the classifying equation~\eqref{eq:NonlinKGEqsClassifyingEq},
we obtain the system on the arbitrary element~$f$,
\[
f_t=0,\quad f_x+\eta^{13}uf_u=\eta^{13}f,\quad(\eta^{12}u+\eta^{02})f_u=(\eta^{12}-1)f.
\]
If $\eta^{13}=0$, then $f_t=f_x=0$, and thus $\langle t\p_t-x\p_x\rangle\in\mathfrak g_f$,
which contradicts the condition $\xi^2=0$.
Hence $\eta^{13}\ne0$, and we can set $\eta^{13}=1$ by scaling of $x$, and thus $\eta^{02}=C{\rm e}^x$
for some constant~$C$.
If additionally $\eta^{12}\ne 0$, then we can make $\eta^{02}=0$
acting by the transformation $\mathscr Z(-pC{\rm e}^x)$ with $p:=-1/\eta^{12}\ne0$,
which preserves~$Q^1$ and~$Q^3$.
The general solution of the system for~$f$ with the auxiliary inequality $f_{uu}\ne0$
is $f=\hat C{\rm e}^{-px}|u|^pu$, where $\hat C$ is an arbitrary nonzero constant.
The family of equivalence transformations
$\tilde t=\hat Ct$, $\tilde x=-{\rm e}^{-px}/p$, $\tilde u=u$, $\tilde f={\rm e}^{px}f/\hat C$
maps such values of~$f$ to that from Case~\ref{EKGcase12}, where $n=4$.
Otherwise, $\eta^{12}=0$, and thus $C\ne0$ since $f\ne0$.
Scaling~$u$, we can set $C=1$.
From the above system for~$f$, we derive $f=\hat C{\rm e}^{x-{\rm e}^{-x}u}$.
The equivalence transformation
$\tilde t=-\hat Ct$, $\tilde x={\rm e}^x$, $\tilde u=-u$, $\tilde f={\rm e}^{-x}f/\hat C$
leads to Case~\ref{EKGcase11}.

If $n=2$, then $k\in\{0,1,2\}$.

Suppose that $k=2$.
From the commutation relation $[Q^1, Q^2]=Q^1$, we derive $\eta^{11}=0$ and,
modulo the transformations~$\varpi_*\mathscr D^x(X)$ and~$\varpi_*\mathscr Z(U^0)$,
$\xi^1=1$, $\xi^2=x$ and~$\eta^{01}=0$, and then $\eta^{02}_t+\eta^{02}_x=0$.
Hence $\eta^{02}$ is a function of~$x-t$, and
additionally acting by a transformation~$\varpi_*\mathscr Z(\theta)$,
where $\theta$ is also a function of~$x-t$, we can set $\eta^{02}=0$.
The basis vector fields take the form
$Q^1=\p_t+\p_x$, $Q^2=t\p_t+x\p_x+\eta^{12}u\p_u$.
Substituting the components of~$Q^1$ and~$Q^2$
into the classifying equation~\eqref{eq:NonlinKGEqsClassifyingEq},
we obtain the system
\[
f_t+f_x=0, \quad
tf_t+xf_x+\eta^{12}uf_u=(\eta^{12}-2)f.
\]
This system implies that $\eta^{12}\ne0$ since otherwise we get Case~\ref{EKGcase10} with $n=3$.
The condition $\eta^{12}\ne0$ singles out Case~\ref{EKGcase7}, where $q:=-\eta^{12}$.

In the case $k=1$, the commutation relation $[Q^1, Q^2]=Q^1$ implies $\xi^1=0$ and $\eta^{11}=0$.
It is possible and convenient to make $\xi^2=x$ using a transformation~$\varpi_*\mathscr D^x(X)$.
Similarly to the previous cases, we can simultaneously set $\eta^{01}=0$ and $\eta^{02}=0$
via acting by a transformation $\varpi_*\mathscr Z(U^0)$.
The system $f_t=0$, $xf_x+\eta^{12}uf_u=(\eta^{12}-2)f$
following from the classifying equation~\eqref{eq:NonlinKGEqsClassifyingEq}
implies that $\eta^{12}\ne0$ since otherwise we again get Case~\ref{EKGcase9} with $n=3$.
As a result, we have Case~\ref{EKGcase8}, where $q:=-\eta^{12}$.

\begin{remark}\label{rem:NonlinKGEqsGaugingPInCases7And8}
The (nonzero) parameter~$q$ in Cases~\ref{EKGcase7} and~\ref{EKGcase8}
cannot be further gauged by equivalence transformations.
We show this only for Case~\ref{EKGcase8} since the argumentation for Case~\ref{EKGcase7} is similar.
For each value of~$f$ from Case~\ref{EKGcase8},
the derived algebra of the corresponding algebra~$\mathfrak g_f$ is spanned by the vector field~$Q^1=\p_t$.
Hence the projection~$\varpi_*\mathscr T$ of any element~$\mathscr T$ of~$G^\sim$
that maps the equation~$K_f$ to an equation from the same Case~\ref{EKGcase8}
should preserve the span of~$Q^1$, i.e., $(\varpi_*\mathscr T)_*\p_t\in\langle\p_t\rangle$.
This implies that the $t$-component of~$\mathscr T$ is affine in~$t$.
Such a transformation cannot change the ratio of the coefficients of~$t\p_t$ and of~$u\p_u$
in the vector field~$Q^2$.
\end{remark}

For $k=0$, $\xi^1=\xi^2=0$, and the further consideration is similar to the case $k=1$.
Modulo the $G^\sim$-equivalence, we derive from the commutation relation $[Q^1,Q^2]=Q^1$
that $\eta^{11}=0$ and $\eta^{01}=\eta^{02}=0$.
The associated system for~$f$ is $f_t=0$, $\eta^{12}uf_u=(\eta^{12}-1)f$,
where $\eta^{12}\ne0,1$,
and hence such a value of~$f$ can be reduced by a transformation~$\mathscr D^x(X)$
to the form from Case~\ref{EKGcase12}, where $n=4$.

\medskip\par\noindent{$\boldsymbol{m=1.}$}
Then $n\leqslant2$.

If $n=2$, then linearly combining the basis elements and pushing forward the algebra~$\mathfrak g_f$
by $\varpi_*\mathscr D^t(T)$ and $\varpi_*\mathscr Z(U^0)$, we make $\tau^1=1$, $\tau^2=0$ and $\eta^{01}=0$.
Hence $\xi^2\ne0$, i.e., $\xi^2=1$ $(\!{}\bmod G^\sim)$,
and we additionally set $\xi^1=0$ by linearly combining basis elements
and repairing the gauge $\eta^{01}=0$ using a transformation $\varpi_*\mathscr Z(U^0)$.
In view of the commutation relation
$[Q^1, Q^2]=(\eta^{02}_t-\eta^{11}\eta^{02})\p_u=0$,
we can conclude that $\eta^{02}={\rm e}^{\eta^{11}t}\zeta(x)$ for some smooth function~$\zeta$ of~$x$.
Therefore, we can set $\eta^{02}=0$ by the transformation $\varpi_*\mathscr Z\big({\rm e}^{\eta^{11}t}\theta(x)\big)$,
where $\theta$ is a solution of the first-order linear ordinary differential equation $\theta_x-\eta^{12}\theta+\zeta=0$;
the vector field~$Q^1$ is preserved by this transformation.
The basis elements take the form $Q^1=\p_t+\eta^{11}u\p_u$, $Q^2=\p_x+\eta^{12}u\p_u$.
Here $(\eta^{11},\eta^{12})\ne(0,0)$ since otherwise we have Case~\ref{EKGcase9} with $n=3$.
If both the coefficients~$\eta^{11}$ and~$\eta^{12}$ are nonzero,
then by scaling~$t$ and~$x$, we can set $\eta^{11}=\eta^{12}=1$,
and the associated system for the arbitrary element~$f$ is $f_t+uf_u=f$, $f_x+uf_u=f$,
which leads to Case~\ref{EKGcase6}.
If one of the parameters $\eta^{11}$ or $\eta^{12}$ is nonzero,
then up to the discrete equivalence transformation~$\mathscr I^0$,
which permutes the variables~$t$ and~$x$, we can assume that $\eta^{11}\ne0$,
and we can set $\eta^{11}=1$ by scaling~$t$.
From the classifying equation~\eqref{eq:NonlinKGEqsClassifyingEq},
we obtain the system $f_t+uf_u=f$, $f_x=0$, whose integration gives Case~\ref{EKGcase5}.

In the case $n=1$, since $\tau^1\ne0$, we can set $\tau^1=1$, $\xi^1=\delta$, $\eta^{01}=0$ and $\eta^{11}=(1+\delta)\delta'$ with $\delta,\delta'\in \{0,1\}$
modulo the transformations~$\varpi_*\mathscr D^t(T)$, $\varpi_*\mathscr D^x(X)$ and~$\varpi_*\mathscr Z(U^0)$
and of a simultaneous scaling of~$(t,x)$, respectively.
As a result, we obtain the following $G^\sim$-inequivalent cases for $Q^1$:
$\p_t$, $\p_t+\p_x$, $\p_t+u\p_u$, and $\p_t+\p_x+2u\p_u$,
which correspond to Cases~\ref{EKGcase1}--\ref{EKGcase4}.

\begin{remark}\label{rem:NonlinKGEqsRealizations}
We have split the group classification of equations from the class~\eqref{eq:NonlinKGEqs}
with finite-dimensional maximal Lie invariance algebras
into different cases depending on values of the triple $(m,n,k)$
of $G^\sim$-invariant integers, which are defined by~\eqref{eq:NonlinKGEqsMNK}.
It is clear that most of the values in~$\mathbb Z^3$ are inappropriate for $(m,n,k)$.
As a preliminary step of the classification,
we have significantly narrowed down the set of candidates
to be looked through for such triples.
According to Lemma~\ref{lem:NonlinKGEqsConditionsForAppropriateSubalgebras},
Corollary~\ref{cor:NonlinKGEqsProjectionDim} and the definition of $(m,n,k)$,
we have the constraints
$m\in\{1,2,3\}$, $n\in\{1,2,3,4\}$, and $0\leqslant n-m\leqslant k\leqslant m\leqslant n$
but even the (quite restricted) set~$S$ of triples satisfying these constraints
contains many elements that are not realized for equations from the class~\eqref{eq:NonlinKGEqs}.
The appropriate values for $(m,n,k)$ are exhausted by
\[
(1,1,0),\ \ 
(1,1,1),\ \ 
(1,2,1),\ \ 
(2,2,2),\ \ 
(2,2,1),\ \ 
(2,3,2),\ \ 
(3,3,3),\ \ 
(2,3,1),\ \ 
(2,4,2),
\]
which are associated with pairs of Cases~\ref{EKGcase1} and~\ref{EKGcase3},
\ref{EKGcase2} and~\ref{EKGcase4},
\ref{EKGcase5} and~\ref{EKGcase6},
and single Cases~\ref{EKGcase7}, \ref{EKGcase8}, \ref{EKGcase9}, \ref{EKGcase10}, \ref{EKGcase11}, \ref{EKGcase12}
of Theorem~\ref{thm:NonlinKGEqsGroupClassification}, respectively.
\noprint{
$(m,n,k)=(1,1,0)$ that gives Cases~\ref{EKGcase1}, \ref{EKGcase3};
$(m,n,k)=(1,1,1)$ -- Cases~\ref{EKGcase2}, \ref{EKGcase4};
$(m,n,k)=(1,2,1)$ -- Cases~\ref{EKGcase5}, \ref{EKGcase6};
$(m,n,k)=(2,2,2)$ -- Case~\ref{EKGcase7};
$(m,n,k)=(2,2,1)$ -- Case~\ref{EKGcase8};
$(m,n,k)=(2,3,2)$ -- Case~\ref{EKGcase9}.
$(m,n,k)=(3,3,3)$ -- Case~\ref{EKGcase10}.
$(m,n,k)=(2,3,1)$ -- Case~\ref{EKGcase11}.
$(m,n,k)=(2,4,2)$ -- Case~\ref{EKGcase12}.
}
Therefore, the inappropriate triples in~$S$ are
$(3,4,k)$ with $k=1,2,3$, $(3,3,k)$ with $k=0,1,2$ and $(2,2,0)$,
and their number is significant but less than the number of appropriate triples.
The only appropriate triple with $m=3$ is $(3,3,3)$,
i.e., the value $m=3$ uniquely defines the possible values for~$n$ and~$k$.
This interesting observation may be related to the fact that for $m=3$
both the projections~$\pi^t_*\mathfrak g_f$ and~$\pi^x_*\mathfrak g_f$ as well as the algebra~$\mathfrak g_f$ itself
are necessarily isomorphic to $\mathfrak{sl}(2,\mathbb R)$
in view of the Lie theorem and the simplicity of $\mathfrak{sl}(2,\mathbb R)$.
The separation of the appropriate values from the inappropriate ones in the set~$S$
cannot be implemented in the course of preliminary analysis of Lie symmetries of equations from the class~\eqref{eq:NonlinKGEqs}
since it is an integral part of the group classification procedure for this class.
\end{remark}

\begin{remark}\label{rem:IdentifyingInvValues}
To check algebraically that the cases given in Theorem~\ref{thm:NonlinKGEqsGroupClassification} are $G^\sim$-inequivalent to each other,
we need more values associated with the maximal Lie invariance algebras
of equations from the class~\eqref{eq:NonlinKGEqs},
since there are pairs of $G^\sim$-inequivalent cases with the same triples $(m,n,k)$.
These are
Cases~\ref{EKGcase1} and~\ref{EKGcase3} with $(m,n,k)=(1,1,0)$,
Cases~\ref{EKGcase2} and~\ref{EKGcase4} with $(m,n,k)=(1,1,1)$ and
Cases~\ref{EKGcase5} and~\ref{EKGcase6} with $(m,n,k)=(1,2,1)$.
To introduce additional $G^\sim$-invariant values for complete identification of classification cases,
we represent the span $\mathfrak g_\spanindex$ as a direct sum of its subspaces,
\[
\mathfrak g_\spanindex=\big\langle D^t(\tau)\big\rangle\dotplus\big\langle D^x(\xi)\big\rangle\dotplus\big\langle I\big\rangle\dotplus\big\langle Z(\eta^0)\big\rangle,
\]
where the parameter functions $\tau=\tau(t)$, $\xi=\xi(x)$ and~$\eta^0=\eta^0(t,x)$ run through the sets of smooth functions of their arguments.
By $\mathfrak P_i$ we denote the projection from~$\mathfrak g_\spanindex$
onto the $i$th summand of the above representation for~$\mathfrak g_\spanindex$, $i=1,\dots,4$.
Although
$\dim\mathfrak P_1\mathfrak s=\dim\pi^t_*\mathfrak s$ and
$\dim\mathfrak P_2\mathfrak s=\dim\pi^x_*\mathfrak s$ for subalgebras~$\mathfrak s$ of~$\mathfrak g_\spanindex$,
we can define the new $G^\sim$-invariant integer values
\begin{gather*}
l=l(\mathfrak s):=\dim\mathfrak P_3\mathfrak s,\\
j_1=j_1(\mathfrak s):=\max(\dim\mathfrak s^1,\dim\mathfrak s^2),\quad
j_2=j_2(\mathfrak s):=\min(\dim\mathfrak s^1,\dim\mathfrak s^2),\\
j_{12}=j_{12}(\mathfrak s):=\dim\mathfrak s^{12},\\
j_{13}=j_{13}(\mathfrak s):=\max(\dim\mathfrak s^{13},\dim\mathfrak s^{23}),\quad
j_{23}=j_{23}(\mathfrak s):=\min(\dim\mathfrak s^{13},\dim\mathfrak s^{23}),\\
r_1=r_1(\mathfrak s):=\max(\dim\pi^t_*\mathfrak s^{12},\dim\pi^x_*\mathfrak s^{12}),\quad
r_2=r_2(\mathfrak s):=\min(\dim\pi^t_*\mathfrak s^{12},\dim\pi^x_*\mathfrak s^{12}),\\
r_3=r_3(\mathfrak s):=3-\min\big\{\dim\big\langle D^t(\tau),D^x(\xi )\big\rangle\mid \exists\,\eta^0\colon D^t(\tau)+D^x(\xi )+I+Z(\eta^0)\in\mathfrak s\big\},
\end{gather*}
where $r_3:=0$ if the set in the definition of~$r_3$ is empty,
\begin{gather*}
\mathfrak s^1:=\mathfrak s\cap\big\langle D^t(\tau),Z(\eta^0)\big\rangle,\quad
\mathfrak s^2:=\mathfrak s\cap\big\langle D^x(\xi ),Z(\eta^0)\big\rangle,\\
\mathfrak s^{12}:=\mathfrak s\cap\big\langle D^t(\tau),D^x(\xi ),Z(\eta^0)\big\rangle,\\
\mathfrak s^{13}:=\mathfrak s\cap\big\langle D^t(\tau),I,Z(\eta^0)\big\rangle,\quad
\mathfrak s^{23}:=\mathfrak s\cap\big\langle D^x(\xi ),I,Z(\eta^0)\big\rangle.
\end{gather*}
The values of $(m,n,k,l,j_1,j_2,j_{12},j_{13},j_{23},r_1,r_2,r_3)$ for $\mathfrak s=\mathfrak g_f$
differ from each other for different cases of Theorem~\ref{thm:NonlinKGEqsGroupClassification},
\begin{gather*}
\phantom{1}\ref{EKGcase1} .\ (1,1,0,0,1,0,1,1,0,1,0,0);\quad
\phantom{1}\ref{EKGcase2} .\ (1,1,1,0,0,0,1,0,0,1,1,0);\\
\phantom{1}\ref{EKGcase3} .\ (1,1,0,1,0,0,0,1,0,0,0,2);\quad
\phantom{1}\ref{EKGcase4} .\ (1,1,1,1,0,0,0,0,0,0,0,1);\\
\phantom{1}\ref{EKGcase5} .\ (1,2,1,1,1,0,1,1,1,1,0,2);\quad
\phantom{1}\ref{EKGcase6} .\ (1,2,1,1,0,0,1,1,1,1,1,2);\\
\phantom{1}\ref{EKGcase7} .\ (2,2,2,1,0,0,1,0,0,1,1,1);\quad
\phantom{1}\ref{EKGcase8} .\ (2,2,1,1,1,0,1,1,0,1,0,1);\\
\phantom{1}\ref{EKGcase9} .\ (2,3,2,0,1,1,3,1,1,2,2,0);\quad
           \ref{EKGcase10}.\ (3,3,3,0,0,0,3,0,0,3,3,0);\\
           \ref{EKGcase11}.\ (2,3,1,1,2,0,2,2,1,2,0,2);\quad
           \ref{EKGcase12}.\ (2,4,2,1,1,1,3,2,1,2,2,2);\\
           \ref{EKGcase13}.\ (\infty,\infty,\infty,0,\infty,\infty,\infty,\infty,\infty,\infty,\infty,0).
\end{gather*}
This is why these cases are $G^\sim$-inequivalent.
At the same time, this integer tuple is redundant for distinguishing the classification cases from each other.
The most remarkable minimal sufficient tuple is the triple $(r_3,j_1,r_2)$,
where we order the characteristics according to their importance.
The other sufficient triples are
$(r_3,j_1,n)$, $(r_3,j_1,k)$, $(r_3,j_1,j_{12})$, $(r_3,j_1,r_1)$, $(r_3,r_2,n)$, $(r_3,r_2,j_{12})$, $(r_3,r_2,r_1)$.
Nevertheless, in the course of the study of successive extension in the next section,
we need to extend triples with other values among the above ones,
although the values~$r_3$, $j_1$ and~$r_2$ jointly with~$n$ are still of primary importance.
\end{remark}

\section{Successive Lie-symmetry extensions}\label{sec:SuccessiveLie-SymExtensions} 

Throughout this section, the classification cases listed in Theorem~\ref{thm:NonlinKGEqsGroupClassification}
are interpreted in the weak sense.
We intend to identify all the pairs $(\mbox{Case\ }N,\mbox{Case\ }\bar N)$ of $G^\sim$-equivalent Lie-symmetry extensions
with $\mbox{Case\ }N\prec\mbox{Case\ }\bar N$,
i.e., where Case~$\bar N$ is an additional Lie-symmetry extension of Case~$N$ modulo the $G^\sim$-equivalence,
see Remark~\ref{rem:NonlinKGEqsOnOrderingSuccessiveLieSymExtensions}.
For this purpose, we use a technique similar to that for the classification of contractions
of low-dimensional Lie algebras, see, e.g., \cite{burd1999a,grun1988a,nest2006a} and references therein.
Let $\mathfrak s$ and~$\bar{\mathfrak s}$ be subalgebras of~$\mathfrak g_\spanindex$
associated with Cases~$N$ and~$\bar N$, and
\looseness=-1
\[
(m,n,k,l,j_1,j_2,j_{12},j_{13},j_{23},r_1,r_2,r_3)\quad\mbox{and}\quad
(\bar m,\bar n,\bar k,\bar l,\bar j_1,\bar j_2,\bar j_{12},\bar j_{13},\bar j_{23},\bar r_1,\bar r_2,\bar r_3)
\]
are the tuples of their $G^\sim$-invariant characteristics that are defined
in Remarks~\ref{rem:NonlinKGEqsRealizations} and~\ref{rem:IdentifyingInvValues}.
It is obvious that the relation $\mbox{Case\ }N\prec\mbox{Case\ }\bar N$ implies
\begin{gather*}
n<\bar n,\quad
m\leqslant\bar m,\quad
k\leqslant\bar k,\quad
l\leqslant\bar l,\quad
j_1\leqslant\bar j_1,\quad
j_2\leqslant\bar j_2,\\
j_{12}\leqslant\bar j_{12},\quad
j_{13}\leqslant\bar j_{13},\quad
j_{23}\leqslant\bar j_{23},\quad
r_1\leqslant\bar r_1,\quad
r_2\leqslant\bar r_2,\quad
r_3\leqslant\bar r_3.
\end{gather*}
In other words, if at least one of the above inequalities does not hold,
then $\mbox{Case\ }N\nprec\mbox{Case\ }\bar N$.
Examining all the pairs of the cases listed in Theorem~\ref{thm:NonlinKGEqsGroupClassification},
we exclude the pairs $(\mbox{Case\ }N,\mbox{Case\ }\bar N)$ with $\mbox{Case\ }N\nprec\mbox{Case\ }\bar N$.
It turns out that for this exclusion it suffices to use only a tuple of five of the above $G^\sim$-invariant integer values,
e.g., $(n,r_3,r_2,j_1,k)$, which is minimally sufficient.
The other minimally sufficient tuples of five characteristics are obtained by replacing~$k$ by $m$ or~$r_1$.
We order the characteristics according to their importance in the elimination procedure.
The principal characteristic is the dimension~$n$ of the entire general Lie invariance algebra of
the corresponding case,
and the inequality between~$n$ and~$\bar n$ should only be strict
for $\mbox{Case\ }N$ and $\mbox{Case\ }\bar N$ to be ordered.
The characteristics $r_3$, $r_2$, $j_1$, $k$, $m$ and~$r_1$ detect the following cases of disordering with~$n<\bar n$:
\begin{align*}
r_3\colon\ &
\mbox{Case\ }\ref{EKGcase3}\nprec\mbox{Cases\ }\ref{EKGcase7},\ref{EKGcase8},\ref{EKGcase9},\ref{EKGcase10},\ref{EKGcase13},\quad
\mbox{Cases\ }\ref{EKGcase4},\ref{EKGcase5},\ref{EKGcase6},\ref{EKGcase7},\ref{EKGcase8}\nprec\mbox{Cases\ }\ref{EKGcase9},\ref{EKGcase10},\ref{EKGcase13},\\
&\mbox{Cases\ }\ref{EKGcase11},\ref{EKGcase12}\nprec\mbox{Case\ }\ref{EKGcase13},\\
r_2\colon\ &
\mbox{Case\ }\ref{EKGcase2}\nprec\mbox{Cases\ }\ref{EKGcase5},\ref{EKGcase8},\ref{EKGcase11},\quad
\mbox{Cases\ }\ref{EKGcase6},\ref{EKGcase7}\nprec\mbox{Case\ }\ref{EKGcase11},\\
j_1\colon\ &
\mbox{Case\ }\ref{EKGcase1}\nprec\mbox{Cases\ }\ref{EKGcase6},\ref{EKGcase7},\ref{EKGcase10},\quad
\mbox{Cases\ }\ref{EKGcase5},\ref{EKGcase8}\nprec\mbox{Case\ }\ref{EKGcase10},\quad
\mbox{Case\ }\ref{EKGcase11}\nprec\mbox{Case\ }\ref{EKGcase12},\\
k\colon\ &
\mbox{Case\ }\ref{EKGcase7}\nprec\mbox{Case~\ref{EKGcase11}},\quad
\mbox{Case\ }\ref{EKGcase10}\nprec\mbox{Case\ }\ref{EKGcase12},\quad
\mbox{or}\quad m,r_1\colon\ \mbox{Case\ }\ref{EKGcase10}\nprec\mbox{Case\ }\ref{EKGcase12}.
\end{align*}
The direct inspection shows that the remaining pairs $(\mbox{Case\ }N,\mbox{Case\ }\bar N)$ with~$n<\bar n$
are necessarily ordered, except the pairs
$(\mbox{Case~\ref{EKGcase7}},\mbox{Case~\ref{EKGcase10}})$ and
$(\mbox{Case~\ref{EKGcase8}},\mbox{Case~\ref{EKGcase9}})$,
which are related to limit processes for Cases~\ref{EKGcase7} and~\ref{EKGcase8} as $q\to0$.
Therefore, the Hasse diagram in Figure~\ref{fig:HasseDiagram} represents
the structure of the partially ordered set of Lie-symmetry extensions
within the class~\eqref{eq:NonlinKGEqs}, cf.\ Remark~\ref{rem:NonlinKGEqsOnHasseDiagram}.
Note that there are two characteristics, $j_2$ and~$j_{12}$, that detect no cases of disordering with~$n<\bar n$.
Each of the characteristics $l$, $j_{13}$ and $j_{23}$ detects only cases of disordering with~$n<\bar n$,
that are detected by other characteristics.
For example, the characteristic~$l$ detects
$\mbox{Cases\ }\ref{EKGcase3},\ref{EKGcase4},\ref{EKGcase5},\ref{EKGcase6},\ref{EKGcase7},\ref{EKGcase8}\nprec\mbox{Cases\ }\ref{EKGcase9},\ref{EKGcase10},\ref{EKGcase13}$ and
$\mbox{Cases\ }\ref{EKGcase11},\ref{EKGcase12}\nprec\mbox{Case\ }\ref{EKGcase13}$,
which is completely covered by the characteristic~$r_3$.

We derive the necessary and sufficient conditions for the parameter function~$\hat f$
under which equations from Cases~\ref{EKGcase1}--\ref{EKGcase9} have wider Lie invariance algebras
than equations with general values of~$\hat f$.
Here we omit Case~\ref{EKGcase10}
since we have shown in Section~\ref{sec:NonlinKGEqsProof} that this case admits no further Lie-symmetry extension.
We first consider Cases~\ref{EKGcase1}--\ref{EKGcase4},
for each of which the parameter function~$\hat f$ depends on two arguments
and the corresponding common Lie invariance algebra is one-dimensional.
By default, we assume that
the second derivative of $\hat f$ with respect to the argument involving~$u$ is nonzero.

Case~\ref{EKGcase1} possesses, modulo the $G^\sim$-equivalence, three families of further Lie-symmetry extensions,
which are given by Cases~\ref{EKGcase5}, \ref{EKGcase8} and~\ref{EKGcase9}.
Analyzing them, we conclude that for any further Lie-symmetry extension of Case~\ref{EKGcase1},
the corresponding invariance algebra contains $Q^2\in \mathfrak g_\spanindex$
with $\pi^x_*Q^2\ne0$ and $[Q^1,Q^2]\in\langle Q^1,Q^2\rangle$.
Therefore,  $[Q^1,Q^2]\in\langle Q^1\rangle$.
Up to rescaling of~$Q^2$, we can assume $[Q^1,Q^2]=\delta Q^1$ with $\delta\in\{0,1\}$.
We split the last commutation relation componentwise and integrate the obtained equations
for the components of~$Q^2$.
Linearly combining~$Q^2$ with~$Q^1$ if necessary, we derive the representation
$Q^2=\delta t\p_t +\xi(x)\p_x+\big(\eta^1u+\eta^0(x)\big)\p_u$,
where $\xi$ and~$\eta^0$ are arbitrary smooth functions of~$x$ with~$\xi\ne0$,
and $\eta^1$ is an arbitrary constant.
The substitution of this representation into the classifying equation~\eqref{eq:NonlinKGEqsClassifyingEq}
leads to the equation
\begin{gather}\label{eq:ExtCase1}
\xi\hat f_x +\big(\eta^1u+\eta^0\big)\hat f_u=\big(\eta^1-\delta-\xi_x\big)\hat f.
\end{gather}
For any value of the parameter function~$\hat f$ satisfying the last equation,
we indeed have a further Lie-symmetry extension of Case~\ref{EKGcase1},
which belongs, up to the $G^\sim$-equivalence,
to Case~\ref{EKGcase5} if $\eta^1\ne0$ and $\delta=0$,
to Case~\ref{EKGcase8} if $\eta^1\ne0$ and $\delta=1$, or
to Case~\ref{EKGcase9} if $\eta^1=0$.

Case~\ref{EKGcase3} is considered similarly to Case~\ref{EKGcase1}.
The further Lie-symmetry extensions of Case~\ref{EKGcase3} are exhausted, modulo the $G^\sim$-equivalence,
by Cases~\ref{EKGcase5} and~\ref{EKGcase6}.
The additional Lie-symmetry vector field $Q^2\in \mathfrak g_\spanindex$
satisfies the conditions $\pi^x_*Q^2\ne0$ and $[Q^1,Q^2]=0$.
This is why we can assume without loss of generality that,
up to linearly combining~$Q^2$ with~$Q^1$ and rescaling~$Q^2$,
$Q^2=\xi(x)\p_x+(\delta u+\eta^0(x))\p_u$,
where $\xi$ and~$\eta^0$ are arbitrary smooth functions of~$x$ with~$\xi\ne0$,
and $\delta\in\{0,1\}$.
Substituting such $Q^2$ into the classifying equation~\eqref{eq:NonlinKGEqsClassifyingEq}
and successively splitting with respect to~$t$ under assuming $x$ and~$\omega:={\rm e}^{-t}u$ as the other independent variables,
we derive one more constraint $\eta^0=0$ for components of~$Q^2$
and the equation
\begin{gather}\label{eq:ExtCase3}
\xi\hat f_x+\delta\omega\hat f_\omega=\big(\delta-\xi_x\big)\hat f.
\end{gather}
The last equation defines, up to the $G^\sim$-equivalence, further Lie-symmetry extensions
to Case~\ref{EKGcase5} or Case~\ref{EKGcase6} if $\delta=0$ or $\delta=1$, respectively.

Up to the $G^\sim$-equivalence, Case~\ref{EKGcase2} has further Lie-symmetry extensions to Cases~\ref{EKGcase6},~\ref{EKGcase7}, \ref{EKGcase9} and~\ref{EKGcase10}.
For the additional Lie-symmetry vector field $Q^2\in\mathfrak g_\spanindex$,
we have $[Q^1,Q^2]=\delta Q^1+\kappa Q^2$ for some constants~$\delta$ and~$\kappa$.
If $\kappa=0$, then, up to rescaling of~$Q^2$ and linearly recombining~$Q^2$ with~$Q^1$,
we can set $Q^2=(\delta t+\kappa')\p_t+\delta x\p_x+\big(\eta^1u+\hat\eta^0(\omega)\big)\p_u$,
where $\hat\eta^0$ is an arbitrary smooth function of~$\omega:=x-t$,
$\eta^1$ is an arbitrary constant,
$\delta\in\{0,1\}$, $\kappa'$ is an arbitrary constant if $\delta=1$, and $\kappa'=1$ if $\delta=0$.
Analogously to the previous cases, we substitute $Q^2$ into the classifying equation~\eqref{eq:NonlinKGEqsClassifyingEq}
and obtain the equation
\begin{gather}\label{eq:ExtCase2a}
(\delta\omega-\kappa')\hat f_\omega + \big(\eta^1u+\hat\eta^0(\omega)\big)\hat f_u=\big(\eta^1-2\delta\big)\hat f-\hat\eta^0_{\omega\omega}.
\end{gather}
In view of this equation, up to the $G^\sim$-equivalence,
we have further Lie-symmetry extensions
to Case~\ref{EKGcase6}  if $\delta=0$ and $\eta^1\ne0$,
to Case~\ref{EKGcase7}  if $\delta=1$ and $\eta^1\ne0$,
to Case~\ref{EKGcase9}  if $\delta=\eta^1=0$, and
to Case~\ref{EKGcase10} if $\delta=1$ and $\eta^1=0$.
If $\kappa\ne0$, then $\eta^1=0$. 
Linearly recombining~$Q^2$ with~$Q^1$, we can set $\delta=0$.
Hence $Q^2=C_1{\rm e}^{\kappa t}\p_t+C_2{\rm e}^{\kappa x}\p_x+{\rm e}^{\kappa t}\hat\eta^0(\omega)\p_u$,
where $\hat\eta^0$ is again an arbitrary smooth function of~$\omega:=x-t$,
and $C_1$ and~$C_2$ are arbitrary constants with $(C_1,C_2)\ne(0,0)$.
The classifying equation~\eqref{eq:NonlinKGEqsClassifyingEq} with such~$Q^2$ results in the equation
\begin{gather}\label{eq:ExtCase2b}
(C_2{\rm e}^{\kappa\omega} -C_1)\hat f_\omega+\hat\eta^0\hat f_u=-\kappa(C_1+C_2{\rm e}^{\kappa\omega})\hat f+\kappa\hat\eta^0_\omega-\hat\eta^0_{\omega\omega}.
\end{gather}
Here the conditions $C_1C_2=0$ and $C_1C_2\ne 0$ are associated with further Lie-symmetry extensions
to Cases~\ref{EKGcase9} and~\ref{EKGcase10}, respectively.

\looseness=1
All the classification cases with $n>1$ and $l>0$ are, up to the $G^\sim$-equivalence,
further Lie-symmetry extensions of Case~\ref{EKGcase4}.
Its direct Lie-symmetry extensions are exhausted, modulo the $G^\sim$-equivalence,
by Cases~\ref{EKGcase5}, \ref{EKGcase6}, \ref{EKGcase7} and~\ref{EKGcase8}.
In view of the form of~$Q^1$, we have the following commutation relation of~$Q^1$ with
the additional Lie-symmetry vector field $Q^2\in\mathfrak g_\spanindex$:
$[Q^1,Q^2]=\kappa Q^2$ for some constant $\kappa$.
The commutation relation implies the representation
$Q^2=C_1{\rm e}^{\kappa t}\p_t+C_2{\rm e}^{\kappa x}\p_x+{\rm e}^{(\kappa+2)t}\hat\eta^0(\omega_1)\p_u$,
where $\hat\eta^0$ is again an arbitrary smooth function of~$\omega_1:=x-t$,
and $C_1$ and~$C_2$ are arbitrary constants with \mbox{$(C_1,C_2)\ne(0,0)$} and, if $\kappa=0$, additionally $C_1\ne C_2$.
Substituting this representation into the classifying equation~\eqref{eq:NonlinKGEqsClassifyingEq},
we obtain the equation
\begin{gather}\label{eq:ExtCase4}
\begin{split}
&\big(C_2{\rm e}^{\kappa\omega_1}-C_1\big)\hat f_{\omega_1}+
\big({\rm e}^{-\omega_1}\hat\eta^0(\omega_1)- C_2\omega_2{\rm e}^{\kappa\omega_1}-C_1\omega_2\big)\hat f_{\omega_2}\\
&\qquad =
-(\kappa+1) \big(C_1 +C_2{\rm e}^{\kappa\omega_1}\big)\hat f+
(\kappa+2){\rm e}^{-\omega_1} \hat\eta^0_{\omega_1}- {\rm e}^{-\omega_1} \hat\eta^0_{\omega_1\omega_1},
\end{split}
\end{gather}
where $\omega_2:={\rm e}^{-x-t}u$.
Modulo the $G^\sim$-equivalence, we obtain extensions
to Case~\ref{EKGcase5} if $\kappa=0$   and $C_1C_2=0$,
to Case~\ref{EKGcase6} if $\kappa=0$   and $C_1C_2\ne0$,
to Case~\ref{EKGcase7} if $\kappa\ne0$ and $C_1C_2\ne0$, and
to Case~\ref{EKGcase8} if $\kappa\ne0$ and $C_1C_2=0$.

We summarize the above consideration in the following proposition.

\begin{proposition}\label{pro:FurtherLieSymExtestionsForCases1-4}
A generalized nonlinear Klein--Gordon equation from Cases~\ref{EKGcase1}--\ref{EKGcase4} admits an additional Lie-symmetry extension
if and only if the corresponding value of the parameter function~$\hat f$ satisfies
an equation~\eqref{eq:ExtCase1} in Case~\ref{EKGcase1},
an equation~\eqref{eq:ExtCase3} in Case~\ref{EKGcase3},
an equation~\eqref{eq:ExtCase2a} or~\eqref{eq:ExtCase2b} in Case~\ref{EKGcase2}, or
an equation~\eqref{eq:ExtCase4} in Case~\ref{EKGcase4}.
\end{proposition}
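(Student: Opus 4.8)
The proposition merely collects the outcome of the four case analyses carried out immediately above, so the plan is to organize that work into a uniform ``if and only if'' scheme, treating Cases~\ref{EKGcase1}, \ref{EKGcase3}, \ref{EKGcase2} and~\ref{EKGcase4} in turn. In each case the subclass~$\mathcal K_N$ carries the one-dimensional common invariance algebra $\langle Q^1\rangle=\mathfrak g^\cap_N$, and ``admitting an additional Lie-symmetry extension'' means precisely that $\mathfrak g_f\supsetneq\langle Q^1\rangle$ for the given value of~$\hat f$, i.e.\ that there is a vector field $Q^2\in\mathfrak g_f$ linearly independent of~$Q^1$. The whole argument is thus a reduction of the data $(Q^1,Q^2)$ to a normal form, followed by substitution into the classifying equation~\eqref{eq:NonlinKGEqsClassifyingEq}.

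First I would pin down the admissible commutator $[Q^1,Q^2]$. Since $\langle Q^1,Q^2\rangle$ is a subalgebra of~$\mathfrak g_f$, we have $[Q^1,Q^2]\in\langle Q^1,Q^2\rangle$, and the precise form is forced by matching components. For Case~\ref{EKGcase1}, where $Q^1=\p_t$, the commutator $[\p_t,Q^2]$ carries no $\p_x$-component while $Q^2$ does, as $\pi^x_*Q^2\ne0$ in every relevant extension; hence the $Q^2$-coefficient vanishes and $[Q^1,Q^2]\in\langle Q^1\rangle$, which after rescaling reads $[Q^1,Q^2]=\delta Q^1$ with $\delta\in\{0,1\}$. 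The analogous bookkeeping, exploiting whether or not $Q^1$ possesses a $u\p_u$-part, yields $[Q^1,Q^2]=0$ in Case~\ref{EKGcase3}, the two-parameter relation $[Q^1,Q^2]=\delta Q^1+\kappa Q^2$ in Case~\ref{EKGcase2}, and $[Q^1,Q^2]=\kappa Q^2$ in Case~\ref{EKGcase4}. Splitting each relation componentwise and integrating, and then using the equivalence transformations $\varpi_*\mathscr D^t(T)$, $\varpi_*\mathscr D^x(X)$, $\varpi_*\mathscr D^u(C)$ and $\varpi_*\mathscr Z(U^0)$ that \emph{preserve}~$Q^1$, together with linear recombination of~$Q^2$ with~$Q^1$, I would bring $Q^2$ to the canonical forms displayed above; for Case~\ref{EKGcase2} this splits according to whether $\kappa=0$ or $\kappa\ne0$ and thereby produces the two alternatives~\eqref{eq:ExtCase2a} and~\eqref{eq:ExtCase2b}.

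With $Q^2$ normalized, the second step is purely computational: I would substitute its components together with the case-specific form of~$f$ into the classifying equation~\eqref{eq:NonlinKGEqsClassifyingEq} and split with respect to the variables on which~$\hat f$ does not depend, for instance with respect to~$t$ using $x$ and $\omega={\rm e}^{-t}u$ as new coordinates in Case~\ref{EKGcase3}. This simultaneously fixes any residual freedom in the components of~$Q^2$, such as the constraint $\eta^0=0$ in Case~\ref{EKGcase3}, and reduces the classifying equation to exactly the stated equation on~$\hat f$, namely~\eqref{eq:ExtCase1} for Case~\ref{EKGcase1}, \eqref{eq:ExtCase3} for Case~\ref{EKGcase3}, \eqref{eq:ExtCase2a} or~\eqref{eq:ExtCase2b} for Case~\ref{EKGcase2}, and~\eqref{eq:ExtCase4} for Case~\ref{EKGcase4}. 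The converse is immediate: if~$\hat f$ solves the pertinent equation, then the associated~$Q^2$ satisfies~\eqref{eq:NonlinKGEqsClassifyingEq} by construction, so $Q^2\in\mathfrak g_f$ and the extension is realized; the recorded correspondence of the parameter values ($\eta^1$, $\delta$, $\kappa$, and the product $C_1C_2$) then identifies which of Cases~\ref{EKGcase5}--\ref{EKGcase10} is reached.

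The step I expect to be delicate is the normalization of~$Q^2$ in the necessity direction. One must verify that the commutator really is constrained to the claimed shape in \emph{every} possible extension---this rests on the component-matching argument above and on the low dimensionality of the projections $\pi^t_*\mathfrak g_f$ and $\pi^x_*\mathfrak g_f$ guaranteed by the Lie theorem and Corollary~\ref{cor:NonlinKGEqsProjectionDim}---and that each simplifying equivalence transformation applied to tidy~$Q^2$ leaves~$Q^1$, hence the ambient Case~$N$, intact, so that what is produced is genuinely an extension within~$\mathcal K_N$ rather than a passage to an inequivalent case. Once this is settled, the remaining manipulations are routine splittings and integrations.
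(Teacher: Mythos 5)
Your proposal matches the paper's argument essentially step for step: in each of Cases~\ref{EKGcase1}--\ref{EKGcase4} the paper also fixes the admissible form of $[Q^1,Q^2]$ by component matching (using $\pi^x_*Q^2\ne0$ and the presence or absence of a $u\p_u$-part in~$Q^1$), normalizes~$Q^2$ by linear recombination and by equivalence transformations stabilizing~$Q^1$, substitutes into the classifying equation~\eqref{eq:NonlinKGEqsClassifyingEq} and splits to obtain~\eqref{eq:ExtCase1}, \eqref{eq:ExtCase3}, \eqref{eq:ExtCase2a}/\eqref{eq:ExtCase2b} and~\eqref{eq:ExtCase4}, with the converse read off directly. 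The delicate points you single out (the commutator constraints, e.g.\ $\kappa\eta^1=0$ in Case~\ref{EKGcase2}, and the preservation of~$Q^1$ under the tidying transformations) are exactly the ones the paper attends to, so the proposal is correct and follows the same route.
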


Now we derive the conditions on the parameter function~$\hat f$
for the Lie invariance algebras presented in Cases~\ref{EKGcase5}--\ref{EKGcase9}
of Theorem~\ref{thm:NonlinKGEqsGroupClassification}
to be maximal for the corresponding equations from the class~\eqref{eq:NonlinKGEqs}.
In each of these cases, the arbitrary element~$f$ takes a value of the form
$f=\alpha(t,x)\hat f(\omega)$, where $\omega:=\beta(t,x)u$,
$\alpha$ and $\beta$ are nonzero known functions of~$(t,x)$,
and $\hat f_{\omega\omega}\ne0$ since $f_{uu}\ne0$.
Substituting this form for~$f$ into the classifying equation~\eqref{eq:NonlinKGEqsClassifyingEq},
we obtain the classifying equation in terms of~$\hat f$,
\begin{gather}\label{eq:ClassifyingEqForCases5-10}
\bigg(\bigg(\frac{\beta_t}\beta\tau+\frac{\beta_x}\beta\xi+\eta^1\bigg)\omega+\beta\eta^0\bigg)\hat f_\omega
+\bigg(\tau_t+\xi_x+\frac{\alpha_t}\alpha\tau+\frac{\alpha_x}\alpha\xi-\eta^1\bigg)\hat f-\frac{\eta^0_{tx}}\alpha=0.
\end{gather}
We apply the method of furcate splitting, see~\cite{bihl2020a,niki2001a,opan2020b} and references therein.
Fixing values of the variables~$t$ and~$x$ gives the template form of equations for values of~$\hat f$,
for which the equation~$K_f$ possesses an additional Lie-symmetry extension,
\begin{gather}\label{eq:TemplateFormForCases5-10}
(a\omega+b)\hat f_\omega+c\hat f-d=0,
\end{gather}
where $a$, $b$, $c$ and~$d$ are constants with $(a,b)\ne(0,0)$.
Additionally, in view of $\hat f_{\omega\omega}\ne0$ we have $c\ne-a$ if $a\ne0$ and $c\ne0$ if $a=0$.
Moreover, the number of equations of the form~\eqref{eq:TemplateFormForCases5-10}
with linearly independent tuples $(a,b,c,d)$ cannot exceed one since otherwise $\hat f_{\omega\omega}=0$.
In other words, we have exactly one independent equation of the form~\eqref{eq:TemplateFormForCases5-10}
if the equation~$K_f$ possesses an additional Lie-symmetry extension.
This means that the left-hand side of~\eqref{eq:ClassifyingEqForCases5-10} is proportional to
that of~\eqref{eq:TemplateFormForCases5-10} with nonvanishing multiplier~$\lambda$ depending on~$(t,x)$,
\begin{gather*}
\bigg(\bigg(\frac{\beta_t}\beta\tau+\frac{\beta_x}\beta\xi+\eta^1\bigg)\omega+\beta\eta^0\bigg)\hat f_\omega
+\bigg(\tau_t+\xi_x+\frac{\alpha_t}\alpha\tau+\frac{\alpha_x}\alpha\xi-\eta^1\bigg)\hat f-\frac{\eta^0_{tx}}\alpha\\
\qquad{}=\lambda\big((a\omega+b)\hat f_\omega+c\hat f-d\big).
\end{gather*}
The last equation can be split with respect to $\hat f$ and~$\hat f_\omega$ into the system
\begin{gather}\label{eq:SplitClassifyingEqForCases5-10}
\frac{\beta_t}\beta\tau+\frac{\beta_x}\beta\xi+\eta^1=a\lambda,\quad
\tau_t+\xi_x+\frac{\alpha_t}\alpha\tau+\frac{\alpha_x}\alpha\xi-\eta^1=c\lambda,\quad
\beta\eta^0=b\lambda,\quad
\eta^0_{tx}=d\alpha\lambda.
\end{gather}

If $a\ne0$, then we can make $a=1$ by rescaling the template-form equation~\eqref{eq:TemplateFormForCases5-10},
and thus $c\ne-1$.
The first two equations of the system~\eqref{eq:SplitClassifyingEqForCases5-10} are combined to
\begin{gather}\nonumber
\lambda=\frac{\beta_t}{\beta}\tau+\frac{\beta_x}{\beta}\xi+\eta^1,
\\\label{eq:ReducedClassifyingEqForCases5-10}
\tau_t+\left(\frac{\alpha_t}{\alpha}-c\frac{\beta_t}{\beta}\right)\tau+\xi_x+\left(\frac{\alpha_x}{\alpha}-c\frac{\beta_x}{\beta}\right)\xi=(c+1)\eta^1,
\end{gather}
and only the last equation plays the role of a classifying condition.
The third and fourth equations of the system~\eqref{eq:SplitClassifyingEqForCases5-10} merely establish a relation
between the constant parameters~$b$ and~$d$.
Indeed, in each of Cases~\ref{EKGcase5}--\ref{EKGcase9}, we have $(1/\beta)_{tx}$ is proportional to~$\alpha$,
$(1/\beta)_{tx}=C\alpha$ for some constant~$C$.
Hence we can always set $b=0$ by the equivalence transformation $\mathscr Z(b/\beta)$, 
which adds $bC$ to $\hat f$ and also makes $\eta^0=0$ and $d=0$.
In the general case $a\ne0$, the relation between~$b$ and~$d$ is $ad=bcC$.
For specific values of the parameter functions~$\alpha$ and~$\beta$ in \mbox{Cases~\ref{EKGcase5}--\ref{EKGcase9}},
we obtain the following values of the constant parameter~$C$ and the following forms of
the reduced classifying equation~\eqref{eq:ReducedClassifyingEqForCases5-10}:
\begin{gather*}
\mbox{Case~\ref{EKGcase5}}\colon\quad C=0,     \quad \tau_t+c\tau+\xi_x=(c+1)\eta^1; \\
\mbox{Case~\ref{EKGcase6}}\colon\quad C=1,     \quad \tau_t+c\tau+\xi_x+c\xi=(c+1)\eta^1;\\
\mbox{Case~\ref{EKGcase7}}\colon\quad C=q(q+1),\quad (x-t)(\tau_t+\xi_x)+(q+2+cq)(\tau-\xi)=(c+1)(x-t)\eta^1;\\
\mbox{Case~\ref{EKGcase8}}\colon\quad C=0,     \quad \tau_t+\xi_x-(q+2+cq)x^{-1}\xi=(c+1)\eta^1;\\
\mbox{Case~\ref{EKGcase9}}\colon\quad C=0,     \quad \tau_t+\xi_x=(c+1)\eta^1.
\end{gather*}
In each of Cases~\ref{EKGcase5}, \ref{EKGcase6}, \ref{EKGcase8} and~\ref{EKGcase9},
the corresponding classifying equation implies that there is a further Lie-symmetry extension if and only if
the associated value of the parameter function~$\hat f$ satisfies the equation~\eqref{eq:TemplateFormForCases5-10}
with arbitrary~$a\ne0$, $c$ and $b$ and with $d=bcC/a$.
Here the extension is given by Case~\ref{EKGcase12} with an arbitrary nonzero constant~$p$.
Case~\ref{EKGcase7} is similar
but the constant~$c$ is related to~$a$ according to $c=-(1+2/q)a$, $d=-(q+1)(q+2)b$,
and the extension for a fixed~$q$ is given by Case~\ref{EKGcase12}$_p$ with $p=2/q$.

If $a=0$, then $bc\ne0$, and we rescale the template-form equation~\eqref{eq:TemplateFormForCases5-10}
for making $c=1$.
Then
\begin{gather}\label{eq:ReducedClassifyingEqForCases5-10:a=0}
\frac{\beta_t}{\beta}\tau+\frac{\beta_x}{\beta}\xi+\eta^1=0,\quad
\lambda=\tau_t+\xi_x+\frac{\alpha_t}\alpha\tau+\frac{\alpha_x}\alpha\xi-\eta^1,\quad
\eta^0=b\frac\lambda\beta,\quad
\left(\frac\lambda\beta\right)_{tx}=\frac db\alpha\lambda.
\end{gather}
In Cases~\ref{EKGcase5}--\ref{EKGcase9},
the first two equations of~\eqref{eq:ReducedClassifyingEqForCases5-10:a=0} respectively reduce to
\begin{gather*}
\mbox{Case~\ref{EKGcase5}}\colon\ \tau=\eta^1,\ \ \lambda=\xi_x;\\ 
\mbox{Case~\ref{EKGcase6}}\colon\ \tau+\xi-\eta^1=0,\ \ \lambda=\tau_t+\xi_x;\quad 
\mbox{Case~\ref{EKGcase7}}\colon\ q\frac{\tau-\xi}{x-t}-\eta^1=0,\ \ \lambda=\tau_t+\xi_x+2\frac{\tau-\xi}{x-t};\\ 
\mbox{Case~\ref{EKGcase8}}\colon\ q\xi+\eta^1x=0,\ \ \lambda=\tau_t-2\frac{\eta^1}q;\quad 
\mbox{Case~\ref{EKGcase9}}\colon\ \eta^1=0,\ \ \lambda=\tau_t+\xi_x. 
\end{gather*}

In Case~\ref{EKGcase5}, we have no additional constraints on the parameters $b$, $c$ and~$d$.
The last two equations of~\eqref{eq:ReducedClassifyingEqForCases5-10:a=0} in view of the first ones just define $\eta^0$ and~$\xi$,
$\eta^0=b\xi_x{\rm e}^t$, $b\xi_{xx}=d\xi_x$.

In Cases~\ref{EKGcase6} and~\ref{EKGcase7}, the above equations imply $\lambda=0$, which contradicts the inequality $\lambda\ne0$.
This means that these cases possess no further Lie-symmetry extensions with $a=0$.

The system~\eqref{eq:ReducedClassifyingEqForCases5-10:a=0} implies $\tau_{tt}=0$ and $d=0$ in Case~\ref{EKGcase8}
and $d=0$ in Case~\ref{EKGcase9},
which correspond to the Lie-symmetry extensions to Cases~\ref{EKGcase11} and~\ref{EKGcase13}, respectively.

Merging the conditions derived separately for $a\ne0$ and~$a=0$,
we obtain the following proposition.

\begin{proposition}\label{pro:FurtherLieSymExtestionsForCases5-9}
A generalized nonlinear Klein--Gordon equation from Cases~\ref{EKGcase5}--\ref{EKGcase9} admits an additional Lie-symmetry extension
if and only if the corresponding value of the parameter function~$\hat f$ satisfies
an equation~\eqref{eq:TemplateFormForCases5-10} with $(a,b)\ne(0,0)$, where
$ad=0$ in Case~\ref{EKGcase5},
$ad=bc$ in Case~\ref{EKGcase6},
$c=-(1+2/q)a$, $d=-(q+1)(q+2)b$ in Case~\ref{EKGcase7}$_q$, and
$d=0$ in Cases~\ref{EKGcase8} and~\ref{EKGcase9}.
\end{proposition}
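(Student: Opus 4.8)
The plan is to read the statement off the furcate-splitting analysis assembled above. In each of Cases~\ref{EKGcase5}--\ref{EKGcase9} the arbitrary element has the normalized form $f=\alpha(t,x)\hat f(\omega)$ with $\omega=\beta(t,x)u$ and $\hat f_{\omega\omega}\ne0$, so I would first substitute this form into the classifying equation~\eqref{eq:NonlinKGEqsClassifyingEq} to obtain the $\hat f$-classifying equation~\eqref{eq:ClassifyingEqForCases5-10}. The key structural input is that any value of $\hat f$ producing a genuine extension satisfies \emph{exactly one} template equation~\eqref{eq:TemplateFormForCases5-10}, $(a\omega+b)\hat f_\omega+c\hat f-d=0$ with $(a,b)\ne(0,0)$, since a second linearly independent template would force $\hat f_{\omega\omega}=0$. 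Proportionality of the left-hand side of~\eqref{eq:ClassifyingEqForCases5-10} to this template, with multiplier $\lambda(t,x)$, and splitting with respect to $\hat f$ and $\hat f_\omega$, yields the four-equation system~\eqref{eq:SplitClassifyingEqForCases5-10}, which is the object I would then analyze.

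Next I would dichotomize on whether $a$ vanishes. For $a\ne0$ I normalize $a=1$ (so $c\ne-1$ by non-degeneracy); the first two equations of~\eqref{eq:SplitClassifyingEqForCases5-10} combine into the reduced classifying condition~\eqref{eq:ReducedClassifyingEqForCases5-10}, while the last two collapse to a pure relation between the constants $b$ and $d$. The crucial observation is that in each of Cases~\ref{EKGcase5}--\ref{EKGcase9} one has $(1/\beta)_{tx}=C\alpha$ for an explicit constant $C$, so the $\mathscr Z(b/\beta)$-gauge turns the $b$,$d$-equations into $ad=bcC$ with $C=0,1,q(q+1),0,0$ respectively. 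Reading off these values gives $ad=0$ in Case~\ref{EKGcase5}, $ad=bc$ in Case~\ref{EKGcase6}, and $d=0$ in Cases~\ref{EKGcase8} and~\ref{EKGcase9}, while for Case~\ref{EKGcase7} the specific form of~\eqref{eq:ReducedClassifyingEqForCases5-10} additionally pins $c=-(1+2/q)a$, whence $ad=bcC$ simplifies to $d=-(q+1)(q+2)b$.

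For $a=0$ I would use $bc\ne0$, normalize $c=1$, and pass to system~\eqref{eq:ReducedClassifyingEqForCases5-10:a=0}. In Cases~\ref{EKGcase6} and~\ref{EKGcase7} its first two equations force $\lambda=0$, contradicting $\lambda\ne0$, so no extension with $a=0$ occurs there; correspondingly, the $a\ne0$ conditions $ad=bc$ and $c=-(1+2/q)a$ are incompatible with the non-degeneracy $c\ne0$ once $a=0$, so stating them for all $(a,b)\ne(0,0)$ already excludes the $a=0$ branch. In Cases~\ref{EKGcase5}, \ref{EKGcase8}, \ref{EKGcase9} the $a=0$ branch is consistent and yields $d=0$ in Cases~\ref{EKGcase8} and~\ref{EKGcase9} (the extensions to Cases~\ref{EKGcase11} and~\ref{EKGcase13}) and no new constraint in Case~\ref{EKGcase5}; merging with the $a\ne0$ branch reproduces $ad=0$, $d=0$, $d=0$ respectively. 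Collecting all branches gives the conditions in the statement.

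I expect the main obstacle to be the bookkeeping around the identity $(1/\beta)_{tx}=C\alpha$ together with the non-degeneracy constraint $\hat f_{\omega\omega}\ne0$: one must check that each admitted tuple $(a,b,c,d)$ really supports a template solution with $\hat f_{\omega\omega}\ne0$, and that such a solution is genuinely realized by an equation in the class rather than being an artifact of the proportionality ansatz. The sufficiency direction—producing the extra Lie-symmetry vector field from a solution of~\eqref{eq:TemplateFormForCases5-10}—is immediate by reversing the splitting, so the substance of the argument lies in the necessity part encapsulated in the clean dichotomy on $a$.
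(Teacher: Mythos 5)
Your proposal follows essentially the same route as the paper: substitution of $f=\alpha\hat f(\beta u)$ into the classifying equation, furcate splitting to the single template equation~\eqref{eq:TemplateFormForCases5-10}, the proportionality ansatz with multiplier~$\lambda$, the dichotomy on $a$, and the identity $(1/\beta)_{tx}=C\alpha$ with $C=0,1,q(q+1),0,0$ yielding $ad=bcC$ and the extra constraint $c=-(1+2/q)a$ in Case~\ref{EKGcase7}. The merging of the $a\ne0$ and $a=0$ branches (including the $\lambda=0$ contradiction in Cases~\ref{EKGcase6} and~\ref{EKGcase7}) matches the paper's argument, so the proposal is correct and not materially different.
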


\begin{remark}\label{rem:NonlinKGEqsNormalizationOfSubclassesKn}
In view of the infinitesimal counterpart of~\cite[Proposition~10]{popo2010a},
all the subclasses~$\mathcal K'_N$, $N\in\Gamma$, of~$\mathcal K$ that are associated with strong Cases~\ref{EKGcase0}--\ref{EKGcase13}
are normalized, see Remark~\ref{rem:NonlinKGEqsOnMaxLieSymExtensions} for notation and definitions.
At the same time, this is not the case for most of the subclasses~$\mathcal K_N$, $N\in\Gamma$.
More specifically, the subclasses~$\mathcal K_1$, \dots, $\mathcal K_6$, $\mathcal K_{7_q}$, $\mathcal K_{8_q}$ and $\mathcal K_9$
are not normalized in view of the following arguments:
\begin{itemize}\itemsep=1ex
\item
$\mathcal K_1\supset\mathcal K_9$ \ but \ $G^\sim_1\nsupseteq G^\sim_9$ \
since \ $\mathscr I^0\in G^\sim_9\setminus G^\sim_1$;
\item
$\mathcal K_2\supset\mathcal K_9$ \ but \ $G^\sim_2\nsupseteq G^\sim_9$ \
since \ $\mathscr I^t\in G^\sim_9\setminus G^\sim_2$;
\item
$\mathcal K_3\ni K_f$ with $f={\rm e}^{-t}u^2$ \ but \ $\varpi_*G^\sim_3\nsupseteq G_f$ \
since \ $\varpi_*\big(\mathscr D^t(-\ln t)\circ\mathscr I^0\circ\mathscr D^t({\rm e}^{-t})\big)\in G_f\setminus\varpi_*G^\sim_3$;
\item
$\mathcal K_4\ni K_f$ \ with \ $f={\rm e}^{-p(t+x)}|u|^pu$ \ for any \ $p\ne0$ \ but \ $\varpi_*\mathfrak g^\sim_4\nsupseteq\mathfrak g_f$ \
since \ $D^t({\rm e}^{pt})\in\mathfrak g_f\setminus\varpi_*\mathfrak g^\sim_4$;
\item
$\mathcal K_5\ni K_f$ with $f={\rm e}^{-t}u^2$ \ but \ $\varpi_*G^\sim_5\nsupseteq G_f$ \
since \ $\varpi_*\big(\mathscr D^t(-\ln t)\circ\mathscr I^0\circ\mathscr D^t({\rm e}^{-t})\big)\in G_f\setminus\varpi_*G^\sim_5$;
\item
$\mathcal K_6\ni K_f$ \ with \ $f={\rm e}^{-p(t+x)}|u|^pu$ \ for any \ $p\ne0$ \ but \ $\varpi_*\mathfrak g^\sim_6\nsupseteq\mathfrak g_f$ \
since \ $D^t({\rm e}^{pt})\in\mathfrak g_f\setminus\varpi_*\mathfrak g^\sim_6$;
\item
$\mathcal K_{7,q}\supset\mathcal K_{12,p}$ with $p=2/q$ \ but \ $G^\sim_{7,q}\nsupseteq G^\sim_{12,p}$ \
since \ $\mathscr D^t(2t)\circ\mathscr D^x(x/2)\in G^\sim_{12,p}\setminus G^\sim_{7,q}$;
\item
$\mathcal K_{8,q}\supset\mathcal K_{12,p}$ with $p=2/q$ \ but \ $G^\sim_{8,q}\nsupseteq G^\sim_{12,p}$ \
since \ $\mathscr I^0\in G^\sim_{12,p}\setminus G^\sim_{8,q}$;
\item
$\mathcal K_9\supset\mathcal K_{13}=\{K_{{\rm e}^u}\}$ \ but \ $G^\sim_9\nsupseteq G^\sim_{13}$ \
since \ $\mathscr D^t(-t^{-1})\circ\mathscr Z(2\ln|t|)\in G^\sim_{13}\setminus G^\sim_9$.
\end{itemize}
Here $G_f$ denotes the point symmetry group of the equation~$K_f$.
The subclasses~$\mathcal K_0$, $\mathcal K_{10}$, $\mathcal K_{11}$, $\mathcal K_{12,p}$ and $\mathcal K_{13}$ are normalized
since they coincide with~$\mathcal K$, $\mathcal K'_{10}$ and the singletons $\{K_{{\rm e}^{u/x}}\}$, $\{K_{|u|^pu}\}$ and $\{K_{{\rm e}^u}\}$, respectively.%
\footnote{%
It is obvious that a class consisting of a single system of differential equations is normalized.
}
\end{remark}

\section{On group classification of subclasses}\label{sec:OnGroupClassificationOfSubclasses}

Although we have exhaustively solved the group classification problem for the class~$\mathcal K$,
which consists of the equations of the form~\eqref{eq:NonlinKGEqs},
this does not directly lead to the solution of the group classification problem for each of the subclasses of~$\mathcal K$.
Given a subclass~$\hat{\mathcal K}$ of~$\mathcal K$, Theorem~\ref{thm:NonlinKGEqsGroupClassification} is used
for the group classification of the subclass~$\hat{\mathcal K}$ with respect to its equivalence group~$G^\sim_{\hat{\mathcal K}}$ as follows.
\begin{itemize}
\item
Recalling the normalization of~$\mathcal K$,
construct the equivalence group~$G^\sim_{\hat{\mathcal K}}$ as the subgroup of~$G^\sim$
that consists of the elements of~$G^\sim$ preserving the subclass~$\hat{\mathcal K}$.
\item
For each $N\in\Gamma$, intersect the subclass~$\hat{\mathcal K}$ with the $G^\sim$-orbit $G^\sim_*\mathcal K'_N$ the subclass~$\mathcal K'_N$
(resp.\ with the $G^\sim$-orbit $G^\sim_*\mathcal K_N$ the subclass~$\mathcal K_N$).
This is realized via selecting those values of the arbitrary element~$f$ for equations from the orbit
that satisfy the additional auxiliary constraint singling out the subclass~$\hat{\mathcal K}$ from the class~$\mathcal K$.
The collection of the intersections presents a complete list of Lie symmetry extensions within the subclass~$\hat{\mathcal K}$.
\item
In the selected values of~$f$, gauge parameters by transformations from~$G^\sim_{\hat{\mathcal K}}$.
\end{itemize}
Since the subclass~$\hat{\mathcal K}$ is in general not normalized,
the above procedure looks easier than directly solving the group classification problem for the subclass~$\hat{\mathcal K}$
although its computational complexity is quite high.

The subclasses~$\mathcal K_N$, $N\in\Gamma$, of~$\mathcal K$
that are associated with weak Cases~\ref{EKGcase1}--\ref{EKGcase9},
cf.\ Remark~\ref{rem:NonlinKGEqsOnMaxLieSymExtensions},
are specific in regard to the above procedure.
The group classification of each of these subclasses up to the equivalence generated by the corresponding equivalence groupoid
can be easily derived via analyzing the Hasse diagram in Figure~\ref{fig:HasseDiagram},
which depicts the structure of the partially ordered set of these cases.
Nevertheless, this is not the case for the group classification up to the equivalence generated by the corresponding equivalence group
since most of the subclasses~$\mathcal K_N$, $N\in\Gamma$, are not normalized,
see Remark~\ref{rem:NonlinKGEqsNormalizationOfSubclassesKn}.
Note that both the group classifications of the subclasses~$\mathcal K_{10}$, $\mathcal K_{11}$, $\mathcal K_{12,p}$ and $\mathcal K_{13}$
are trivial since the kernel Lie invariance algebra of equations from each of these subclasses
is the maximal Lie invariance algebra for every such equation.

Consider in detail the subclass~$\mathcal K_2$ of the class~\eqref{eq:NonlinKGEqs},
which is related to weak Case~\ref{EKGcase2} of Theorem~\ref{thm:NonlinKGEqsGroupClassification}
and thus consists of the equations of the form
\begin{gather*}
u_{tx}=f(\omega,u), \quad\mbox{where}\quad \omega:=x-t,\quad f_{uu}\ne0,
\end{gather*}
or, in the variables $(\check t,\check x,\check u)=(x+t,x-t,u)$ with $\check f(\check x,\check u)=f(\omega,u)$,
\begin{gather*}
\check u_{\check t\check t}-\check u_{\check x\check x}=\check f(\check x,\check u),\quad \check f_{\check u\check u}\ne0.
\end{gather*}

\begin{lemma}\label{lem:NonlinKGEqsGsim2}
The equivalence group~$G^\sim_2$ of the class~$\mathcal K_2$ is constituted by the transformations of the form
\begin{gather}\label{eq:NonlinKGEqsGsim2}
\tilde t=c_1t+c_2,\quad
\tilde x=c_1x+c_3,\quad
\tilde u=c_4u+U^0(\omega),\quad
\tilde f=c_1^{-2}(c_4f-U^0_{\omega\omega})
\end{gather}
and the discrete equivalence transformation $\mathscr I^0$: $\tilde t= x$, $\tilde x=t$, $\tilde u=u$, $\tilde f=f$.
Here $c_1$, \dots,~$c_4$ are arbitrary constants with $c_1c_4\ne0$,
and $U^0$ is an arbitrary smooth function of $\omega:=x-t$.
\end{lemma}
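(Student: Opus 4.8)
The plan is to exploit the normalization of the whole class~$\mathcal K$ established in Lemma~\ref{lem:NonlinKGEqsGsim}. Since $\mathcal K_2\subset\mathcal K$, every admissible transformation within~$\mathcal K_2$ is in particular an admissible transformation within~$\mathcal K$ and hence, by normalization, is generated by an element of~$G^\sim$. Following the procedure of Section~\ref{sec:OnGroupClassificationOfSubclasses}, this reduces the computation of~$G^\sim_2$ to singling out those elements of~$G^\sim$ that preserve the subclass~$\mathcal K_2$, i.e., that map every equation with $f=f(\omega,u)$, $\omega=x-t$, to an equation of the same form. Recalling that each element of~$G^\sim$ is either of the form~\eqref{eq:NonlinKGEqsGsim0} or a composition of such a transformation with~$\mathscr I^0$, I would treat these two possibilities in turn.

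First I would take a transformation~$\mathscr T$ of the form~\eqref{eq:NonlinKGEqsGsim0}, with parameters $T(t)$, $X(x)$, $C$ and $U^0(t,x)$, and impose that $\tilde f=(Cf+U^0_{tx})/(T_tX_x)$ again depends on $(\tilde t,\tilde x)$ only through $\tilde x-\tilde t$. Membership of~$\tilde f$ in~$\mathcal K_2$ is the condition $(\p_{\tilde t}+\p_{\tilde x})\tilde f=0$; rewriting $\p_{\tilde t}+\p_{\tilde x}$ in the coordinates $(t,x,u)$ via the chain rule gives the operator
\[
\frac1{T_t}\p_t+\frac1{X_x}\p_x-\frac1C\Big(\frac{U^0_t}{T_t}+\frac{U^0_x}{X_x}\Big)\p_u ,
\]
and applying it to~$\tilde f$ and clearing the nonzero factor $T_tX_x$ yields one scalar relation. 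Because this relation must hold for every admissible $f=f(\omega,u)$, and its coefficients depend on $(t,x)$ alone while $f$, $f_\omega$ and $f_u$ can be prescribed independently at each point (for instance by a choice quadratic in~$u$ keeping $f_{uu}\ne0$), I would split it with respect to $f$, $f_\omega$ and $f_u$. The coefficient of~$f_\omega$ forces $T_t=X_x$, whence $T_t=X_x=c_1$ is a nonzero constant and $T=c_1t+c_2$, $X=c_1x+c_3$; the coefficient of~$f$ then holds automatically, the coefficient of~$f_u$ becomes $U^0_t+U^0_x=0$, so $U^0=U^0(\omega)$, and the remaining $f$-free term reduces to $(U^0_t+U^0_x)_{tx}=0$, which is automatic. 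Substituting $T_tX_x=c_1^2$ and $U^0_{tx}=-U^0_{\omega\omega}$ into~\eqref{eq:NonlinKGEqsGsim0} reproduces exactly the family~\eqref{eq:NonlinKGEqsGsim2}.

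It remains to deal with the discrete factor. A direct check shows that $\mathscr I^0$ sends $f=f(\omega,u)$ to $\tilde f=f(-\tilde\omega,\tilde u)$ with $\tilde\omega=\tilde x-\tilde t$, so $\mathscr I^0$ preserves~$\mathcal K_2$. Since every element of~$G^\sim$ is either in the subgroup~$H$ or in $\mathscr I^0\circ H$, and $\mathscr I^0$ is an involution preserving~$\mathcal K_2$, the elements of~$G^\sim$ preserving~$\mathcal K_2$ are precisely the transformations~\eqref{eq:NonlinKGEqsGsim2} together with their compositions with~$\mathscr I^0$, which is the asserted description of~$G^\sim_2$. The computations above are routine once set up; the one step that needs care is the conceptual reduction in the first paragraph — justifying, via the normalization of~$\mathcal K$, that $G^\sim_2$ is exactly the stabilizer of~$\mathcal K_2$ in~$G^\sim$ and contains no further transformations — together with the verification that the splitting with respect to~$f$, $f_\omega$ and~$f_u$ is legitimate.
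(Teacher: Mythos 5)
Your proposal is correct and follows essentially the same route as the paper: both use the normalization of $\mathcal K$ to identify $G^\sim_2$ with the stabilizer of $\mathcal K_2$ in $G^\sim$, note that $\mathscr I^0\in G^\sim_2$, and then impose the invariance of the constraint $f_t+f_x=0$ on a transformation of the form~\eqref{eq:NonlinKGEqsGsim0} and split to get $T_t=X_x$ and $U^0_t+U^0_x=0$. The only (immaterial) difference is that you pull the operator $\p_{\tilde t}+\p_{\tilde x}$ back to the old coordinates and split with respect to $f$, $f_\omega$, $f_u$, whereas the paper applies $\p_t+\p_x$ to the pushforward relation~\eqref{eq:NonlinKGEqsGsim2Condition} and splits with respect to $\tilde f$ and its derivatives.
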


\begin{proof}
Since the class~\eqref{eq:NonlinKGEqs} is normalized, the equivalence group of any subclass of~\eqref{eq:NonlinKGEqs}
is the subgroup of~$G^\sim$ that consists of the elements of~$G^\sim$ preserving this subclass.
It is obvious that the transformation~$\mathscr I^0$ belongs to~$G^\sim_2$.
Any transformation of the form~\eqref{eq:NonlinKGEqsGsim0}
that is contained by the group~$G^\sim_2$ satisfies the equation
\begin{gather}\label{eq:NonlinKGEqsGsim2Condition}
T_tX_x\tilde f(X-T,U)=Cf(x-t,u)+U^0_{tx}.
\end{gather}
We act on~\eqref{eq:NonlinKGEqsGsim2Condition} by the operator $\p_t+\p_x$, obtaining
\begin{gather*}
(X_x-T_t)\,\tilde f_{\tilde\omega}(X-T,U)+(U^0_t+U^0_x)\,\tilde f_{\tilde u}(X-T,U)+\frac{T_{tt}X_x+T_tX_{xx}}{T_tX_x}\,\tilde f(X-T,U)\\
\qquad=\frac{U^0_{ttx}+U^0_{txx}}{T_tX_x}.
\end{gather*}
Since $\tilde f$ is an unconstrained value of the arbitrary element of the class~$\mathcal K_2$,
for computing~$G^\sim_2$, we can split the last equation with respect to $\tilde f$ and its derivatives.
As a result, we derive the equations $T_t=X_x$ and $U^0_t+U^0_x=0$ on the parameters involved in the form~\eqref{eq:NonlinKGEqsGsim0}.
The integration of these equations implies the form~\eqref{eq:NonlinKGEqsGsim2}.
\end{proof}

\looseness=-1
The appropriate subalgebra $\mathfrak s=\langle\p_t+\p_x\rangle$ of $\mathfrak g_\spanindex$
is the kernel Lie invariance algebra of the equations from the class~$\mathcal K_2$.
In other words, Case~\ref{EKGcase2} is the general case with no Lie-symmetry extensions
within this class.
It is not normalized since
the action groupoid of~$G^\sim_2$ is properly contained in the equivalence groupoid~$\mathcal G^\sim_2$.
Indeed, many equations in~$\mathcal K_2$, e.g., the Liouville equation,
possess points symmetries that are not related to equivalence transformations of~$\mathcal K_2$. 
Moreover, Theorem~\ref{thm:NonlinKGEqsSuclassOfCase2AdmTrans} below implies 
that the class~$\mathcal K_2$ is not semi-normalized as well. 
This is why it is natural that the group classifications of the class~$\mathcal K_2$
up to the $\mathcal G^\sim_2$- and the \mbox{$G^\sim_2$-equivalences} are different.
We easily see from the Hasse diagram in Figure~\ref{fig:HasseDiagram} that
\mbox{$\mathcal G^\sim_2$-inequivalent} cases of Lie-symmetry extensions within the class~$\mathcal K_2$
are exhausted by Cases~\ref{EKGcase6}, \ref{EKGcase7}, \ref{EKGcase9}, \ref{EKGcase10}, \ref{EKGcase12} and~\ref{EKGcase13},
which gives the complete group classification of this class up to the $\mathcal G^\sim_2$-equivalence.
(In~Case~\ref{EKGcase6}, we should additionally alternate the sign of~$x$.)

The complete group classification of the class~$\mathcal K_2$ up to the $G^\sim_2$-equivalence is more delicate.
It can be derived from the group classification of the superclass~$\mathcal W$ of~$\mathcal K_2$,
which consists of the equations of the following form in the variables $(\check t,\check x,\check u)$:
\begin{gather*}
\check u_{\check t\check t}-\check g(\check x,\check u)\check u_{\check x\check x}=\check f(\check x,\check u),\quad
(\check g_{\check u},\check f_{\check u\check u})\ne(0,0).
\end{gather*}
It is obvious that the class~$\mathcal K_2$ is singled out from the superclass~$\mathcal W$ by the constraint $\check g=1$.
The comprehensive group analysis of the class~$\mathcal W$ was carried out in~\cite{vane2020b},
where a different notation of the arbitrary elements~$\check g$ and~$\check f$ was used,
$\check g\rightsquigarrow f$ and $\check f\rightsquigarrow g$.
The equivalence group~$G^\sim_{\mathcal W}$ and the equivalence groupoid~$\mathcal G^\sim_{\mathcal W}$ of~$\mathcal W$
were described in~\cite[Theorem~6]{vane2020b} and in~\cite[Theorem~9]{vane2020b}, respectively.
The action groupoid of~$G^\sim_{\mathcal W}$ is a proper subgroupoid of the groupoid~$\mathcal G^\sim_{\mathcal W}$, 
i.e., the superclass~$\mathcal W$ is not normalized.
The restriction of the action groupoid of~$G^\sim_{\mathcal W}$ to the subclass~$\mathcal K_2$ of~$\mathcal W$
coincides with the action groupoid of~$G^\sim_2$.
This is why the complete group classification of the class~$\mathcal K_2$ up to the $G^\sim_2$-equivalence
can be singled out from the complete group classification of the superclass~$\mathcal W$ up to the $G^\sim_{\mathcal W}$-equivalence,
which was presented in~\cite[Theorem~8]{vane2020b}.
Since the gauge $\check g=1$ modulo the $G^\sim_{\mathcal W}$-equivalence
was used for representatives of Lie-symmetry extensions whenever it was possible,
to classify Lie symmetries of equations from~$\mathcal K_2$ up to the $G^\sim_2$-equivalence
it suffices to select all the cases of~\cite[Table~1]{vane2020b} with $\check g=1$,
i.e., $f=1$ in the notation of~\cite{vane2020b}, write them in the variables $(t,x,u)$,
and supplement the result with Cases~\ref{EKGcase6} and~\ref{EKGcase7}
of Theorem~\ref{thm:NonlinKGEqsGroupClassification} of the present paper,
which are the counterparts of the appropriate portions of Cases~1 and~2 of~\cite[Table~1]{vane2020b}.
As a result, we prove the following theorem.

\begin{theorem}\label{thm:NonlinKGEqsSuclassOfCase2GroupClassification}
A complete list of $G^\sim_2$-inequivalent cases of Lie-symmetry extensions
of the kernel Lie invariance algebra~$\mathfrak g^\cap=\langle\p_t+\p_x\rangle$ in the class~$\mathcal K_2$
are exhausted by the following cases:
\begin{align*}
2.\ &
\mbox{General case } f=\hat f(x-t,u)\colon \ \mathfrak g_f=\langle\p_t+\p_x\rangle;\\[.5ex]
6.\ &
f={\rm e}^{-x+t}\hat f({\rm e}^{x-t}u)\colon \ \mathfrak g_f=\langle\p_t+u\p_u,\,\p_x-u\p_u\rangle;\\[.5ex]
7.\ &
f=|x-t|^{-q-2}\hat f(|x-t|^qu), \ q\ne0\colon \ \mathfrak g_f=\langle\p_t+\p_x,\,t\p_t+x\p_x-qu\p_u\rangle;\\[.5ex]
9{\rm a}.\ &
f=\hat f(u)\colon \ \mathfrak g_f=\langle\p_t,\,\p_x,\,t\p_t-x\p_x\rangle;\\[.5ex]
9{\rm b}.\ &
f=\hat f(u){\rm e}^{x-t}\colon \ \mathfrak g_f=\langle{\rm e}^t\p_t,\,{\rm e}^{-x}\p_x,\,\p_t+\p_x\rangle;\\[.5ex]
10{\rm a}.\ &
f=\hat f(u)(x-t)^{-2}\colon \ \mathfrak g_f=\langle\p_t+\p_x,\,t\p_t+x\p_x,\,t^2\p_t+x^2\p_x\rangle;\\[.5ex]
10{\rm b}.\ &
f=\hat f(u)\cos^{-2}(x-t)\colon \ \mathfrak g_f=\langle\p_t+\p_x,\,\cos2t\,\p_t-\cos2x\,\p_x,\,\sin2t\,\p_t-\sin2x\,\p_x\rangle;\\[.5ex]
10{\rm c}.\ &
f=\hat f(u)\cosh^{-2}(x-t)\colon \ \mathfrak g_f=\langle\p_t+\p_x,\,{\rm e}^{2t}\p_t-{\rm e}^{2x}\p_x,\,{\rm e}^{-2t}\p_t-{\rm e}^{-2x}\p_x\rangle;\\[.5ex]
10{\rm d}.\ &
f=\hat f(u)\sinh^{-2}(x-t)\colon \ \mathfrak g_f=\langle\p_t+\p_x,\,{\rm e}^{2t}\p_t+{\rm e}^{2x}\p_x,\,{\rm e}^{-2t}\p_t+{\rm e}^{-2x}\p_x\rangle;\\[.5ex]
12{\rm a}.\ &
f=|u|^p u, \ p\ne-1,0\colon \ \mathfrak g_f=\langle\p_t,\,\p_x,\,t\p_t-x\p_x,\,-pt\p_t+u\p_u\rangle;\\[.5ex]
12{\rm b}.\ &
f=|u|^p u{\rm e}^{x-t}, \ p\ne-1,0\colon \ \mathfrak g_f=\langle{\rm e}^t\p_t,\,{\rm e}^{-x}\p_x,\,\p_t+\p_x,\,p\p_t+u\p_u\rangle;\\[.5ex]
13.\ &
f={\rm e}^u\colon \ \mathfrak g_f=\langle\tau(t)\p_t+\xi(x)\p_x-(\tau_t(t)+\xi_x(x))\p_u\rangle.
\end{align*}
Here $\hat f$ is an arbitrary smooth function of its arguments
whose second derivative with respect to the argument involving~$u$ is nonzero,
$q$ and~$p$ are arbitrary constants that satisfy the conditions indicated in the corresponding cases.
In Case~\ref{EKGcase13}, the components~$\tau$ and~$\xi$
run through the sets of smooth functions of~$t$ or~$x$, respectively.
\end{theorem}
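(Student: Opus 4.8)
The plan is to realize the sought classification as a restriction of the already-known $G^\sim_{\mathcal W}$-classification of the superclass $\mathcal W$, rather than to redo a furcate splitting inside $\mathcal K_2$ directly. In the characteristic variables $\check t=x+t$, $\check x=x-t$, $\check u=u$ the subclass $\mathcal K_2$ is exactly the locus $\check g=1$ of $\mathcal W$. The decisive structural input, recorded in the discussion preceding the theorem, is that the restriction to $\mathcal K_2$ of the action groupoid of $G^\sim_{\mathcal W}$ coincides with the action groupoid of $G^\sim_2$ (Lemma~\ref{lem:NonlinKGEqsGsim2}); equivalently, two equations of $\mathcal K_2$ are $G^\sim_{\mathcal W}$-equivalent if and only if they are $G^\sim_2$-equivalent. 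Hence every $G^\sim_{\mathcal W}$-orbit that meets $\mathcal K_2$ does so in a single $G^\sim_2$-orbit, and the $G^\sim_2$-classification of $\mathcal K_2$ is in canonical bijection with the set of $G^\sim_{\mathcal W}$-orbits intersecting $\mathcal K_2$, each taken with a representative having $\check g=1$.

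I would then traverse the complete $G^\sim_{\mathcal W}$-classification of $\mathcal W$ listed in \cite[Theorem~8 and Table~1]{vane2020b}. Every entry whose canonical representative already has $\check g=1$ (the value $f=1$ in the notation of \cite{vane2020b}) lies in $\mathcal K_2$; rewriting such a representative together with its maximal Lie invariance algebra back in the variables $(t,x,u)$, via $t=\tfrac{1}{2}(\check t-\check x)$ and $x=\tfrac{1}{2}(\check t+\check x)$, yields the general Case~2 and Cases~9a--13 of the statement. The orbits missed by this selection are precisely those whose \cite{vane2020b}-representatives were written with $\check g\ne1$, the gauge $\check g=1$ having been imposed only where possible; these are Cases~1 and~2 of \cite[Table~1]{vane2020b}. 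Their orbits nonetheless meet $\mathcal K_2$, and by Theorem~\ref{thm:NonlinKGEqsGroupClassification} their $\check g=1$ members are represented by Cases~\ref{EKGcase6} and~\ref{EKGcase7}; I would append these two cases. As a cross-check, the Hasse diagram of Figure~\ref{fig:HasseDiagram} shows that at the coarser groupoid level the $\mathcal G^\sim_2$-inequivalent extensions are exactly Cases~\ref{EKGcase6}, \ref{EKGcase7}, \ref{EKGcase9}, \ref{EKGcase10}, \ref{EKGcase12}, \ref{EKGcase13}, so every case of the theorem must refine one of these six.

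Completeness and pairwise $G^\sim_2$-inequivalence then come for free: the listed normal forms are pairwise $G^\sim_{\mathcal W}$-inequivalent, since they arise from distinct rows of Table~1 (with Cases~\ref{EKGcase6} and~\ref{EKGcase7} coming from Cases~1 and~2), and on $\mathcal K_2$ the $G^\sim_{\mathcal W}$- and $G^\sim_2$-equivalences agree by the restriction property. The only residual verification is that each representative genuinely lies in $\mathcal K_2$ and that the stated vector fields are Lie symmetries, which is a routine substitution into the classifying equation~\eqref{eq:NonlinKGEqsClassifyingEq} using the explicit description of $G^\sim_2$ in Lemma~\ref{lem:NonlinKGEqsGsim2}.

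The main obstacle is the refinement that is invisible from the vantage point of the full class: single cases of Theorem~\ref{thm:NonlinKGEqsGroupClassification} split into several $G^\sim_2$-inequivalent cases here, namely Case~\ref{EKGcase9} into 9a/9b, Case~\ref{EKGcase12} into 12a/12b, and, most delicately, Case~\ref{EKGcase10} into the four cases 10a--10d. This splitting is precisely the effect of $G^\sim_2$ containing only the affine reparametrizations $\tilde t=c_1t+c_2$, $\tilde x=c_1x+c_3$ of Lemma~\ref{lem:NonlinKGEqsGsim2}, which are too rigid to straighten the four inequivalent realizations of $\mathfrak{sl}(2,\mathbb R)$ on the line (with singular factors $(x-t)^{-2}$, $\cos^{-2}(x-t)$, $\cosh^{-2}(x-t)$, $\sinh^{-2}(x-t)$) that the unrestricted group $G^\sim$ collapses into the single Case~\ref{EKGcase10}. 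The careful bookkeeping therefore lies in matching each $\mathfrak{sl}(2,\mathbb R)$-realization of \cite{vane2020b} to its $\check g=1$ normal form and confirming that no two of 10a--10d are related by an element of $G^\sim_2$.
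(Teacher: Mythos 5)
Your proposal follows essentially the same route as the paper: both derive the $G^\sim_2$-classification by restricting the known $G^\sim_{\mathcal W}$-classification of the superclass $\mathcal W$ from~\cite{vane2020b} to the locus $\check g=1$, using the key fact that the restriction of the action groupoid of~$G^\sim_{\mathcal W}$ to~$\mathcal K_2$ coincides with that of~$G^\sim_2$, and then supplementing the selected rows of \cite[Table~1]{vane2020b} with Cases~\ref{EKGcase6} and~\ref{EKGcase7} whose \cite{vane2020b}-representatives were gauged with $\check g\ne1$. Your account of why Cases~\ref{EKGcase9}, \ref{EKGcase10} and~\ref{EKGcase12} split into $G^\sim_2$-inequivalent subcases matches the paper's reasoning, so the proposal is correct.
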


We use the two-level numeration for the classification cases listed in Theorem~\ref{thm:NonlinKGEqsSuclassOfCase2GroupClassification}
for indicating the presence of additional equivalences between these cases.
Namely, numbers with the same Arabic numerals and different Roman letters
correspond to cases that are $G^\sim_2$-inequivalent
but $\mathcal G^\sim_2$-equivalent and hence $G^\sim$-equivalent
as Lie-symmetry extensions within the class~$\mathcal K$.
Related cases in Theorems~\ref{thm:NonlinKGEqsGroupClassification} and~\ref{thm:NonlinKGEqsSuclassOfCase2GroupClassification}
have numbers with the same Arabic numerals.

To find all additional equivalence transformations among $G^\sim_2$-inequivalent classification cases for~$\mathcal K_2$
and thus to relate the group classification of~$\mathcal K_2$ modulo the $G^\sim_2$-equivalence
to that modulo the $\mathcal G^\sim_2$-equivalence,
we need to classify admissible transformations within the class~$\mathcal K_2$ up to the $G^\sim_2$-equivalence.
The description of the equivalence groupoid~$\mathcal G^\sim_2$ of the class~$\mathcal K_2$
can be derived from~\cite[Theorem~9]{vane2020b}
analogously to the above derivation of the group classification of~$\mathcal K_2$ up to the $G^\sim_2$-equivalence.
See necessary notions in~\cite[Section~2]{vane2020b}.

\begin{theorem}\label{thm:NonlinKGEqsSuclassOfCase2AdmTrans}
A generating (up to the $G^\sim_2$-equivalence) set of admissible transformations for the class~$\mathcal K_2$,
which is minimal and self-consistent with respect to the $G^\sim_2$-equivalence,
is the union of the following families of admissible transformations $(f,\Phi,\tilde f)$:
\begin{align*}
{\rm T1 }.\ & f= \hat f(u),\ \ \tilde f=-f,\quad \Phi\colon\ \ \tilde t=-t,\ \ \tilde x=x,\ \ \tilde u=u;\\[.5ex]
{\rm T2 }.\ & f= \hat f(u),\ \ \tilde f=f,\quad \Phi\colon\ \ \tilde t=t{\rm e}^\gamma,\ \ \tilde x=x{\rm e}^{-\gamma},\ \ \tilde u=u,\ \ \gamma\in\mathbb R_{\ne0};\\[.5ex]
{\rm T3 }.\ & f=\hat f(u){\rm e}^{x-t},\ \ \tilde f=\hat f(\tilde u),\quad \Phi\colon\ \ \tilde t=-{\rm e}^{-t},\ \ \tilde x={\rm e}^x,\ \ \tilde u=u;\\[.5ex]
{\rm T4a}.\ & f= \hat f(u)(x-t)^{-2},\ \ \tilde f=\hat f(\tilde u)(\tilde x-\tilde t)^{-2},\quad \Phi\colon\ \ \tilde t=t^{-1},\ \ \tilde x=x^{-1},\ \ \tilde u=u;\\[.5ex]
{\rm T4b}.\ & f=-\hat f(u)\cos^{-2}(x-t),\ \ \tilde f=\hat f(\tilde u)(\tilde x-\tilde t)^{-2},\quad \Phi\colon\ \ \tilde t=\tan t,\ \ \tilde x=\cot x,\ \ \tilde u=u;\\[.5ex]
{\rm T4c}.\ & f=-\hat f(u)\cosh^{-2}(x-t),\ \ \tilde f=\hat f(\tilde u)(\tilde x-\tilde t)^{-2},\quad \Phi\colon\ \ \tilde t=-\tfrac12{\rm e}^{2t},\ \ \tilde x=\tfrac12{\rm e}^{2x},\ \ \tilde u=u;\\[.5ex]
{\rm T4d}.\ & f= \hat f(u)\sinh^{-2}(x-t),\ \ \tilde f=\hat f(\tilde u)(\tilde x-\tilde t)^{-2},\quad \Phi\colon\ \ \tilde t=\tfrac12{\rm e}^{2t},\ \ \tilde x=\tfrac12{\rm e}^{2x},\ \ \tilde u=u;\\[.5ex]
{\rm T5 }.\ & f={\rm e}^u,\ \ \tilde f={\rm e}^{\tilde u},\quad \Phi\colon\ \ \tilde t=T(t),\ \ \tilde x=X(x),\ \ \tilde u=u-\ln(T_tX_x),
\end{align*}
where
$(T,X)$ runs through a complete set of representatives of cosets of $(T,X)$ with $T_tX_x>0$ and $(T_{tt},X_{xx})\ne(0,0)$
with respect to the action of the group constituted by the transformations of the form
$\hat t=c_1t+c_2$, $\hat x=c_1x+c_3$, $\hat T=\tilde c_1T+\tilde c_2$, $\hat X=\tilde c_1X+\tilde c_3$,
where $c_1$, $c_2$, $c_3$, $\tilde c_1$, $\tilde c_2$ and~$\tilde c_3$ are arbitrary constants with $c_1\tilde c_1\ne0$.
\end{theorem}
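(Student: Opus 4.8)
The plan is to determine the full equivalence groupoid $\mathcal G^\sim_2$ of the class~$\mathcal K_2$ and then to factor it through the action groupoid of~$G^\sim_2$, retaining one representative per double coset. There are two routes to $\mathcal G^\sim_2$, and I would use them together. The first, which is the one indicated before the statement, is to restrict the description of the equivalence groupoid $\mathcal G^\sim_{\mathcal W}$ from~\cite[Theorem~9]{vane2020b} to those admissible transformations whose source and target both satisfy $\check g=1$, since $\mathcal K_2$ is exactly the subclass of~$\mathcal W$ cut out by this constraint, and then to rewrite the result in the light-cone coordinates $(t,x,u)$. The second, self-contained route uses that the ambient class~$\mathcal K$ is normalized (Lemma~\ref{lem:NonlinKGEqsGsim}) and $\mathcal K_2\subset\mathcal K$: every admissible transformation within~$\mathcal K_2$ is therefore the restriction of some $\mathscr T\in G^\sim$ of the form~\eqref{eq:NonlinKGEqsGsim0}, possibly composed with~$\mathscr I^0$; since $\mathscr I^0\in G^\sim_2$, one may work modulo~$G^\sim_2$ with the form~\eqref{eq:NonlinKGEqsGsim0} alone. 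This second route guarantees completeness of the first.

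Writing $\mathscr T$ as in~\eqref{eq:NonlinKGEqsGsim0}, the membership of both~$f$ and $\tilde f=(Cf+U^0_{tx})/(T_tX_x)$ in~$\mathcal K_2$ means that $f$ depends on the spacetime variables only through $\omega=x-t$ and $\tilde f$ only through $\tilde\omega=X(x)-T(t)$. I would first treat the generic value of~$f$, i.e.\ a value with no Lie-symmetry extension (Case~\ref{EKGcase2}). There the constraint forces $\tilde\omega$ to be a function of~$\omega$; differentiating $X(x)-T(t)=\psi(x-t)$ in~$t$ and in~$x$ gives $T_t=X_x=\const$, and the surviving conditions reduce to $U^0_t+U^0_x=0$. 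This reproduces precisely the group~$G^\sim_2$ of Lemma~\ref{lem:NonlinKGEqsGsim2}, so no admissible transformation outside the action groupoid of~$G^\sim_2$ can come from a generic~$f$. Consequently every nontrivial coset is supported on one of the distinguished values of~$f$ appearing in Theorem~\ref{thm:NonlinKGEqsSuclassOfCase2GroupClassification}.

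I would then solve the constraint for $(T,X,C,U^0)$ on each distinguished value in turn. The contributing families are exactly those attached to the classification cases carrying letter indices, together with the Liouville case. For $f=\hat f(u)$ (Case~9a) the extra self-transformations are the boost $\tilde t=t{\rm e}^\gamma$, $\tilde x=x{\rm e}^{-\gamma}$ (family~T2) and the one-sided sign change $\tilde t=-t$ (family~T1), both of which lie outside~$G^\sim_2$ because $G^\sim_2$ forces a common scaling factor on~$t$ and~$x$; and the value $f=\hat f(u){\rm e}^{x-t}$ (Case~9b) is carried to Case~9a by $\tilde t=-{\rm e}^{-t}$, $\tilde x={\rm e}^x$ (family~T3). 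For $f=\hat f(u)(x-t)^{-2}$ (Case~10a), whose invariance algebra is $\mathfrak{sl}(2,\mathbb R)$, the inversion $\tilde t=t^{-1}$, $\tilde x=x^{-1}$ (family~T4a) is the only extra self-map, while the three further $\mathfrak{sl}(2,\mathbb R)$ realizations (Cases~10b,~10c,~10d) are normalized to Case~10a by the trigonometric and hyperbolic changes of variable (families~T4b,~T4c,~T4d) supplied by the Lie theorem; these are nonaffine, hence genuinely outside~$G^\sim_2$. Finally, the Liouville value $f={\rm e}^u$ yields the two-parameter family~T5, the nonaffine freedom in $(T,X)$ persisting precisely because $\mathfrak g_f$ is infinite-dimensional. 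Cases~2,~6 and~7 contribute nothing new, their full point-symmetry groups being contained in~$\varpi_*G^\sim_2$; and Case~12 is already covered by specialization, since T3 with $\hat f(u)=|u|^pu$ links Case~12b to Case~12a and the extra symmetry of Case~12a is the boost~T2 composed with a $G^\sim_2$-scaling.

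The main obstacle will be the last step: proving that the collection T1--T5 is simultaneously \emph{minimal} and \emph{self-consistent} with respect to the $G^\sim_2$-equivalence. Minimality amounts to checking that no family is a composition of the others with elements of the action groupoid of~$G^\sim_2$; the cleanest arguments are small incompatibility computations, for instance that composing the boost~T2 with any $G^\sim_2$-scaling can never produce the one-sided reflection~T1 because this would require ${\rm e}^{2\gamma}=-1$. Self-consistency requires closure under groupoid inversion and composition modulo~$G^\sim_2$, together with invariance of the set of cosets under the $G^\sim_2$-action; this is where the explicit normalization of the pair $(T,X)$ in~T5 under the affine substitutions $\hat t=c_1t+c_2$, $\hat T=\tilde c_1T+\tilde c_2$ (and the analogous ones in~$x$) is essential, as it removes exactly the $G^\sim_2$-redundancy and makes the Liouville family a well-defined, irredundant set of representatives. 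The careful accounting of which discrete sign and swap transformations are absorbed into~$G^\sim_2$ and which must be retained, as with~T1 versus~$\mathscr I^0$, is the principal bookkeeping difficulty; once this is settled, the remaining work is the routine integration of the first-order constraints on~$T$, $X$ and~$U^0$.
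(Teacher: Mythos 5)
Your proposal is correct, and it actually contains more than the paper does: the paper's entire ``proof'' of this theorem is the single remark that $\mathcal G^\sim_2$ ``can be derived from \cite[Theorem~9]{vane2020b} analogously to the above derivation of the group classification of~$\mathcal K_2$'', i.e.\ by restricting the known equivalence groupoid of the superclass~$\mathcal W$ to the stratum $\check g=1$ and rewriting it in light-cone variables. Your first route is exactly that. Your second, self-contained route --- using the normalization of~$\mathcal K$ (Lemma~\ref{lem:NonlinKGEqsGsim}) to write every admissible transformation within~$\mathcal K_2$ as $\varpi_*\mathscr T$ with $\mathscr T$ of the form~\eqref{eq:NonlinKGEqsGsim0} (up to~$\mathscr I^0$), disposing of generic~$f$ by showing the constraints collapse to~$G^\sim_2$, and then solving the compatibility conditions on each canonical form of Theorem~\ref{thm:NonlinKGEqsSuclassOfCase2GroupClassification} --- is a genuinely different and more informative argument: it buys independence from the external reference at the cost of a case-by-case computation, and it makes the origin of each family T1--T5 transparent (vertex-group generators for Cases~9a, 10a, 13 versus connecting transformations for the lettered subcases). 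One small refinement: your assertion that for generic~$f$ the constraint ``forces $\tilde\omega$ to be a function of~$\omega$'' is itself the thing to be proved; the cleanest justification is that a point transformation between two equations pushes $\mathfrak g_f$ forward onto $\mathfrak g_{\tilde f}$, so for strict Case~\ref{EKGcase2} one gets $T_t\p_{\tilde t}+X_x\p_{\tilde x}+\cdots\in\langle\p_{\tilde t}+\p_{\tilde x}\rangle$ and hence $T_t=X_x=\const$ directly; the same pushforward argument is also the quickest way to dispose of Cases~6 and~7 and to see why only the cases with letter indices and the Liouville equation contribute. With that substitution your outline, including the minimality and self-consistency bookkeeping for T1--T5, goes through.
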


Having the classification of admissible transformations within the class~$\mathcal K_2$ up to the $G^\sim_2$-equivalence,
we can directly find all independent additional equivalence transformations among classification cases
listed in Theorem~\ref{thm:NonlinKGEqsSuclassOfCase2GroupClassification}.
These transformations are
\begin{gather*}
{\rm T1 }\colon\ \mbox{Case~9a}_{\hat f}\to\mbox{Case~9a}_{-\hat f},\ \
{\rm T3 }\colon\ \mbox{Case~9b} \to\mbox{Case~9a},\ \mbox{Case~12b} \to\mbox{Case~12a},\\ 
{\rm T4b}\colon\ \mbox{Case~10b}\to\mbox{Case~10a},\ \
{\rm T4c}\colon\ \mbox{Case~10c}\to\mbox{Case~10a},\ \
{\rm T4d}\colon\ \mbox{Case~10d}\to\mbox{Case~10a}.
\end{gather*}

\section{Conclusion}\label{sec:Conclusion}

In the present paper, 
we have carried out the complete (contact) group classification
of the class~\eqref{eq:NonlinKGEqs} of (1+1)-dimensional generalized nonlinear Klein--Gordon equations up to the $G^\sim$-equivalence.
This has substantially enhanced the results on Lie symmetries of such equations
that were obtained in the seminal papers~\cite{lahn2005a,lahn2006a}.
At first, extending results of Lie's paper~\cite{lie1881b}, we have shown in Lemma~\ref{lem:GenNonlinKGEqsContactAdmTrans} 
that any contact admissible transformation within the class~\eqref{eq:NonlinKGEqs} 
is the first-order prolongation of a point admissible transformation within this class.
In other words, the study of contract-transformation structures related to equations from the class~\eqref{eq:NonlinKGEqs} 
reduces to the study of their point-transformation counterparts. 
We have proved in Lemma~\ref{lem:NonlinKGEqsGsim} that the class~\eqref{eq:NonlinKGEqs} is normalized.
Therefore, applying the algebraic method, we have reduced the group classification of~\eqref{eq:NonlinKGEqs}
to classifying the appropriate subalgebras of the projection
$\varpi_*\mathfrak g^\sim=\mathfrak g_\spanindex$
of the equivalence algebra~$\mathfrak g^\sim$.
In addition to this, we have employed the specific structure of~$\mathfrak g^\sim$ for twofold involving
the classical Lie theorem on realizations of Lie algebras by vector fields on the line~\cite{lie1880a}
into the classification procedure.
Moreover, the normalization of the class~\eqref{eq:NonlinKGEqs} means
that the action groupoid~\cite{vane2020b} of the equivalence group~$G^\sim$ coincides
with the entire equivalence groupoid~$\mathcal G^\sim$
of the class~\eqref{eq:NonlinKGEqs}.
Hence the complete group classification of the class~\eqref{eq:NonlinKGEqs} up to the $G^\sim$-equivalence
coincides with its complete group classification up to the $\mathcal G^\sim$-equivalence,
which is just the general point equivalence within this class.
In other words, we have no additional point equivalences
between $G^\sim$-inequivalent classification cases.

Lie symmetries of equations from the class~\eqref{eq:NonlinKGEqs} were considered in Section~6 of~\cite{lahn2006a},
and cases with two-, three- and four-dimensional Lie invariance algebras
were listed in Table~1 therein, see also Section~V and Table~I in~\cite{lahn2005a}.
Cases~1--6, 8 and~9 of Table~1 and the equation~(5.4) in~\cite{lahn2006a} correspond to
Cases~\ref{EKGcase5}, \ref{EKGcase6}, \ref{EKGcase7}$_{q=1}$, \ref{EKGcase8}$_{q=-1}$,
\ref{EKGcase10}, \ref{EKGcase11}, \ref{EKGcase9}, \ref{EKGcase12} and~\ref{EKGcase13}
of Theorem~\ref{thm:NonlinKGEqsGroupClassification} in the present paper.
Case~7 of Table~1 in~\cite{lahn2006a} should be excluded from the classification since it is equivalent to Case~9 therein,
see the discussion of the case $(m,n,k)=(2,3,1)$ in Section~\ref{sec:NonlinKGEqsProof}.
Cases~\ref{EKGcase7}$_{q\ne1}$ and \ref{EKGcase8}$_{q\ne-1}$ were missed in~\cite{lahn2006a}
owing to superfluously constraining the parameter~$q$, see Remark~\ref{rem:NonlinKGEqsGaugingPInCases7And8}.

We have additionally enhanced the results of~\cite{lahn2005a,lahn2006a}
by explicitly singling out the equations from the class~\eqref{eq:NonlinKGEqs}
with infinite-dimensional maximal Lie invariance algebras in Lemma~\ref{lem:NonlinKGEqsConditionsForAppropriateSubalgebras}.
It turns out that any such equation is $G^\sim$-equivalent to the Liouville equation.
For the other equations from the class~\eqref{eq:NonlinKGEqs},
whose maximal Lie invariance algebras are finite-dimensional,
we have found the least upper bound of dimensions of these algebras,
which is equal to four.
One more tool for arranging the classification is to assign a value of the triple $(m,n,k)$
of~$G^\sim$-invariant integers to each case of Lie-symmetry extension in the class~\eqref{eq:NonlinKGEqs}.
We have strongly restricted the set of candidates for appropriate values of the triple
at the stage of preliminary analysis
using Lemma~\ref{lem:NonlinKGEqsConditionsForAppropriateSubalgebras} and the Lie theorem.
The final selection of the appropriate values has been done in the course of the group classification.
It was important for simplifying the computations on all classification stages
that for equations from the class~\eqref{eq:NonlinKGEqs}, in contrast to evolution equations, 
the Lie theorem can be applied to both the $t$- and the $x$-projections of Lie-symmetry vector fields.

Although the characteristic triple $(m,n,k)$ has a simple interpretation
and is principal for the proof of Theorem~\ref{thm:NonlinKGEqsGroupClassification},
it does not suffice for completely distinguishing $G^\sim$-inequivalent classification cases.
This is why we attempted to find as many~$G^\sim$-invariant integer characteristics of the classification cases as possible,
and have found even twelve of them in total, $m$, $n$, $k$, $l$, $j_1$, $j_2$, $j_{12}$, $j_{13}$, $j_{23}$, $r_1$, $r_2$ and~$r_3$.
The comprehensive analysis has shown that the complete tuple of these twelve characteristics is redundant.
As found out in Remark~\ref{rem:IdentifyingInvValues},
these characteristics can constitute no pairs and exactly eight triples
that suffice for distinguishing $G^\sim$-inequivalent classification cases,
and the most remarkable triple is $(r_3,j_1,r_2)$.
The same triple is, simultaneously with~$n$, of primary importance for identifying the pairs of $G^\sim$-inequivalent weak classification cases
that do not represent successive Lie-symmetry extensions,
i.e., the pairs $(\mbox{Case\ }N,\mbox{Case\ }\bar N)$ with $\mbox{Case\ }N\nprec\mbox{Case\ }\bar N$,
see Section~\ref{sec:SuccessiveLie-SymExtensions}.
To be sufficient for this task, the tuple $(n,r_3,r_2,j_1)$
should be extended with one of the characteristics~$k$, $m$ or~$r_1$,
and the thus obtained three tuples exhaust the set of such sufficient tuples of minimum size, which is equal to five.
In the same section, we have directly checked
that all the other pairs of $G^\sim$-inequivalent classification cases
are indeed associated with successive Lie-symmetry extensions.
The consideration has been summarized in the Hasse diagram in Figure~\ref{fig:HasseDiagram},
which represents the structure of the partially ordered set of $G^\sim$-inequivalent Lie-symmetry extensions
within the class~\eqref{eq:NonlinKGEqs}.
Analyzing the Hasse diagram allows one to easily solve the group classification problems
up to the general point equivalence
for the subclasses~$\mathcal K_N$, $N\in\Gamma$, of~$\mathcal K$,
which correspond, under the interpretation in the weak sense, to the classification cases
that have been listed in Theorem~\ref{thm:NonlinKGEqsGroupClassification}; see Remark~\ref{rem:NonlinKGEqsOnMaxLieSymExtensions}.
Since the subclasses~$\mathcal K_1$, \dots, $\mathcal K_6$, $\mathcal K_{7,q}$, $\mathcal K_{8,q}$ and $\mathcal K_9$ are not normalized,
the group classification of any such~$\mathcal K_N$ up to the $G^\sim_N$-equivalence is not so easy.
The last claim has been illustrated in Section~\ref{sec:OnGroupClassificationOfSubclasses}
by carrying out the group classification of the subclass~$\mathcal K_2$ up to the $G^\sim_2$-equivalence.
Therein, we have also discussed a procedure of using Theorem~\ref{thm:NonlinKGEqsGroupClassification}
for the group classification of any subclass of the class~$\mathcal K$ with respect to the equivalence group of this subclass.

The classification of Lie symmetries is the first necessary step
for extended symmetry analysis of equations from the class~\eqref{eq:NonlinKGEqs}.
It can be used for the classification of Lie reductions and further finding exact invariant solutions of these equations.
Since the general solution of the Liouville equation is well known,
Lie reductions should be carried out only for equations from the class~\eqref{eq:NonlinKGEqs}
with finite-dimensional maximal Lie invariance algebras.
In view of Lemma~\ref{lem:NonlinKGEqsConditionsForAppropriateSubalgebras}(iii),
which states upper bound four for the dimensions of such algebras,
the classification of subalgebras of three- and four-dimensional Lie algebras in~\cite{pate1977a}
is extremely relevant here.
As an example, consider Case~\ref{EKGcase10}.
It is the only case among those with finite-dimensional maximal Lie invariance algebras,
where the algebra~$\mathfrak g_f$ is not solvable.
More precisely, it is isomorphic to the algebra~$\mathfrak{sl}(2,\mathbb R)$,
and its inequivalent one-dimensional subalgebras and the associated Lie reductions to ordinary differential equations
are the following:
\begin{gather*}
1.\ \langle\p_t+\p_x\rangle\colon\ \ u=\varphi(\omega),\ \ \omega=x-t,\ \ \varphi_{\omega\omega}=-\hat f(\varphi)\omega^{-2};\\
2.\ \langle t\p_t+x\p_x\rangle\colon\ \ u=\varphi(\omega),\ \ \omega=\tfrac12\ln|x|-\tfrac12\ln|t|,\ \ \varphi_{\omega\omega}=-\hat f(\varphi)\sinh^{-2}\omega;\\
3.\ \langle(1+t^2)\p_t+(1+x^2)\p_x\rangle\colon\ \ u=\varphi(\omega),\ \ \omega=\arctan x-\arctan t,\ \ \varphi_{\omega\omega}=-\hat f(\varphi)\sin^{-2}\omega.
\end{gather*}

The knowledge of Lie symmetries of equations from the class~\eqref{eq:NonlinKGEqs} is also needed
for classification of reduction operators of these equations.
In the course of classifying reduction operators, it is natural to exclude those of them
that are induced by Lie symmetries, i.e., to look only for non-Lie reduction operators.
Unfortunately, the general description of regular reduction operators for equations from the class~$\mathcal K$ as a whole
from~\cite{yeho2010a} cannot be used in a reasonable way 
for describing reduction operators of particular equations from this class 
or for classifying reduction operators of equations constituting its proper subclasses.
Recall that regarding singular reduction operators of equations from the class~\eqref{eq:NonlinKGEqs},
a systematic study has been done in the literature
only for equations with $f=\hat f(u)$,
which constitute the subclass~$\mathcal K_9$ associated with 
Case~\ref{EKGcase9} of Theorem~\ref{thm:NonlinKGEqsGroupClassification},
see~\cite[Section~6]{kunz2008a}.

The consideration of the presented paper
can be extended to the much wider superclass of generalized nonlinear Klein--Gordon equations of the form
\begin{gather}\label{eq:NonlinKGEqsSuperclass}
u_{tx}=f(t,x,u,u_t,u_x).
\end{gather}
A preliminary study has shown
that Lemma~\ref{lem:GenNonlinKGEqsContactAdmTrans} may be generalized to this superclass, 
and the Lie theorem should be relevant for its group classification 
in the same way as for that of the class~\eqref{eq:NonlinKGEqs}.
The principal precondition for applying the Lie theorem 
to the group classification of the superclass~\eqref{eq:NonlinKGEqsSuperclass}
is to prove an analogue of Lemma~\ref{lem:NonlinKGEqsConditionsForAppropriateSubalgebras}
for this superclass, which seems a much more difficult problem
than proving Lemma~\ref{lem:NonlinKGEqsConditionsForAppropriateSubalgebras}.
In particular, one needs to single out, within the superclass~\eqref{eq:NonlinKGEqsSuperclass},
the equations with infinite-dimensional maximal Lie invariance algebras.

\section*{Acknowledgments}
The authors are sincerely grateful to the anonymous referees for a number of valuable remarks and suggestions. 
The authors thank Michael Kunzinger, Dmytro Popovych, Galyna Popovych and Olena Vaneeva for helpful discussions 
and acknowledge the partial financial support provided by the NAS of Ukraine under the project 0116U003059. 
The research of ROP was supported by the Austrian Science Fund (FWF), projects P25064 and P28770.


\end{document}